\newtheorem{theorem}{Theorem}
\newtheorem{lemma}{Lemma}
\newtheorem{definition}{Definition}
\newenvironment{proof}[1][]
  {\noindent{\bf Proof#1:}}{\hfill $\Box$\smallskip}
\def\imparray{\stepcounter{equation}\let\@currentlabel=\theequation
\global\@eqnswtrue
\global\@eqcnt\z@\tabskip\@centering\let\\=\@eqncr
$$\halign to \displaywidth\bgroup\llap{${##}$\hskip 4\arraycolsep}\tabskip\z@&
  \@eqnsel\hskip\@centering
  $\displaystyle\tabskip\z@{##}$&\global\@eqcnt\@ne 
  \hskip 2\arraycolsep \hfil${##}$\hfil
  &\global\@eqcnt\tw@ \hskip 2\arraycolsep $\displaystyle\tabskip\z@{##}$\hfil 
   \tabskip\@centering&\llap{##}\tabskip\z@\cr}
\def\endimparray{\@@eqncr\egroup
      \global\advance\c@equation\m@ne$$\global\@ignoretrue}
\def\ps@headerfooter{\let\@mkboth\@gobbletwo
     \def\@oddhead{\reset@font\@myhead}
     \def\@oddfoot{\reset@font\@myfoot}
     \let\@evenhead\@oddhead
     \let\@evenfoot\@oddfoot}
\def\header#1{\def\@myhead{#1}}
\def\footer#1{\def\@myfoot{#1}}
\newcommand{\bcm}{\begin{center}\begin{math}}
\newcommand{\ecm}{\end{math}\end{center}}
\newcommand{\bi}{\begin{itemize}}
\newcommand{\ei}{\end{itemize}}
\newcommand{\be}{\begin{enumerate}}
\newcommand{\ee}{\end{enumerate}}
\newcommand{\tail}[2]{\mathchoice
{\left(\!\!\left({{#1}\atop{#2}}\right)\!\!\right)}
{\mathopen {\hbox{$\delimitershortfall=0pt
\left(\vcenter to \the\fontdimen21\textfont2{}\right.
\nulldelimiterspace=0pt \mathsurround=0pt$}}\mskip-4mu
\mathopen  {\hbox{$\delimitershortfall=0pt
\left(\vcenter to \the\fontdimen21\textfont2{}\right.
\nulldelimiterspace=0pt \mathsurround=0pt$}}
\mskip-1.9mu {{#1}\atop{#2}} \mskip-1.9mu
\mathclose {\hbox{$\delimitershortfall=0pt
\left)\vcenter to \the\fontdimen21\textfont2{}\right.
\nulldelimiterspace=0pt \mathsurround=0pt$}} \mskip-4mu
\mathclose {\hbox{$\delimitershortfall=0pt
\left)\vcenter to \the\fontdimen21\textfont2{}\right.
\nulldelimiterspace=0pt \mathsurround=0pt$}}}
{\mathopen {\hbox{$\delimitershortfall=0pt
\left(\vcenter to \the\fontdimen21\scriptfont2{}\right.
\nulldelimiterspace=0pt \mathsurround=0pt$}}\mskip-4mu
\mathopen  {\hbox{$\delimitershortfall=0pt
\left(\vcenter to \the\fontdimen21\scriptfont2{}\right.
\nulldelimiterspace=0pt \mathsurround=0pt$}}
\mskip-2.75mu {{#1}\atop{#2}} \mskip-2.75mu
\mathclose {\hbox{$\delimitershortfall=0pt
\left)\vcenter to \the\fontdimen21\scriptfont2{}\right.
\nulldelimiterspace=0pt \mathsurround=0pt$}} \mskip-4mu
\mathclose {\hbox{$\delimitershortfall=0pt
\left)\vcenter to \the\fontdimen21\scriptfont2{}\right.
\nulldelimiterspace=0pt \mathsurround=0pt$}}}
{\mathopen {\hbox{$\delimitershortfall=0pt
\left(\vcenter to \the\fontdimen21\scriptscriptfont2{}\right.
\nulldelimiterspace=0pt \mathsurround=0pt$}}\mskip-4mu
\mathopen  {\hbox{$\delimitershortfall=0pt
\left(\vcenter to \the\fontdimen21\scriptscriptfont2{}\right.
\nulldelimiterspace=0pt \mathsurround=0pt$}}
\mskip-2.75mu {{#1}\atop{#2}} \mskip-2.75mu
\mathclose {\hbox{$\delimitershortfall=0pt
\left)\vcenter to \the\fontdimen21\scriptscriptfont2{}\right.
\nulldelimiterspace=0pt \mathsurround=0pt$}} \mskip-4mu
\mathclose {\hbox{$\delimitershortfall=0pt
\left)\vcenter to \the\fontdimen21\scriptscriptfont2{}\right.
\nulldelimiterspace=0pt \mathsurround=0pt$}}}
}
\newcommand{\comment}[1]{}
\def\sfrac#1#2{\leavevmode\kern.1em
\raise.5ex\hbox{\the\scriptfont0 #1}\kern-.1em
/\kern-.15em\lower.25ex\hbox{\the\scriptfont0 #2}}
\newtheorem{example}{Example}
\newcommand{\etal}{{et al.}} 
\newcommand{\chain}{{\sc Chain}} 
 \gdef\xxxmark{%
   \expandafter\ifx\csname @mpargs\endcsname\relax 
     \expandafter\ifx\csname @captype\endcsname\relax 
       \marginpar{xxx}
     \else
       xxx 
     \fi
   \else
     xxx 
   \fi}
 \gdef\xxx{\@ifnextchar[\xxx@lab\xxx@nolab}
 \long\gdef\xxx@lab[#1]#2{{\bf [\xxxmark #2 ---{\sc #1}]}}
 \long\gdef\xxx@nolab#1{{\bf [\xxxmark #1]}}
\long\def\symbolfootnote[#1]#2{\begingroup%
\def\thefootnote{\fnsymbol{footnote}}\footnote[#1]{#2}\endgroup}
\begin{document}
\title{Chain: A Dynamic Double Auction Framework for Matching Patient Agents}
\author{\name Jonathan Bredin
\email bredin@acm.org\\
\addr Dept. of Mathematics and Computer Science, Colorado College\\
Colorado Springs, CO 80903, USA\\
\name David C. Parkes 
\email parkes@eecs.harvard.edu\\
\name Quang Duong
\email qduong@fas.harvard.edu\\
\addr School of Engineering and Applied Sciences, Harvard University\\
Cambridge, MA 02138, USA}
\maketitle

\begin{abstract}
In this paper we present and evaluate a general framework for the design of
truthful auctions for matching agents in a dynamic, two-sided market.
A single commodity, such as a resource or a task, is bought and sold
by multiple buyers and sellers that arrive and depart over time. 
Our algorithm, \chain, provides the first framework that allows
a truthful dynamic double auction (DA) to be constructed from a 
truthful, single-period 
(i.e. static) double-auction rule. The pricing and matching
method of the \chain\ construction is unique amongst dynamic-auction 
rules that adopt the same building block.
We examine experimentally the allocative
efficiency of \chain\ when instantiated on various single-period rules,
including the canonical McAfee double-auction rule.
For a baseline we also consider non-truthful double auctions populated
with ``zero-intelligence plus"-style learning
agents. \chain-based auctions perform well
in comparison with other schemes, especially  
as arrival intensity falls and agent valuations become more volatile. 
\end{abstract}



\section{Introduction}

Electronic markets are increasingly popular as a method to facilitate
increased efficiency in the supply chain, with firms using markets to
procure goods and services. Two-sided markets facilitate
trade between many buyers and many sellers and find application
to trading diverse resources, including bandwidth, securities and pollution rights. 
Recent years have also brought
increased attention to resource allocation in the context of on-demand
computing  and 
grid computing.
Even within settings of
cooperative coordination, such as those of multiple robots,
researchers have turned to auctions as methods for task allocation and
joint exploration~\shortcite{gerkey:sold,lagoudakis:robotroute,lin:combtask}. 

In this paper we consider a dynamic two-sided market for 
a single commodity, for instance a unit of a resource
(e.g. time on a computer, some quantity of memory chips) or a task to
perform (e.g. a standard database query to execute, a location
to visit). Each agent, whether buyer or seller, arrives 
dynamically and needs to be 
matched within a time interval. Cast as a task-allocation problem,  
a seller can perform the task when allocated within some time 
interval and incurs a cost when assigned. A buyer has positive
value for the task being assigned (to any seller) within some time
interval. The arrival time, acceptable time interval, and value
(negative for a seller) for a trade are all private information to
an agent. Agents are self-interested and can choose to misrepresent
all and any of this information to the market in order to obtain
a more desirable price.

The matching problem combines elements of online algorithms and sequential
decision making with considerations from mechanism
design. Unlike traditional sequential decision making, a protocol for this
problem must provide incentives for agents to report truthful
information to a match-maker. Unlike traditional mechanism design,
this is a dynamic problem with agents that arrive and leave over time.
We model this problem as a dynamic double auction (DA) for identical items.
The match-maker becomes the auctioneer. Each seller brings a task
to be performed during a time window and each buyer brings the
capability to perform a single task. The double-auction setting also is
of interest in its own right as a protocol for matching in a
dynamic business-to-business exchange. 

Uncertainty about the future coupled with the two-sided nature of the market
leads to an interesting  mechanism design problem.
For example, consider the scenario where the auctioneer must decide how
(and whether) to match a seller with reported cost of \$6
at the end of its time interval
with a present and unmatched
buyer, one of which has a reported value of \$8 and one a 
reported value of \$9.
Should the auctioneer pair the higher bidder with the seller?
What happens if a seller, willing to sell for \$4, arrives after the 
auctioneer acts upon the matching decision?
How should the matching algorithm be designed so that no agent can benefit
from misstating its earliest arrival, latest departure, or value for a trade?

\chain\ provides a general framework that allows a truthful
dynamic double auction to be constructed from a truthful,
single-period (i.e. static) double-auction rule. 
The auctions constructed by \chain\ are truthful, in the sense that 
the dominant strategy for an agent, whatever the future
auction dynamics and bids from other agents, is to report
its true value for a trade (negative if selling) 
and true patience (maximal tolerance for trade delay)
immediately upon arrival into the market.
We also allow for
randomized mechanisms and, in this case, require {\em strong}
truthfulness: the DA should be truthful for all possible random coin
flips of the mechanism.
One of the DAs in the class of auctions implied by \chain\ is a dynamic
generalization of McAfee's~\citeyear{mcafee:double92} canonical
truthful, no-deficit auction for a single period. Thus, we provide the
first examples of truthful, dynamic DAs that allow for dynamic price competition between buyers and
sellers.\footnote{The closest work in the literature is due to 
Blum \etal~\citeyear{blum06}, who present a
truthful, dynamic DA for our model that matches bids and asks
based on a price sampled from some bid-independent distribution.
We compare the performance of our schemes with this scheme in
Section~\ref{sec:empirical}.}

The main technical challenge presented by dynamic DAs
is to provide truthfulness without incurring a budget deficit, 
while handling uncertainty about future trade opportunities. 
Of particular concern is to ensure that
an agent does not indirectly affect its price through
the effect of its bid on the prices faced by other agents
and thus other supply and demand in the market. We need to preclude this
 because the {\em availability} of
trades depends on the price faced by other agents. For example, a
buyer that is required to pay \$4 in the DA to trade might like to
decrease the price that a potentially matching seller will receive
from \$6 to \$3 to allow for trade.

\chain\ is a modular approach to auction design, 
which takes as a building block a single-period
matching rule and provides a method to invoke the rule in each
of multiple periods while also providing for truthfulness. 
We characterize properties that a {\em well-defined} single-period matching rule must
satisfy in order for \chain\ to be truthful. 
We further identify the technical
property of {\em strong no-trade}, with which we can
isolate agents that fail to 
trade in the current period but can nevertheless survive and be 
eligible to trade in a future period. An auction designer defines
the strong no-trade predicate, in addition to providing a well-defined
single-period matching rule.
Instances within this class include those constructed in terms of
both ``price-based" matching rules and ``competition-based" matching rules.
Both can depend on history and be adaptive, but only
the competition-based rules use the active bids and asks to 
determine the prices in the current
period, facilitating a more direct competitive processes. 

In proving that \chain, when combined with a well-defined
matching rule and a valid strong no-trade predicate, is truthful
we leverage a recent price-based characterization for truthful
online mechanisms~\shortcite{hajiaghayi05}. We also show that
the pricing and matching rules defined by 
\chain\ are unique amongst the family of mechanisms 
that are constructed with a single-period matching rule
as a building block. 
Throughout our work we assume that a constant limits every
buyer and seller's patience. To motivate
this assumption we provide a simple
environment in which {\em no} truthful, no-deficit DA can implement
some constant fraction of the number of the efficient trades,
for any constant.

We adopt {\em allocative
efficiency} as our design objective, 
which is to say auction protocols that maximize the
expected total value from the sequence of trades. We also consider
{\em net efficiency}, wherein any net outflow of payments to the
marketmaker is also accounted for in considering the quality of a 
design.
Experimental results explore the allocative efficiency of 
\chain\ when instantiated to various single-period matching
rules and for a range of different
  assumptions about market volatility and maximal patience. 
For a baseline we consider the efficiency of a  standard
  (non-truthful) open outcry DA populated with simple adaptive trading
  agents modeled after ``zero-intelligence plus" (ZIP)
agents~\shortcite{cliff:simple,preist:zip}. We also compare
the efficiency of \chain\ with that of a truthful online DA due to 
Blum \etal~\citeyear{blum06}, which selects a fixed trading price
to guarantee competitiveness in an adversarial model.

From within the truthful mechanisms we find that adaptive, price-based
instantiations of \chain\ are the most effective for high arrival
intensity and low volatility. Even defining a single, well-chosen
price that is optimized for the market conditions can be reasonably
effective in promoting efficient trades in low volatility environments.
On the other hand, for medium to low arrival intensity and medium to
high volatility we find that the \chain-based DAs that allow for
dynamic price competition, such as the McAfee-based rule, are 
most efficient. The same qualitative observations hold whether one is
interested in allocative efficiency or net efficiency, although
the adaptive, price-based methods have better performance in terms of
net efficiency.
The Blum \etal~\citeyear{blum06} rule fairs
poorly in our tests, which is perhaps unsurprising given that it
is optimized for worst-case performance in an adversarial setting.
When populated with ZIP agents, we find that non-truthful DAs can 
provide very good efficiency in low volatility environments but
poor performance in high volatility environments.
The good performance of the ZIP-based market occurs when agents learn
to bid approximately truthfully; i.e., when the market operates
as if truthful, but without incurring the stringent cost (e.g., through trading constraints) of imposing truthfulness explicitly. An equilibrium analysis is 
available only for the truthful DAs; we
have no way of knowing how close the ZIP agents are 
to playing an equilibrium, and note that the ZIP agents do not even {\em consider} 
performing time-based manipulations.

\vspace{-0.1cm}
\subsection{Outline}

Section~\ref{sec:dynDA} introduces the
dynamic DA model, including our assumptions, and presents desiderata for online
DAs and a price-based characterization for the design of truthful
dynamic auctions. 
Section~\ref{sec:framework} defines the \chain\ algorithm together
with the building block of a well-defined, single-period matching rule
and the strong no-trade predicate. Section~\ref{sec:practical} 
gives a number of instantiations to both price-based and competition-based
matching rules, including a general method to define the 
strong no-trade predicate given a price-based
instantiation. Section~\ref{sec:theory} proves truthfulness,
no-deficit and feasibility of the \chain\ auctions and also
establishes their uniqueness amongst auctions constructed from 
the same single-period matching-rule building block. The importance
of the assumption about maximal agent patience is established. 
Section~\ref{sec:empirical} presents
our empirical analysis, including a description of the simple adaptive
agents that we use to populate a non-truthful open-outcry DA and
provide a benchmark. Section~\ref{sec:related} gives related work.
In Section~\ref{sec:conclude} we conclude with a discussion about the merits of truthfulness in markets 
and present possible extensions.

\vspace{-0.2cm}
\section{Preliminaries: Basic Definitions}
\label{sec:dynDA}

Consider a dynamic auction model with discrete, possibly infinite, time periods
$T=\{1,2,\ldots\}$, indexed by $t$.
The double auction (DA) provides a market for a single
commodity. Agents are either buyers or sellers interested
in trading a single unit of the commodity. An agent's type,
$\theta_i=(a_i,d_i,w_i)\in \Theta_i$, where $\Theta_i$ is the set of
possible types for agent $i$, defines an arrival $a_i$, departure
$d_i$, and value $w_i\in \mathbb{R}$ for trade. If the agent is a
buyer, then $w_i>0$. If the agent is a seller, then $w_i\leq 0$. 
We assume a maximal patience $K$, so that $d_i\leq a_i+K$
for all agents.

The arrival time models the first time at which an agent learns about
the market or learns about its value for a trade. Thus, information 
about its type is not available before period $a_i$ (not even to agent $i$) and
the agent cannot engage in trade before period $a_i$. The departure
time, $d_i$, 
models the final period in which a buyer has positive value for a
trade, or the final period in which a seller is willing to 
engage in trade. We model risk-neutral agents with quasi-linear utility,
$w_i-p$ when a trade occurs in $t\in[a_i,d_i]$ 
and payment $p$ is collected (with $p<0$ if  the agent is a seller).
Agents are rational and self-interested, and act to maximize expected
utility. By assumption, sellers have no utility for payments
received after their true departure period. 

Throughout this paper we adopt {\em bid} to refer, generically, to a
claim that an agent -- either a buyer or a seller -- makes to a DA about
its type. In addition, when we need to be specific about the
distinction between claims made by buyers and claims made by sellers 
we refer to the {\em bid} from a buyer and the {\em ask} from a
seller. 

\vspace{-0.1cm}
\subsection{Example}

Consider the following naive generalization of the (static) trade-reduction
DA~\shortcite{lavi05,mcafee:double92} to this dynamic environment. 
A bid 
from an agent is a claim about its type
$\hat{\theta}_i=(\hat{a}_i,\hat{d}_i,\hat{w}_i)$, necessarily made in
period $t=\hat{a}_i$. Bids are {\em active} while
$t\in[\hat{a}_i,\hat{d}_i]$ and no trade has occurred. 

Then in each period $t$, use the trade-reduction DA to determine
which (if any) of the active bids trade and at what price. These
trades occur immediately. 
The trade-reduction DA (tr-DA) works as follows:
Let $B$ denote the set of bids and $S$ denote the set of asks.
Insert a dummy bid with value $+\infty$ into $B$ and a dummy ask
with value $0$ into $S$. When $|B|\geq 2$ and $|S|\geq 2$ then 
sort $B$ and $S$ in order of decreasing
value. Let $\hat w_{b_0}\geq \hat w_{b_1}\geq \ldots$ and $\hat w_{s_0}\geq \hat w_{s_1}\geq \ldots$ 
denote the bid and ask values with $(b_0,s_0)$ denoting the dummy
bid-ask pair. Let $m\geq 0$ index the last pair of bids and asks to 
clear in the efficient trade, such that $\hat w_{b_m}+\hat w_{s_m}\geq 0$ and
$\hat w_{b_{m+1}}+\hat w_{s_{m+1}}<0$. When $m\geq 2$ then bids $\{b_1,\ldots,b_{m-1}\}$ and
asks $\{s_1,\ldots,s_{m-1}\}$ trade and payment $\hat w_{b_m}$ is collected 
from each winning buyer and payment $-\hat w_{s_m}$ is made to each 
winning seller.

First consider a static tr-DA with the following bids and asks:

\vspace{1mm}
\centerline{\begin{tabular}{cc|cc}
\multicolumn{2}{c|}{\bf B} & \multicolumn{2}{c}{\bf S} \\
$i$ & $\hat w_i$ & $i$ & $\hat w_i$ \\\hline
$b_1^\ast$ & 15 & $s_1^\ast$ & -1\\
$b_2^\ast$ & 10 & $s_2^\ast$ & -1\\
$b_3^\ast$ & 4 & $s_3^\ast$ & -2\\
$b_4$ & 3 & $s_4$ & -2 \\
\hline
& & $s_5$ & -5
\end{tabular}}
\vspace{1mm}

The line indicates that bids (1--4) and asks (1--4) could be matched for
efficient trade. By the rules of the tr-DA, bids (1--3) and asks
(1--3) trade, with payments \$3 collected from winning buyers and
payment \$2 made to winning sellers. The auctioneer earns a profit
of \$3. The asterisk notation indicates the bids and asks that trade.
The tr-DA is {\em truthful}, in the sense that it is a
dominant-strategy for every agent to report its true value whatever
the reports of other agents. For intuition, consider the buy-side. The
payment made by winners is independent of their bid price while the
losing bidder could only win by bidding more than \$4, at which point
his payment would be \$4 and more than his true value.

Now consider a dynamic variation with buyer types
$\{(1,2,15),(1,2,10),(1,2,4),(2,2,3)\}$ and seller types
$\{(1,2,-1),(2,2,-1),(1,1,-2),(2,2,-2),(1,2,-5)\}$. When agents are
truthful, the dynamic tr-DA plays out as follows:

\vspace{1mm}
\centerline{\begin{tabular}{cc|cc c c cc|cc}
\multicolumn{4}{c}{period 1} & & & \multicolumn{4}{c}{period 2}
\\
\multicolumn{2}{c|}{\bf B} & \multicolumn{2}{c}{\bf S} & & & \multicolumn{2}{c|}{\bf B} & \multicolumn{2}{c}{\bf S}
\\
$i$ & $\hat w_i$ & $i$ & $\hat w_i$ & & & $i$ & $\hat w_i$ & $i$ & $\hat w_i$
\\ \cline{1-4} \cline{7-10} 
$b_1^\ast$ & 15 & $s_1^\ast$ & -1 & & & $b_2^\ast$ & 10 & $s_2^\ast$ & -1
\\
$b_2$ & 10 & $s_3$ & -2 & & & $b_3$ & 4 & $s_4$ & -2 
\\ \cline{1-4} \cline{7-10} 
$b_3$ & 4 & $s_5$ & -5 & & & $b_4$ & 3 & $s_5$ & -5 
\end{tabular}}
\vspace{1mm}

In period 1, buyer 1 and seller 1 trade at payments of \$10 and \$2 
respectively. In period 2, buyer 2 and seller 2 trade at payments of
\$4 and \$2 respectively. 
But now we can construct two kinds of
manipulation to show that this dynamic DA is not truthful. 
First, buyer 1 can do better by delaying his reported arrival until 
period 2:

\vspace{1mm}
\centerline{\begin{tabular}{cc|cc c c cc|cc}
\multicolumn{4}{c}{period 1} & & & \multicolumn{4}{c}{period 2}
\\
\multicolumn{2}{c|}{\bf B} & \multicolumn{2}{c}{\bf S} & & & \multicolumn{2}{c|}{\bf B} & \multicolumn{2}{c}{\bf S}
\\
$i$ & $\hat w_i$ & $i$ & $\hat w_i$ & & & $i$ & $\hat w_i$ & $i$ & $\hat w_i$
\\ \cline{1-4} \cline{7-10} 
$b_2^\ast$ & 10 & $s_1^\ast$ & -1 & & & $b_1^\ast$ & 15 & $s_2^\ast$ & -1
\\
$b_3$ & 4 & $s_3$ & -2 & & & $b_3$ & 4 & $s_4$ & -2 
\\ \cline{1-4} \cline{7-10} 
& & $s_5$& -5 & & & $b_4$ & 3 & $s_5$ & -5 
\end{tabular}}
\vspace{1mm}

\noindent Now, buyer 2 trades in period 1 and does not set the price to buyer 1
in period 2. Instead, buyer 1 now trades in period 2 and makes payment
\$4. 

Second, buyer 3 can do better by increasing his reported value:

\vspace{1mm}
\centerline{\begin{tabular}{cc|cc c c cc|cc}
\multicolumn{4}{c}{period 1} & & & \multicolumn{4}{c}{period 2}
\\
\multicolumn{2}{c|}{\bf B} & \multicolumn{2}{c}{\bf S} & & & \multicolumn{2}{c|}{\bf B} & \multicolumn{2}{c}{\bf S}
\\
$i$ & $\hat w_i$ & $i$ & $\hat w_i$ & & & $i$ & $\hat w_i$ & $i$ & $\hat w_i$
\\ \cline{1-4} \cline{7-10} 
$b_1^\ast$ & 15 & $s_1^\ast$ & -1 & & & $b_3^\ast$ & 6 & $s_2^\ast$ & -1
\\
$b_2^\ast$ & 10 & $s_3^\ast$ & -2 & & & $b_4$ & 3 & $s_4$ & -2 
\\ \cline{7-10} 
$b_3$ & 6 & $s_5$ & -5 & & & & & $s_5$ & -5 \\ \cline{1-4} 
\end{tabular}}
\vspace{1mm}

Now, buyers 1 and 2 both trade in period 1 and this allows buyer 3 to win
(at a price below his true value) in period 2. This is a particularly
interesting manipulation because the agent's manipulation is by {\em
increasing} its bid above its true value. By doing so, it allows more
trades to occur and makes the auction less competitive in the next
period. 

\vspace{-0.1cm}
\subsection{Dynamic Double Auctions: Desiderata}

We consider only direct-revelation, dynamic DAs that restrict
the message that an agent can send to the auctioneer to a single, direct
claim about its type. We also consider ``closed" auctions so that an
agent receives no feedback before reporting its type and cannot
condition its strategy on the report of another agent.\footnote{The 
restriction to direct-revelation, online mechanisms is without loss of
generality when combined with a simple heart-beat message from an
agent to indicate its presence in any period $t$ during its reported
arrival-departure interval. See the work of Pai and Vohra~\citeyear{pai06} and
Parkes~\citeyear{parkes07}.} 

Given this, let $\theta^t$ denote the set of agent 
types reported in period $t$, $\theta=(\theta^1,\theta^2,\ldots, 
\theta^t,\ldots, )$ denote a complete type profile
(perhaps unbounded), and $\theta^{\leq t}$ denote the type profile
restricted to agents with (reported) arrival no later than period
$t$. A report $\hat{\theta}_i=(\hat{a}_i,\hat{d}_i,\hat{w}_i)$
represents a {\em commitment} to buy (sell) one unit of the commodity
in any period $t\in[\hat{a}_i,\hat{d}_i]$ for a payment of at most
$\hat{w}_i$. Thus, if a seller reports a departure time
$\hat{d}_i>d_i,$ it must commit to complete a trade that
occurs after her true departure and even though a seller is modeled
as having no utility for payments received after her true
departure.

A dynamic DA, $M=(\pi,x)$, defines an allocation policy
$\pi=\{\pi^t\}^{t\in T}$ and payment policy $x=\{x^t\}^{t\in T}$,
where $\pi^t_i(\theta^{\leq t})\in \{0,1\}$ indicates whether or 
not agent $i$ trades in period $t$ given reports $\theta^{\leq t},$ 
and $x^t_i(\theta^{\leq t})\in\mathbb{R}$ indicates a payment made by
agent $i$, negative if this is a payment received by the agent.
The auction rules can also be {\em stochastic}, so that $\pi_i^t(\theta^{\leq
t})$ and $x_i^t(\theta^{\leq t})$ are random variables. 
For a dynamic DA to be well defined, it must hold that 
$\pi^t_i(\theta^{\leq t})=1$ in at most
one period $t\in[a_i,d_i]$ and zero otherwise, and the payment
collected from agent $i$ is zero except in periods $t\in[a_i,d_i]$.

In formalizing the desiderata for dynamic DAs, 
it will be convenient to adopt $(\pi(\theta),x(\theta))$ to denote the complete
sequence of allocation decisions given reports $\theta$, with
shorthand $\pi_i(\theta)\in\{0,1\}$ and $x_i(\theta)\in\mathbb{R}$ to
indicate whether agent $i$ trades during its reported
arrival-departure interval, and the total payment made by agent $i$,
respectively. 
By a slight abuse of notation, we write $i\in\theta^{\leq t}$ to
denote that agent $i$ reported a type no later than period $t$. Let
$B$ denote the set of buyers and $S$ denote the set of sellers. 

We shall require that the
dynamic DA satisfies {\em no-deficit}, {\em feasibility}, 
{\em individual-rationality} and {\em truthfulness}. No-deficit
ensures that the auctioneer has a cash surplus in every period:
\begin{definition}[no-deficit]
A dynamic DA, $M=(\pi,x)$ is no-deficit if:
\begin{align}
\sum_{i\in \theta^{\leq t}}
\sum_{t'\in[a_i,\min(t,d_i)]}\!\!\!x_i^{t'}(\theta^{\leq t'})&\geq 0
, \ \ \forall t, \forall
\theta
\end{align}
\end{definition}

Feasibility ensures that the auctioneer does not need to
take a short position in the commodity traded in the market in any
period:
\begin{definition}[feasible trade]
A dynamic DA, $M=(\pi,x)$ is feasible if:
\begin{align}
\sum_{i\in \theta^{\leq t}, i\in S}
\sum_{t'\in[a_i,\min(t,d_i)]}\!\!\!\pi_i^{t'}(\theta^{\leq t'})
-\sum_{i\in \theta^{\leq t}, i\in B}
\sum_{t'\in[a_i,\min(t,d_i)]}\!\!\!\pi_i^{t'}(\theta^{\leq t'})
& \geq 0, \ \ \forall t, \forall
\theta
\end{align}
\end{definition}

This definition of feasible trade assumes 
that the auctioneer can ``hold" an item that 
is matched between a seller-buyer pair, for instance only releasing it
to the buyer upon his reported departure.
See the remark concluding this section for a discussion of this assumption.

Let $v_i(\theta_i,\pi(\theta'_i,\theta_{-i}))\in\mathbb{R}$ denote the 
value of an agent with type $\theta_i$ for the allocation decision
made by policy $\pi$ given report $(\theta'_i,\theta_{-i})$,
i.e. $v_i(\theta_i,\pi(\theta'_i,\theta_{-i}))=w_i$ if the agent
trades in period $t\in[a_i,d_i]$ and $0$ if it trades outside of this
interval and is a buyer, or $-\infty$ if it trades outside of this
interval and is a seller. Individual-rationality requires that agent $i$'s utility is
non-negative when it reports its true type, whatever the reports of
other agents: 
\begin{definition}[individual-rational]
A dynamic DA, $M=(\pi,x)$ is individual-rational (IR) if
$v_i(\theta_i,\pi(\theta))-x_i(\theta)\geq 0$ for all $i$, all 
$\theta$.
\end{definition}

In order to define truthfulness, we introduce notation
$C(\theta_i)\subseteq \Theta_i$ for $\theta_i\in \Theta_i$ to denote
the set of {\em available misreports} to an agent with true type
$\theta_i$. In the standard model adopted in offline mechanism design, it is
typical to assume $C(\theta_i)=\Theta_i$ with all misreports
available. Here, we shall assume {\em
no early-arrival} misreports, with
$C(\theta_i)=\{\hat{\theta}_i=(\hat{a}_i,\hat{d}_i,\hat{w}_i) :
a_i\leq \hat{a}_i\leq \hat{d}_i\}$. This assumption of limited
misreports is adopted in earlier work on online mechanism
design~\shortcite{hajiaghayi:online04}, and is 
well-motivated when the arrival time is the first period in which
a buyer first decides to acquire an item or the period in which a
seller first decides to sell an item.

\begin{definition}[truthful]
Dynamic DA, $M=(\pi,x)$, is dominant-strategy incentive-compatible, or
truthful, given limited misreports $C$ if:\sloppy
\begin{align*}
v_i(\theta_i,\pi(\theta_i,\theta'_{-i}))-x_i(\theta_i,\theta'_{-i})&\geq
v_i(\theta_i,\pi(\hat{\theta}_i,\theta'_{-i}))-x_i(\hat{\theta}_i,\theta'_{-i}).
\end{align*}
for all $\hat{\theta}_i\in C(\theta_i)$, all $\theta_i$, all
$\theta'_{-i}\in C(\theta_{-i})$, all $\theta_{-i}\in\Theta_{-i}$.
\end{definition}

This is a robust equilibrium concept: an agent maximizes its utility
by reporting its true type whatever the reports of other agents. 
Truthfulness is useful because it simplifies the decision problem facing 
bidders: an agent can determine its optimal bidding strategy
without a model of either the auction dynamics or the other agents.
In the case that the allocation and payment policy is {\em stochastic}, then
we adopt the requirement of {\bf strong truthfulness} so that an agent
maximizes its utility whatever the random sequence of coin flips
within the auction.

\paragraph{Remark.}
\label{sec:remark}
The flexible definition of feasibility, in which the auctioneer is
able to take a long position in the commodity, allows the auctioneer
to time trades by receiving the unit sold by a seller in one period but only releasing
it to a buyer in a later period. This allows for
truthfulness in environments in which bidders can overstate their
departure period. In some settings this is an unreasonable requirement, however, for
instance when the commodity 
represents a task that is performed, or because a physical good is being traded
in an electronic market.\footnote{Note that if the
task is a computational task, then tasks can be handled within this
model by requiring that the seller performs the task when it is
matched but with a commitment to hold onto the result 
until the matched buyer is ready to
depart.}  In these cases, the definition of feasibility
strengthened to require exact trade-balance in every
period. The tradeoff is that available misreports must be further
restricted, with agents limited to reporting {\em no late-departures}
in addition to {\em no early-arrivals}~\shortcite{lavi05,hajiaghayi05}.  
For the rest of the paper
we work in the ``relaxed feasibility, no early-arrival" model. The
\chain\ framework can be immediately extended to the ``strong-feasibility, no
early-arrival and no late-departure" model by executing trades
immediately rather than delaying the trade until a buyer's departure. 

\vspace{-0.2cm}
\section{Chain: A Framework for Truthful Dynamic DAs}
\label{sec:framework}

\chain\ provides a general algorithmic framework with which to construct 
truthful dynamic DAs from well-defined single-period matching rules,
such as the tr-DA rules described in the earlier section. 

Before introducing \chain\ we need a few more definitions:
Bids reported to \chain\ are {\bf active} while $t\leq
\hat{d}_i$ (for reported departure period $\hat{d}_i$), and while
the bid is unmatched and still eligible to be matched. In each period,
a single-period matching rule is used to determine whether any of
the active bids will trade and also which (if any) of the
bids that do not match will remain active in the next period.

Now we define the building blocks, well-defined single-period 
matching rules, and introduce 
the important concept of a strong no-trade predicate, which is defined
for a single-period matching rule.

\vspace{-0.1cm}
\subsection{Building Block: A Single-Period Matching Rule}

In defining a matching rule, it is helpful to adopt $b^t\in
\mathbb{R}^m_{>0}$ and $s^t\in \mathbb{R}^n_{\leq 0}$ to
denote the active bids and active asks in period $t$, where there are $m\geq
0$ and $n\geq 0$ bids and asks respectively. The bids and asks that
were active in earlier periods but are no longer active form the {\em
history} in period $t$, denoted  
$H^t\in \mathbb{R}^h$ where $h\geq 0$ is the size of the history.

A single-period matching rule (hereafter a {\em matching rule}), 
$M_{mr}=(\pi_{mr},x_{mr})$ defines an allocation rule
$\pi_{mr}(H^t,b^t,s^t,\omega)\in\{0,1\}^{(m+n)}$ and a payment rule
$x_{mr}(H^t,b^t,s^t,\omega)\in\mathbb{R}^{(m+n)}$. Here, we include
random event $\omega\in\Omega$ to allow explicitly for stochastic
matching and allocation rules. 
\begin{definition}[well-defined matching rule]
A matching rule $M_{mr}=(\pi_{mr},x_{mr})$ is well-defined
when it is strongly truthful, no-deficit, individual-rational, and strong-feasible.
\end{definition}

Here, the properties of truthfulness, no-deficit,
and individual-rationality are exactly the single-period
specializations of those defined in the previous section. For
instance, a matching rule is truthful in this sense when the
dominant strategy for an agent in a DA defined with this rule,
and in a static environment, is to bid truthfully and for all
possible random events $\omega$. Similarly for individual-rationality. No-deficit
requires that the total payments are always non-negative. 
Strong-feasibility  
requires that exactly the same number of asks are accepted as bids,
again for all random events.
\label{sec:sm}

\begin{figure}
{\small 
\begin{algorithmic}
\Function{SimpleMatch}{$H^t$,$b^t$,$s^t$}
\State matched := $\emptyset$
\State $p^t$ := mean($|H^t|$)
\While{ $(b^t\neq \emptyset) \& (s^t\neq \emptyset)$}
\State $i$ := 0, $b_i$ := $-\epsilon$, $j$ := 0, $s_j$ := $-\infty$
\While{ $(b_i<p^t) \& (b^t \neq \emptyset)$}
\State $i$ := random($b^t$), $b^t$ := $b^t \setminus \{i\}$
\EndWhile
\While{ $(s_j<-p^t) \& (s^t \neq \emptyset)$}
\State $j$ := random($s^t$), $s^t$ := $s^t \setminus \{j\}$
\EndWhile
\If{ $(i\neq 0) \& (j\neq 0)$}
\State matched := matched $\cup$ $\{ (i,j) \}$
\EndIf
\EndWhile
\EndFunction
\end{algorithmic}
}
\caption{\label{fig:sm} \small A well-defined matching rule
defined in terms of the mean bid price in the history.}
\vspace{-0.3cm}
\end{figure}

One example of a well-defined matching rule is the tr-DA, which is
invariant to the history of bids and asks.  For an
example of a well-defined, adaptive (history-dependent) and  price-based 
matching rule, consider procedure {\sc SimpleMatch} in Figure~\ref{fig:sm}.
The {\sc SimpleMatch} 
matching rule computes the mean of the absolute value of the bids
and asks in the history $H^t$ and adopts this as the clearing price in
the current period. It is a stochastic matching rule because
bids and asks are picked from the sets $b^t$ and $s^t$  at random and
offered the price.
We can reason about the properties of {\sc SimpleMatch} as follows:

(a) truthful: the price $p^t$ is independent of the bids and the
probability that a bid (or ask) is matched is independent of its
bid (or ask) price

(b) no-deficit: payment $p^t$ is collected from each matched buyer and
made to each matched seller

(c) individual-rational: only bids $b_i\geq p^t$ and asks $s_j\geq
-p^t$ are accepted.

(d) feasible: bids and asks are introduced to the ``matched" set in 
balanced pairs

\vspace{-0.1cm}
\subsection{Reasoning about Trade (Im)Possibility}

In addition to defining a matching rule $M_{mr}$, we allow a designer to
(optionally) designate a subset of losing bids that satisfy a 
property of {\em strong no-trade.} Bids that satisfy strong no-trade
are losing bids for which trade was not possible at any bid price
(c.f. ask price for asks), and moreover for which additional
independence conditions hold between bids provided with this
designation.

We first define the weaker concept of no-trade. In the following,
notation $\pi_{mr,i}(H^t,b^t,s^t,\omega|\hat{w}_i)$ indicates the
allocation decision made for bid (or ask) $i$ when its bid (ask) price
is replaced with $\hat{w}_i$:\sloppy
\begin{definition}[no-trade]
Given matching rule $M_{mr}=(\pi_{mr},x_{mr})$ then the set
of agents, $\mathrm{NT}^t$, for which no trade is possible in period
$t$ and given random events $\omega$
are those for which $\pi_{mr,i}(H^t,b^t,s^t,\omega | \hat{w}_i)=0$, for
every $\hat{w}_i\in \mathbb{R}_{> 0}$ when $i\in b^t$ and for every
$\hat{w}_i\in \mathbb{R}_{\leq 0}$ when $i\in s^t$.
\end{definition}

It can easily happen that no trade is possible, 
for instance when the agent is a buyer and there are no sellers on
the other side of the market. 
Let $\mathrm{SNT}^t\subseteq \mathrm{NT}^t$ denote the set 
of agents designated with the property of {\em strong no-trade}. Unlike
the no-trade property, strong no-trade need not be uniquely defined for a matching
rule. To be valid, however, the construction offered by a 
designer for strong no-trade must satisfy the following:
\begin{definition}[strong no-trade]
A construction for strong no-trade,
$\mathrm{SNT}^t\subseteq \mathrm{NT}^t$,
is {\bfseries valid} for a matching rule when:

(a) $\forall i\in \mathrm{NT}^t$ with $\hat{d}_i>t$, whether or not $i\in
\mathrm{SNT}^t$ is unchanged for all alternate reports
$\theta'_i=(a'_i,d'_i,w'_i)\neq \hat{\theta}_i$ while $d'_i>t$,

(b) $\forall i\in \mathrm{SNT}^t$ with $\hat{d}_i>t$, the set $\{j : j\in
\mathrm{SNT}^t, j\neq i, \hat{d}_j>t\}$ is unchanged for all reports 
$\theta'_i=(a'_i,d'_i,w'_i)\neq \hat{\theta}_i$ while $d'_i>t$, and independent
even of whether or not agent $i$ is present in the market.
\end{definition}

The strong no-trade conditions must be checked only for agents with 
a reported departure later than the current
period. Condition (a) requires that such an agent in $\mathrm{NT}^t$ 
cannot affect whether or not it
satisfies the strong no-trade predicate as long as it continues 
to report a departure later than the current period. Condition
(b) is defined recursively, and requires that if such an agent is
identified as satisfying strong no-trade, then its own report must not
affect the designation of strong no-trade to {\em other} 
agents, with reported departure later
than the current period, while it continues to report a departure
 later than the current period -- even if it delays its
reported arrival until a later period.

Strong no-trade allows for flexibility
in determining whether or not a bid is eligible for matching. 
Specifically, only
those bids that satisfy strong no-trade amongst those that lose in
the current period can remain as 
a candidate for trade in a future period. The property is defined
to ensure that such a ``surviving" agent does not, and could not,
affect the set of 
other agents against which it competes in future periods. 

\vspace{5cm}
\begin{example} 
Consider the tr-DA matching rule defined earlier with bids and asks

\vspace{1mm}
\centerline{\begin{tabular}{cc|cc}
\multicolumn{2}{c|}{\bf B} & \multicolumn{2}{c}{\bf S} \\
$i$ & $\hat w_i$ & $i$ & $\hat w_i$ \\\hline
$b_1^\ast$ & $10$ & $s_1^\ast$ & $-4$\\
$b_2$ & $8$ & $s_2$ & $-6$\\
$b_3$ & $6$ & $s_3$ & $-8$\\
\end{tabular}}
\vspace{1mm}

\noindent Bid 1 and ask 1 trade at price $8$ and $-6$
respectively. $\mathrm{NT}^t=\emptyset$ because bids 2 and 3 could
each trade if they had (unilaterally) submitted a bid price of greater
than 10. Similarly for asks 2 and 3. Now consider the order book

\vspace{1mm}
\centerline{\begin{tabular}{cc|cc}
\multicolumn{2}{c|}{\bf B} & \multicolumn{2}{c}{\bf S} \\
$i$ & $\hat w_i$ & $i$ & $\hat w_i$ \\\hline
$b_1$ & $8$ & $s_1$ & $-6$\\
$b_2$ & $7$ & $s_2$ & $-10$\\
$b_3$ & $2$ & $s_3$ & $-12$\\
\end{tabular}}
\vspace{1mm}

\noindent No trade occurs. In this case,
$\mathrm{NT}^t=\{b_1,b_2,b_3,s_1\}$. No trade is possible for any bids, even
bids 2 and 3, because $\hat w_{b_1}+\hat w_{s_2}=8-10<0$. 
But, trade is possible for
asks 2 and 3, because $\hat w_{b_2}+\hat w_{s_1}=7-6\geq 0$ and either ask 
could trade by submitting a low enough ask price.
\end{example}

\begin{example}
Consider the tr-DA matching rule and explore possible
alternative constructions for strong no-trade. 

(i) Dictatorial: in each period $t$, identify an agent that could be 
present in the period in a way that is oblivious to all agent
reports. Let $i$ denote the index of this agent. If $i\in
\mathrm{NT}^t$, then include $\mathrm{SNT}^t=\{i\}$. Strong no-trade
condition (a) is satisfied because whether or not $i$ is selected
as the ``dictator" is agent-independent, and given that it is selected,
then whether or not trade is possible is 
agent-independent. Condition (b) is trivially
satisfied because $|\mathrm{SNT}^t|=1$ and there is no cross-agent
coupling to consider. 

(ii) $\mathrm{SNT}^t:=\mathrm{NT}^t$. Consider the order book

\vspace{1mm}
\centerline{\begin{tabular}{cc|cc}
\multicolumn{2}{c|}{\bf B} & \multicolumn{2}{c}{\bf S} \\
$i$ & $\hat w_i$ & $i$ & $\hat w_i$ \\\hline
$b_1$ & $3$ & $s_1$ & $-4$\\
$b_2$ & $2$ & $s_2$ & $-6$\\
$b_3$ & $1$ & $s_3$ & $-8$\\
\end{tabular}}
\vspace{1mm}

\noindent Suppose all bids and asks remain in the market for at least 
one more period. 
Clearly, $\mathrm{NT}^t=\{b_1,b_2,b_3,s_1,s_2,s_3\}$. Consider
the candidate construction $\mathrm{SNT}^t=\mathrm{NT}^t$. Strong
no-trade condition (a) is satisfied because whether or not $i$ is in
set $\mathrm{NT}^t$ is agent-independent. Condition (b) is not
satisfied, however. Consider bid 2. If bid 2's report had been $8$ instead
of 2 then trade would be possible for bids 1 and 3, and
$\mathrm{SNT}^t=\mathrm{NT}^t=\{b_2,s_1,s_2,s_3\}$. Thus, whether or not 
bids 1 and 3 satisfy the strong no-trade predicate depends on the 
value of bid 2.  This is not a valid construction for strong no-trade
for the tr-DA matching rule.\sloppy

(iii) $\mathrm{SNT}^t=\mathrm{NT}^t$ if $|b^t|<2$ or $|s^t|<2$, and
$\mathrm{SNT}^t=\emptyset$ otherwise. As above, strong no-trade
condition (a) is immediately satisfied. Moreover, condition (b) is now
satisfied because trade is not possible for any bid or ask
irrespective of bid values because there are simply not enough bids or
asks to allow for trade with tr-DA (which needs at least 2 bids and at
least 2 asks).
\end{example}

\begin{example}
Consider a variant of the {\sc SimpleMatch} matching rule, defined
with fixed price 9. We can again ask whether
$\mathrm{SNT}^t:=\mathrm{NT}^t$ is a valid construction for strong
no-trade. Throughout this example suppose all bids and asks remain
in the market for at least one more period.
First consider a bid with $\hat w_{b_1} = 8$ and two asks with values
$\hat w_{s_1} = -6$ and $\hat w_{s_2} = -7$.
Here, $\mathrm{NT}^t=\{s_1,s_2\}$ because the asks 
cannot trade whatever their price since the bid is not high enough
to meet the fixed trading price of 9. Moreover,
$\mathrm{SNT}^t=\{s_1,s_2\}$ is a valid construction; strong no-trade
condition (a) is satisfied as above and condition (b) is satisfied
because whether or not ask 2 is in $\mathrm{NT}^t$ (and thus
$\mathrm{SNT}^t$) is independent of the price on ask 1, and vice
versa. But consider instead a bid with $\hat w_{b_1} = 8$ 
and an ask with $\hat w_{s_1} = -10$. 
Now, $\mathrm{NT}^t=\{b_1,s_1\}$ and $\mathrm{SNT}^t=\{b_1,s_1\}$
is our candidate strong no-trade set. However if bid 1 had declared value 10 instead
of 8 then $\mathrm{NT}^t=\{b_1\}$ and ask 1 drops out of
$\mathrm{SNT}^t$. Thus, strong no-trade condition (b) is not
satisfied.
\end{example}

We see from the above examples that it can be quite delicate to provide
a valid, non-trivial construction of strong no-trade. Note, however, 
that  $\mathrm{SNT}^t=\emptyset$ is a (trivial) valid
construction for any matching rule. Note also that the
strong no-trade conditions (a) and (b) require information about
the reported departure period of a bid. Thus, while the matching rules
do not use temporal information about bids, this information is
used in the construction for strong no-trade.

\vspace{-0.1cm}
\subsection{Chain: From Matching Rules to Truthful, Dynamic DAs}

The control flow in 
\chain\ is illustrated in Figure~\ref{fig:chain}. Upon arrival of a
new bid, an {\em admission decision} is made and bid $i$ is admitted
if its value $\hat{w}_i$ is at least its admission price $q_i$. 
An admitted bid competes in a sequence of {\em matching
events}, where a matching event simply applies the
 matching rule to the set of active bids and asks. If a bid fails
to match in some period and is not in the strong no-trade set
($i\notin \mathrm{SNT}^t$), then
it is {\em priced out} and leaves the market without
trading. Otherwise, if it is still before its departure time
$(t\leq \hat{d}_i),$ then it is available for matching in
the next period.
\begin{figure}
\centerline{\psfig{figure=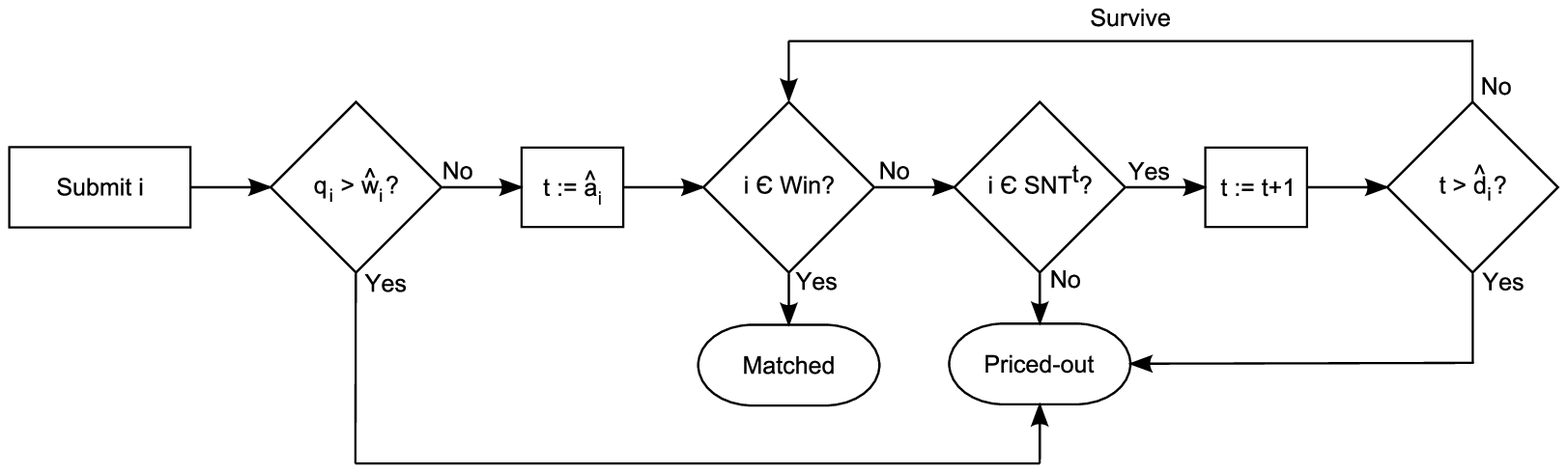,width=15.5cm}}
\caption{\small The decision process in \chain\ upon arrival 
of a new bid. If admitted, then the bid participates in a sequence of matching
events while it remains unmatched and in the strong no-trade set.
The bid matches at the first available opportunity and is priced
immediately.
\label{fig:chain}}
\vspace{-0.5cm}
\end{figure}

Each bid is always in one of three states: {\em active},
{\em matched} or {\em priced-out}. Bids are active 
if they are admitted to the market until $t\leq \hat{d}_i$, 
or until they are matched or priced-out.
An active bid becomes
{\em matched} in the first period (if any) when it trades 
in the single-period matching rule. An active bid is marked as {\em
priced-out} in the first period in which it loses but is not in
the strong no-trade set. As soon as a bid is no longer active,
it enters the history, $H^t,$ and the information about its
bid price can be used in defining matching rules for future periods.

Let $E^t$ denote the set of bids that will expire in the current
period. A well-defined matching rule, when coupled with a valid strong
no-trade construction, must provide \chain\ with the following
information, given history $H^t$, active bids $b^t$ and active
asks $s^t$, and expiration set $E^t$ in period $t$:
\vspace{0.2cm}

(a) for each bid or ask, whether it wins or loses

(b) for each winning bid or ask, the payment collected (negative
for an ask)

(c) for each losing bid or ask, whether or not it satisfies the strong
no-trade condition 
\vspace{0.2cm}

Note that the expiration set $E^t$ is only used for the strong no-trade
construction. This information is not made available to the matching rule. 
The following table summarizes the use of this information within 
\chain. Note that a winning bid cannot be
in set $\mathrm{SNT}^t$:

{
\center{\begin{tabular}{c|cc}
& $\neg \mathrm{SNT}^t$ & $\mathrm{SNT}^t$\\ \hline
Lose & priced-out & survive\\
Win & matched & n/a
\end{tabular}}

}
\vspace{0.4cm}

We describe \chain\ by defining the events that occur for a bid upon
its arrival into the market, and then in each period in which it
remains active:

\begin{itemize}
\item[] {\em Upon arrival}: Consider all possible earlier arrival periods
$t'\in[\hat{d}_i-K,\hat{a}_i-1]$ consistent with the reported type.
There are no such periods to consider if the bid is maximally patient.
If the bid would lose and not be in $\mathrm{SNT}^{t'}$
for any one of these arrival periods $t',$ then it is not admitted.
Otherwise, the bid would win in all periods $t'$ for which $i\notin
\mathrm{SNT}^{t'}$, and define the admission price as:
\begin{align}
\label{eqn:admit-price}
q(\hat{a}_i,\hat{d}_i,\theta_{-i},\omega)&:=\max_{t'\in[\hat{d}_i-K,\hat{a}_i-1], i\notin \mathrm{SNT}^{t'}}[p^{t'}_i,-\infty],
\end{align}
where $p^{t'}_i$ is the payment the agent would have made (negative
for a seller) in arrival period $t'$ (as determined by running the
myopic matching rule in that period). When the agent would lose in all earlier arrival periods
$t'$ (and so $i\in \mathrm{SNT}^{t'}$ for all $t'$), or the bid
is maximally patient, then the admission
price defaults to $-\infty$ and the bid is admitted.
\item[] {\em While active}: Consider period $t\in[\hat{a}_i,\hat{d}_i]$.
If the bid is selected to trade by the myopic matching rule, then mark
it as {\em matched} and define final payment:
\begin{align}
\label{eq:newpay}
x^t_i(\theta^{\leq
t})&=\max(q(\hat{a}_i,\hat{d}_i,\theta_{-i},\omega),p^t_i),
\end{align}
where $p^t_i$ is the
price (negative for a seller) 
determined by the myopic matching rule  in the current period. 
If this is a buyer, then collect the payment but delay
transferring the item until period $\hat{d}_i$. If this is a seller,
then collect the item
but delay making the payment until the
reported departure period. 
If the bid loses
and is not in $\mathrm{SNT}^t,$ then mark the bid as {\em priced-out}. 
\end{itemize}

We illustrate \chain\ by instantiating it to various matching
rules in the next section. 
In Section~\ref{sec:theory} we prove that \chain\ is strongly
truthful and no-deficit when coupled with a well-defined matching rule
and a valid strong no-trade construction.
We will see that the delay in buyer delivery and seller payment ensures 
truthful revelation of a trader's departure information. 
For instance, in the absence of this delay, a buyer might be able to do better 
by over-reporting departure information, still trading early enough
but now for a lower price.

\vspace{-0.1cm}
\subsection{Comments}

We choose not to allow the single-period matching rules to use the reported arrival
and departure associated with active bids and asks. This 
maintains a clean separation between non-temporal considerations
(in the  matching rules) and temporal considerations (in the
wider framework of \chain). This is also for simplicity. 
The single-period matching 
rules can be allowed to depend on the reported arrival-departure interval, as long as
the (single-period) rules are monotonic in tighter arrival-departure
intervals, in the sense that an agent that wins for some
$\hat{\theta}_i=(\hat{a}_i,\hat{d}_i,\hat{w}_i)$ continues to win and
for an improved price if it instead reports $(a'_i,d'_i,\hat{w}_i)$
with $[a'_i,d'_i]\subset [\hat{a}_i,\hat{d}_i]$. However, whether or
not trade is possible must be independent of the reported
arrival-departure interval and similarly for strong
no-trade. Determinations such as these would need to be made with
respect to the most patient type
$(\hat{d}_i-K,\hat{d}_i,\hat{w}_i)$ given report
$\hat{\theta}_i=(\hat{a}_i,\hat{d}_i,\hat{w}_i)$.

\vspace{-0.2cm}
\section{Practical Instantiations: Price-Based and Competition-Based Rules}
\label{sec:practical}

In this section we offer a number of instantiations of the 
\chain\ online DA framework. We present two different classes of well-defined
 matching rules: those that are {\em price-based} and compute simple
price statistics based on the history which are then used for
matching, and those that we refer to as {\em competition-based} and leverage
the history but also consider direct competition between the active bids and
asks in any period. In each case, we establish that the matching rules
are well-defined and provide a valid strong no-trade construction.

\vspace{-0.1cm}
\subsection{Price-Based Matching Rules}

Each one of these rules constructs a single price, $p^t$, in period
$t$ based on the history $H^t$ of earlier bids and asks
that are traded or expired.  For this purpose we define
variations on a real valued statistic, $\xi(H^t)$, that
is used to define this price given the history. Generalizing 
the {\sc SimpleMatch} procedure, as introduced in
Section~\ref{sec:sm}, the price $p^t$ is used to determine the trades in
period $t$. We also provide
a construction for {\em strong no-trade} in this context.

The main concern in setting prices is that they may be too
volatile, with price updates driving the admission price higher (via
the {\bf max} operator in the {\em admission} rule of \chain) 
and having the effect of pricing bids and asks out of the market. 
We describe various forms of smoothing and windowing, all designed 
to provide  adaptivity while dampening short-term variations. 
In each case, the parameters (e.g. the {\em smoothing factor}, or the
{\em window size}) can be determined empirically through off-line
tuning. 

We experiment with five price variants:

\paragraph{History-EWMA:} 
Exponentially-weighted moving average. The bid history,
$H^t$, is used to define price $p^t$ in period $t$, computed as
$p^t := \lambda \ \xi(H^t) + (1-\lambda)p^{t-1}$,
where $\lambda\in(0,1]$ is a
smoothing constant and $\xi(H^t)$ is a statistic defined 
for bids and asks that {\em enter} the history in period
$t$. 
Experimentally we find that the mean statistic,
$\xi^{\mathrm{mean}}(H^t),$ of
the absolute values of bids and asks that enter the history performs
well with $\lambda$ of 0.05 or lower for most scenarios that we test.
For cases in which $\xi(H^t)$ is not well-defined because of too few
(or zero) new bids or asks, then 
we set $p^t: =p^{t-1}$. 

\paragraph{History-median:} 
Compute  price $p^t$ from a statistic over
a fixed-size window of the most recent history, 
$p^t:=\xi(H^t,\Delta)$ where $\Delta$ is the window-size, i.e. defining
bids introduced to history $H^t$ in periods
$[t-\Delta,\ldots,t]$. Experimentally, we find that the median
statistic, $\xi^{\mathrm{median}}(H^t,\Delta),$ of the absolute bid and
ask values performs well for the scenarios we test, 
with the window size depending inversely with the volatility of agents'
valuations.
Typically, we observe optimal window sizes of 20 and 150, depending on 
volatility.
For cases in which $\xi(H^t,\Delta)$ is not well-defined because of 
too few (or zero) new bids or asks, then 
we set $p^t: =p^{t-1}$. 

\paragraph{History-clearing:} 
Identical to the history-median rule except the statistic 
$\xi(H^t,\Delta)$ is defined as $(b_m-s_m)/2$ where $b_m$ and $s_m$
are the lowest value pair of trades that would be executed in 
the efficient (value-maximizing) trade given all bids and asks to
enter history $H^t$ in periods
$[t-\Delta,\ldots,t]$.
Empirically, we find similar optimal window sizes for 
history-clearing as for history-median.

\paragraph{History-McAfee:} 
Define the statistic $\xi(H^t,\Delta)$ to 
represent the McAfee price, defined in Section~\ref{sec:compet}, 
for the bids in $H^t$ had they all simultaneously arrived.

\paragraph{Fixed price:} 
This simple rule computes a single fixed price
$p^t:=p^\ast$ for all trading periods, with the price optimized
offline to maximize the average-case efficiency of the dynamic DA
given \chain\ and the associated single-period matching rule that
leverages price $p^\ast$ as the candidate trading price.

\vspace{0.3cm}

For each pricing variant, procedure {\sc Match} (see
Figures~\ref{fig:matcha}--\ref{fig:matchb})  is used to determine which bids win (at price $p^t$),
which lose, and, of those that lose, which satisfy the strong no-trade
predicate. The subroutine used to determine the current price is
referred to as {\tt determineprice} in {\sc Match}. We provide
as input to {\sc Match} the set $E^t$ in addition to $(H^t,b^t,s^t)$
because {\sc Match} also constructs the strong no-trade set, and $E^t$
is used exclusively for this purpose.
\begin{figure}[p]
{
\begin{algorithmic}
\Function{Match}{$H^t$,$b^t$,$s^t$,$E^t$}
\State matched := $\emptyset$, lose := $\emptyset$, NT$^t$ := $\emptyset$,
SNT$^t$ := $\emptyset$
\State stop := false
\State $p^t$ := determineprice($H^t$)
\While{ $\neg$ stop}
\State $i$ := 0, $j$ := 0, checked$_B$ := $\emptyset$, checked$_S$ :=
$\emptyset$
\While{ ((checked$_B \subset b^t)\&$($i$=0)) $\vee$ ((checked$_S \subset
s^t)\&$($j$=0))}
\If{($i=0$)\&($j=0$)}
\State $k$ := random($b^t\setminus$ checked$_B \bigcup s^t\setminus$
checked$_S$)
\ElsIf{($i=0$)}
\State $k$ := random($b^t\setminus$ checked$_B$)
\ElsIf{($j=0$)}
\State $k$ := random($s^t\setminus$ checked$_S$)
\EndIf
\If{($k\in b^t$)}
\State checked$_B$ := checked$_B \cup \{k\}$
\If{($b_k\geq p^t$)}
\State $i$ := $k$
\EndIf
\Else
\State checked$_S$ := checked$_S \cup \{k\}$
\If{($s_k\geq -p^t$)}
\State $j$ := $k$
\EndIf
\EndIf
\EndWhile
\If{($i\neq 0$)\&($j\neq 0$)} 
\State matched := matched $\bigcup$ $\{ (i,j) \}$
\State lose := lose $\bigcup$ (checked$_B \setminus \{i\}$) $\bigcup$
(checked$_S \setminus \{j\}$)
\State $b^t$ := $b^t \setminus$ checked$_B$, $s^t$ := $s^t \setminus$
checked$_S$
\Else
\State stop := true
\EndIf
\EndWhile
\EndFunction
\end{algorithmic}
}
\caption{\label{fig:matcha} The procedure used for single-period
matching in applying \chain\ to the price-based rules. 
The algorithm continues in Figure~\ref{fig:matchb}.}
\end{figure}

\begin{figure}[t]
{
\begin{algorithmic}
\Function{Match {\rm (continued)}}{$H^t$,$b^t$,$s^t$,$E^t$}
\If{($i \neq 0$)\&($j=0$)} \Comment{I}
\State lose := lose $\bigcup s^t$, NT$^t$ := $b^t$
\If{($\exists k\in b^t\ \cdot \
((b_k\geq p^t)$\&$(\hat{d}_k=t)$)) $\vee$ ($\forall k\in s^t \ \cdot\
(\hat{d}_k=t$))}
\State SNT$^t$ := $b^t$
\Else \Comment{I-a}
\State SNT$^t$ := $b^t\setminus$ checked$_B$
\EndIf
\ElsIf{($j\neq 0$)\&$(i=0)$} \Comment{II}
\State lose := lose $\bigcup b^t$, NT$^t$ := $s^t$ 
\If{($\exists k\in s^t \ \cdot \ 
((s_k\geq -p^t)$\&$(\hat{d}_k=t)$)) $\vee$ ($\forall k\in b^t \ \cdot\
(\hat{d}_k=t$))}
\State SNT$^t$ := $s^t$
\Else
\State SNT$^t$ := $s^t\setminus$ checked$_S$
\EndIf
\ElsIf{($i=0$)\&($j=0$)} \Comment{III}
\State NT$^t$ := $b^t\bigcup s^t$
\If{($\forall k\in b^t\ \cdot\ (\hat{d}_k=t$)) $\vee$ ($\forall k\in s^t\ \cdot\
(\hat{d}_k=t$))}
\State SNT$^t$ := $b^t\bigcup s^t$
\EndIf
\EndIf
\EndFunction
\end{algorithmic}
}
\caption{\label{fig:matchb} Continuing procedure from Figure~\ref{fig:matcha}
for single-period
matching in applying \chain\ to the price-based rules.}
\end{figure}

The proof of the following lemma is technical and is postponed until
the Appendix.
\begin{lemma}
\label{lem:dp5b}
Procedure {\sc Match} defines a valid strong no-trade construction.
\end{lemma}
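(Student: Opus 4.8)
The plan is to reduce the run of procedure \textsc{Match} to a small number of structurally distinct outcomes and then check the two clauses of the validity definition on each. Fix the history $H^t$ and the realized random event $\omega$; since $p^t=\mathtt{determineprice}(H^t)$, the price depends only on $H^t$ (and $\omega$), not on any active bid or ask. It is convenient to view $\omega$ as fixing a priority order on agents, so that each call $\mathtt{random}(\cdot)$ returns the highest-priority available element; a consequence I will use repeatedly is that an agent which is never drawn during the run behaves exactly as if it were absent from the market. Write $B'$ (resp.\ $S'$) for the set of active bids with $\hat w_i\ge p^t$ (resp.\ active asks with $\hat w_i\ge -p^t$). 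The key structural observation is that, with $H^t$ and $\omega$ fixed, the entire execution --- the sequence of matched pairs, the final contents of $\mathtt{checked}_B,\mathtt{checked}_S$, and the residual sets $b^t,s^t$ --- is determined by $B'$, $S'$ and $\omega$ alone; reported departures enter only through the $\mathrm{SNT}^t$-guards of Figure~\ref{fig:matchb}. In particular there are exactly $\min(|B'|,|S'|)$ matched pairs, and after the main loop exactly one of the three guarded blocks fires: Case~I ($|B'|>|S'|$, leftover affordable bids), Case~II ($|S'|>|B'|$, its mirror image), and Case~III ($|B'|=|S'|$, no affordable leftovers). So it suffices to treat Cases~I and~III. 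I would first confirm that the set the algorithm calls $\mathrm{NT}^t$ really is the no-trade set and that $\mathrm{SNT}^t\subseteq\mathrm{NT}^t$: in Case~I every affordable ask is already matched, so no bid can trade at $p^t$ whatever its value ($\mathrm{NT}^t=b^t$), while each leftover ask could trade by lowering its price to meet the still-available affordable bid (asks $\notin\mathrm{NT}^t$); Case~III is similar with $\mathrm{NT}^t=b^t\cup s^t$; and $\mathrm{SNT}^t\subseteq\mathrm{NT}^t$ is then immediate by inspection of each block.

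For condition~(a), take $i\in\mathrm{NT}^t$ with $\hat d_i>t$ --- hence a bid in Cases~I and~III (an ask in Case~II, symmetrically) --- and vary its report to $\theta_i'$ with $d_i'>t$, splitting on whether $i$'s affordability is preserved. If it is, the structural observation says the run is executed verbatim, and every $\mathrm{SNT}^t$-guard is insensitive to $\theta_i'$ because $d_i'>t$ keeps $i$ out of the ``$\hat d_k=t$'' sets the guards test; so $\mathrm{SNT}^t$ is literally unchanged. If $i$'s affordability flips, $|B'|$ changes by exactly one and we may move between adjacent cases; the delicate sub-case is landing in Case~III. There, since $i$ is still a non-expiring bid in the residual $b^t$, the ``all bids expire'' disjunct fails, and being in the Case~I-a branch (``not all asks expire'') is preserved --- it concerns asks only and the residual ask set is unchanged --- so $\mathrm{SNT}^t=\emptyset$, which agrees with $i$ already being excluded from $\mathrm{SNT}^t$ in Case~I-a (there $i$ is the final-round good buyer, or a bid drawn before it, so $i\in\mathtt{checked}_B$). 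The reverse move, from Case~III up to Case~I when $i$ raises its value, is handled by checking that the Case~I rule also reduces, for this particular $i$, to ``$i\in\mathrm{SNT}^t$ iff all asks expire'', matching Case~III; and when $i\in\mathrm{NT}^t$ already lies in the never-drawn part $b^t\setminus\mathtt{checked}_B$, its affordability is irrelevant to which bids are drawn, hence to $\mathtt{checked}_B$ and $b^t$, so nothing changes.

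For condition~(b), given $i\in\mathrm{SNT}^t$ with $\hat d_i>t$ I must show $\{j\in\mathrm{SNT}^t:\ j\ne i,\ d_j>t\}$ is unaffected by re-reports of $i$ with $d_i'>t$, and even by deleting $i$ from the market. In the ``whole-side'' branches ($\mathrm{SNT}^t$ equal to $b^t$, $s^t$, or $b^t\cup s^t$) the surviving set is fixed by the case and by the expiration pattern of the two sides, both insensitive to $i$'s value; and for $i\in\mathrm{SNT}^t$ one shows the case itself does not change when $i$ is deleted, because membership of such an $i$ in $\mathrm{SNT}^t$ already forces $i$ to be non-affordable or never-drawn, so by the structural observation ``$i$ present'' and ``$i$ absent'' induce the same run. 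In the Case~I-a branch, $i\in\mathrm{SNT}^t=b^t\setminus\mathtt{checked}_B$ forces $i$ to be never drawn, so its value is irrelevant to the good buyers and to $\mathtt{checked}_B$, and deleting $i$ removes precisely $i$ from the residual $b^t$ and nothing else; hence $\mathrm{SNT}^t$ loses exactly $i$.

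I expect the main obstacle to be the bookkeeping in the affordability-flip sub-case of~(a) and the deletion sub-case of~(b): one must pin down exactly how $\min(|B'|,|S'|)$, the contents of $\mathtt{checked}_B/\mathtt{checked}_S$, and the residual $b^t,s^t$ respond to a one-agent change, and then verify that the Case~I-a guard ``not all asks expire'' is precisely what keeps~(a) alive when we fall into Case~III, while the ``never-drawn $\equiv$ absent'' property is precisely what keeps~(b) alive under deletion. Making these two hinges line up across all the Case~I\,$\leftrightarrow$\,Case~III and Case~III\,$\leftrightarrow$\,Case~II transitions, and their mirror images, is the real content of the argument.
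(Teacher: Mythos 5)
Your overall route is the same as the paper's: fix the realized randomness, classify the terminal branch of {\sc Match} (Cases I, II, III together with their departure-based guards), and check conditions (a) and (b) branch by branch, with the real work in how one agent's re-report can move the run between Case I/II and Case III. Your hinges are the right ones and match the paper's sub-case analysis (I-1/I-2/I-3 and III-1--III-4): exactly $\min(|B'|,|S'|)$ pairs match; for a non-expiring agent that is the only affordable bid, the Case~I guard collapses to ``all asks expire,'' which agrees with Case~III; and an agent in $b^t\setminus\mathrm{checked}_B$ behaves exactly as if absent.

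One step of your condition-(b) argument fails as stated, though. For the whole-side branches you assert that membership of $i$ in $\mathrm{SNT}^t$ ``already forces $i$ to be non-affordable or never-drawn,'' so that deleting $i$ leaves the run unchanged. That is false: when Case~I fires with its guard true (e.g.\ all residual asks expire, or some expiring affordable buyer exists), $\mathrm{SNT}^t=b^t$ contains the designated buyer $i^\ast$, which is affordable, drawn, and may have $\hat d_{i^\ast}>t$; deleting $i^\ast$, or dropping its value below $p^t$, does change the run --- drawing continues past its position and may terminate in Case~III --- so the ``present $\equiv$ absent'' device does not apply to it. The conclusion still holds, but it must be routed through the same transition analysis you use for (a): if another affordable bid is found, the run stays in Case~I and the guard is still satisfied (it never depends on $i^\ast$'s value, since $i^\ast$ is non-expiring); if instead the run falls to Case~III, the first disjunct of the Case~I guard must have been false (any expiring affordable buyer would be a surviving affordable bid), so the guard held via ``all asks expire,'' the Case~III guard holds for the same reason, $\mathrm{SNT}^t$ becomes $b^t\cup s^t$, and the asks thereby added all expire --- so the set of surviving \emph{other} agents is unchanged, as required. (Relatedly, your ``structural observation'' is overstated: $\mathrm{checked}_B$, $\mathrm{checked}_S$ and the residual sets depend on which unaffordable agents are present, not only on $B'$, $S'$, $\omega$; what you actually use --- that a value change preserving affordability leaves the run verbatim under the fixed priority order --- is the correct, weaker statement.)
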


\begin{theorem}
Procedure {\sc Match} defines a well-defined matching rule
and a valid strong no-trade construction.
\end{theorem}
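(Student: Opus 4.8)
The plan is to establish the theorem in two parts. The claim that procedure {\sc Match} defines a valid strong no-trade construction is exactly Lemma~\ref{lem:dp5b}, which we invoke directly. So the remaining work is to verify that {\sc Match} defines a \emph{well-defined matching rule}, i.e. (by the definition in Section~\ref{sec:sm}) that the underlying single-period rule $M_{mr}=(\pi_{mr},x_{mr})$ implicitly specified by the matched/lose sets and the trading price $p^t$ is strongly truthful, no-deficit, individual-rational, and strong-feasible. The key observation is that the price $p^t$ output by {\tt determineprice}$(H^t)$ depends only on the history $H^t$ and not on any active bid or ask, so the matching performed in the current period is precisely a variant of {\sc SimpleMatch}: repeatedly draw bids and asks uniformly at random, accept a bid iff $b_k\geq p^t$ and an ask iff $s_k\geq -p^t$, and form a matched pair whenever both an acceptable bid and an acceptable ask have been found. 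I would first argue this equivalence carefully, noting that the bookkeeping additions in {\sc Match} (the sets \textbf{checked}$_B$, \textbf{checked}$_S$, \textbf{lose}, $\mathrm{NT}^t$, $\mathrm{SNT}^t$ and the case analysis I/II/III) only \emph{record} information about losers and the no-trade classification --- they do not alter which bids and asks are accepted into \textbf{matched}.

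Given that equivalence, the four properties follow by essentially the same reasoning given for {\sc SimpleMatch} right after Figure~\ref{fig:sm}. For \textbf{strong truthfulness}: for any fixed random event $\omega$ (the sequence of uniform draws), the price $p^t$ is independent of the agent's own report, and the event that a given bid is accepted is the monotone threshold event $b_k\geq p^t$ (resp. $s_k\geq -p^t$ for an ask); an accepted agent's payment is exactly $p^t$ (resp. $-p^t$), again independent of its reported value. Hence reporting truthfully is a dominant strategy for every realization of $\omega$, which is strong truthfulness. For \textbf{individual-rationality}: a matched buyer has $b_k\geq p^t$, so utility $b_k-p^t\geq 0$; a matched seller has $s_k\geq -p^t$, i.e. payment $-p^t\leq s_k\leq 0$ received, so utility $s_k-(-p^t)=s_k+p^t\geq 0$; for all $\omega$. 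For \textbf{no-deficit}: each matched pair contributes $p^t$ collected from the buyer and $-p^t$ paid to the seller, netting zero, and unmatched agents pay nothing, so total payments are $\geq 0$ (in fact $=0$) for all $\omega$. For \textbf{strong-feasibility}: pairs are added to \textbf{matched} only as balanced (one buyer, one ask) pairs, so the number of accepted bids equals the number of accepted asks for every $\omega$; I should also note the inner {\tt while} loop terminates (each iteration enlarges \textbf{checked}$_B$ or \textbf{checked}$_S$, which are bounded by $b^t$, $s^t$) and the outer loop terminates (each matched pair strictly shrinks $b^t$ and $s^t$), so the rule is well defined as a function.

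The main obstacle is not any single property but the precise argument that the elaborate control flow of Figures~\ref{fig:matcha}--\ref{fig:matchb} really is equivalent, on the \textbf{matched} set, to the simple randomized threshold-matching procedure --- in particular checking that in the termination branch (when the inner loop exits with $i=0$ or $j=0$) no further acceptable bid--ask pair exists among the remaining active agents, so that stopping is correct, and that the subsequent case analysis I/II/III does not inadvertently move an agent into \textbf{matched}. Once that structural equivalence is pinned down, combining it with Lemma~\ref{lem:dp5b} yields the theorem; I would close by remarking that since every verification above is carried out pointwise in $\omega$, we obtain the ``strong'' (all-coin-flips) versions of truthfulness and feasibility as required for a well-defined matching rule.
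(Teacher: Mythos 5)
Your proposal is correct and follows essentially the same route as the paper: it delegates the validity of the strong no-trade construction to Lemma~\ref{lem:dp5b} and verifies truthfulness, no-deficit, individual-rationality and strong-feasibility directly from the structure of {\sc Match} (history-only price $p^t$, random selection order independent of reported values, pairs added to the matched set with equal and opposite payments no worse than the agents' values), which is exactly the paper's argument, merely dressed up as an equivalence with {\sc SimpleMatch}. The only nuance to tighten is your claim that acceptance is exactly the threshold event $b_k\geq p^t$: a bid meeting the price can still fail to match when no acceptable ask is found, so the correct statement (as in the paper) is that, given the price is met, whether the agent trades depends only on its selection order and the other side of the market, both independent of its own report --- which still yields strong truthfulness.
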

\begin{proof}~No-deficit, feasibility, and individual-rationality are immediate by
the construction of {\sc Match} since bids and asks are added to {\em
matched} in pairs, with the same payment, and only if the payment is
less than or equal to their value. Truthfulness is also easy to see:
the order with which a bid (or ask) is selected is independent of
its bid price, and the price it faces, when selected, 
is independent of its bid. If the price is less than or equal to
its bid, then whether or not it trades depends only on its order.
The rest of the claim follows from Lemma~\ref{lem:dp5b}.
\end{proof}

\begin{example}
(i) Bid $b^t=\{8\}$, ask $s^t=\{-6\}$, indexed $\{1,2\}$ 
and price $p^t=9$. The outer {\tt while} loop in Figure~\ref{fig:matcha}
terminates with $j=2$ and $i=0$ in Case II. The bid is marked as a
loser while $\mathrm{NT}^t=\{2\}$. If the bid will depart immediately,
then $\mathrm{SNT}^t=\{2\}$, otherwise $\mathrm{SNT}^t=\emptyset$.

(ii) Bid $b^t=\{8\}$, asks $s^t=\{-6,-7\}$, indexed
$\{1,2,3\}$, and price $p^t=9$. Suppose that ask 2 is selected before ask 3 
in the outer {\tt
while} loop. Then the loop terminates with $j=2$ and $i=0$ in Case II
and $\mathrm{NT}^t=\{2,3\}$. Suppose the bid and asks leave
the market later than this period. Then $\mathrm{SNT}^t=\{3\}$ because
$\mathrm{checked}_S=\{2\}$.

(iii) Bid $b^t=\{8\}$ and ask $s^t=\{-10\}$, indexed $\{1,2\}$, price
$p^t=9$ and both the bid and the ask is patient. The outer {\tt while} loop
terminates with $i=0$ and $j=0$ in Case III so that
$\mathrm{NT}^t=\{1,2\}$. However, $\mathrm{SNT}^t=\emptyset$.
\end{example}

\vspace{-0.1cm}
\subsection{Competition-Based Matching Rules}
\label{sec:compet}

Each one of these rules determines which bids match in the current
period through price competition between the active bids.
We present three variations: {\em McAfee, Windowed-McAfee and
Active-McAfee}. The latter two rules are hybrid rules in that 
they leverage history of past offers, in smoothing prices generated
by the competition-based matching rules.

\label{sec:mcafee}
\paragraph{McAfee:} Use the static DA protocol due to McAfee
as the  matching rule. Let $B$ denote the set of bids and $S$ denote
the set of asks. If $\min(|B|,|S|)<2,$ then there is no trade.
Otherwise, first insert two
dummy bids with value $\{\infty,0\}$ and two dummy asks with value
$\{0,-\infty\}$ into the set of bids and asks. Let $b_0\geq b_1\geq
\ldots\geq b_m$ and $s_0\geq s_1\geq \ldots \geq s_{n}\ldots$  
denote the bid and ask values with $(b_0,s_0)$ denoting dummy pair
$(\infty,0)$ and $(b_m,s_n)$ denoting dummy pair $(0,-\infty)$
and ties otherwise broken at random. Let $m\geq 0$ index the last pair
of bids and asks to  
clear in the efficient trade, such that $b_m+s_m\geq 0$ and
$b_{m+1}+s_{m+1}<0$. When $m\geq 1,$ consider the following two 
cases:
\begin{itemize}
\item (Case I) If price
$p_{m+1}=\frac{b_{m+1}-s_{m+1}}{2}\leq b_m$ and $-p_{m+1}\leq s_m$ then
the first $m$ bids and asks trade and payment $p_{m+1}$ is collected
from each winning buyer and made to each winning seller.
\item (Case II) Otherwise, the first $m-1$ bids and asks trade and 
payment $b_m$ is collected from
each winning buyer and payment $-s_m$ is made to each winning seller.  
\end{itemize}

To define $\mathrm{NT}^t$, replace a bid that does not trade with 
a bid reporting a very large value and see whether this bid
trades. To determine whether trade is possible for an ask that does not
trade: replace the ask with an ask reporting value $\epsilon>0$, some small
$\epsilon$. 
Say that there is a {\em quorum} if and only if there are at least
two bids and at least two asks, i.e. $\min(|b^t|,|s^t|)\geq 2$.
Define {\bf strong no-trade} as follows:
set $\mathrm{SNT}^t:=\mathrm{NT}^t=b^t\cup s^t$ when there
is no quorum and $\mathrm{SNT}^t:=\emptyset$ otherwise.
\begin{lemma}
\label{lem:dp8}
For any bid $b_i$ in the McAfee matching rule, then for any other
bid (or ask) $j$ there is some bid $\hat{b}_i$ that will make
trade possible for bid (or ask) $j$ when there is a quorum.
\end{lemma}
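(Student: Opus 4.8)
The plan is to prove the claim by an explicit construction, splitting on whether the target agent $j$ is a bid or an ask. The quorum hypothesis is invoked only to guarantee that, after bid $i$ has been reset to the new value $\hat{b}_i$, there remain at least two bids and at least two asks; resetting a value never changes these counts. Throughout I track the position of $j$ in the sorted order that McAfee uses, together with the efficient-trade cutoff index $m$, and in each case I place $j$ at a near-extreme rank and then argue that McAfee's winning set reaches that rank.

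\emph{Case: $j$ is a bid.} Here $i$ and $j$ are on the same side. I would take the trial value $\hat{w}_j$ and the new value $\hat{b}_i$ both very large, with $\hat{w}_j > \hat{b}_i$ and both exceeding every ask cost. Then $j$ occupies the top real bid slot and $i$ the second, so the two lowest-index bid--ask pairs both have nonnegative sum and the cutoff satisfies $m \ge 2$. Whichever of McAfee's two cases applies, the winning buyers are the first $m$ (Case I) or the first $m-1 \ge 1$ (Case II) in the order, and $j$ is first; hence $j$ trades. This case needs no price bookkeeping beyond ``$m \ge 2$''.

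\emph{Case: $j$ is an ask.} Now I would take $\hat{b}_i$ very large (larger than every ask cost and every bid value) and the trial ask value $\hat{w}_j$ close to $0$, so that $j$ is the most competitive real ask: $j$ occupies the top real ask slot and the pair $(\hat{b}_i, \hat{w}_j)$ has large positive sum, so $m \ge 1$. If $m \ge 2$, we are done exactly as in the bid case. The delicate situation is $m = 1$, where McAfee's Case II would clear only the zeroth (dummy-driven) pair and exclude $j$, so the construction must force Case I. Since bid $i$ is huge, the upper price test $p_{m+1} \le b_m$ is automatic, and it remains to pick $\hat{w}_j$ so that the lower test $-p_{m+1} \le s_m$ holds, i.e. $\hat{w}_j \ge -p_{m+1}$. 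Being in the sub-case $m = 1$ means precisely that the rank-$2$ pair has negative sum, and a short computation using that inequality shows that the interval of admissible $\hat{w}_j$ (those lying between $-p_{m+1}$ and the next ask value below $0$) is non-empty; for such a choice Case I holds and the single cleared pair is $(\hat{b}_i, \hat{w}_j)$, so $j$ trades.

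The main obstacle is exactly this $m = 1$ sub-case on the sell side: one must keep $\hat{w}_j$ high enough that $j$ stays at the top real ask rank yet low enough that $p_{m+1}$ clears the lower price bound, and verify these requirements are compatible --- which they are, but only because of the defining inequality of the sub-case. Everything else is routine bookkeeping on the sorted order: ties can be broken by a generic perturbation, since only the \emph{existence} of a winning report for $j$ is claimed; quoras strictly larger than two only help; and a zero-cost competing ask pushes us out of the $m = 1$ sub-case. Finally, the buy-side case is, up to the sign convention on ask values, a mirror of the easy half of the sell-side argument.
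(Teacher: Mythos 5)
Your construction is correct and is essentially the paper's own argument: the paper likewise pushes bid $i$'s value high enough, and the explicit inequalities in its proof (e.g. $a \geq (c-d)/2$ and $b \geq -(a-d)/2$ in its without-loss-of-generality three-bid/three-ask case analysis) are precisely your Case I price tests in the delicate $m=1$ sub-case on the sell side, while the buy-side target is handled, as you do, by making the cutoff reach the target's slot. The only real difference is presentational: the paper fixes a canonical $3\times 3$ order book and enumerates by the slot of the modified bid, whereas you split by whether the target $j$ is a bid or an ask and carry out the dummy-agent and general-quorum bookkeeping explicitly.
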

\begin{proof}~Without loss of generality, suppose there are three bids and three
asks. Label the bids $(a,c,e)$ and the asks $(b,d,f)$, both ordered 
from highest to lowest so that $(a,b)$ is the most competitive bid-ask
pair. Proceed by case analysis on bids. The
analysis is symmetric for asks and omitted. Let
$\mathit{tp}(i)\in\{0,1\}$ denote whether or not trade is possible for
bid $i$, so that $i\in\mathrm{NT}^t \ \Leftrightarrow \ \mathit{tp}(i)=0$.
For bid $a$: when $b\geq
-(a-d)/2$ then $\mathit{tp}(c)=\mathit{tp}(e)=1$ and this inequality
can always be satisfied for a large enough a; when 
$a\geq (c-d)/2$ then $\mathit{tp}(b)=1$ and when $a\geq (c-b)/2$ then
$\mathit{tp}(d)=\mathit{tp}(f)=1$, and both of these inequalities
are satisfied for a large enough $a$. For bid $c$: when $b\geq
-(c-d)/2$ then $\mathit{tp}(a)=1$ and when, in addition, $c>a$, then
$\mathit{tp}(e)=1$ and each one of these inequalities are satisfied for a large
enough $c$; similarly when $c\geq (a-d)/2$ then $\mathit{tp}(b)=1$ and when $c\geq
(a-b)/2$ then $\mathit{tp}(d)=\mathit{tp}(f)=1$. Analysis for bid $e$
follows from that for bid $c$. 
\end{proof}
\begin{lemma}
The construction for strong no-trade is valid and 
there is no valid strong no-trade construction that 
includes more than one losing bid or ask that will not
depart in the current period for any period in which
there is a quorum.
\end{lemma}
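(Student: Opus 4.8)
The plan is to prove the two assertions separately. For \emph{validity} I would start from the structural fact underlying everything: the McAfee rule executes no trade at all unless $\min(|b^t|,|s^t|)\ge 2$, so in a period without a quorum trade is impossible for \emph{every} active bid and ask at any price, i.e.\ $\mathrm{NT}^t=b^t\cup s^t$. Hence $\mathrm{SNT}^t=\mathrm{NT}^t$ is a well-typed subset of $\mathrm{NT}^t$ in the no-quorum case, and $\mathrm{SNT}^t=\emptyset\subseteq\mathrm{NT}^t$ trivially when there is a quorum (it does not matter that $\mathrm{NT}^t$ may still be nonempty then). The next observation is specific to \chain\ under this construction: an agent survives a period $t'$ only by \emph{losing} it while lying in $\mathrm{SNT}^{t'}$, and $\mathrm{SNT}^{t'}\ne\emptyset$ requires that $t'$ has no quorum; moreover a no-quorum period produces no trades and, since in it every losing bid or ask is placed in $\mathrm{SNT}^{t'}$, no price-outs either. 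Consequently, if $i$ is active in period $t$ then every period in $[\hat{a}_i,t-1]$ has no quorum, and across all those periods the active sets change only by the removal of departing agents.

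From this I obtain the key invariance: the active bids and asks in period $t$ \emph{other than} $i$ are determined by the agents active just before period $\hat{a}_i$ (which does not depend on $i$'s report, nor on whether $i$ is present at all, given a reported arrival $a'_i\ge\hat{a}_i$) together with the departure times, and are therefore unaffected by $i$'s reported value, by a later reported arrival, or by a reported departure $d'_i>t$; the same argument shows every period in $[a'_i,t-1]$ remains a no-quorum period under any such deviation, so $i$ is itself never matched or priced out before $t$ and is still active in $t$ (when $a'_i\le t$). Condition~(a) now follows, since whether $i\in\mathrm{SNT}^t$ is exactly ``$t$ has no quorum and $i$ is active in $t$'', and both conjuncts are invariant: the quorum status at $t$ depends only on $|b^t|,|s^t|$, the count of the others is invariant by the above, and $i$'s own contribution to its own side is fixed. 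Condition~(b) follows similarly: for $i\in\mathrm{SNT}^t$ there is no quorum at $t$, so $\mathrm{SNT}^t=b^t\cup s^t$ and $\{k\in\mathrm{SNT}^t:k\ne i,\hat{d}_k>t\}$ is just the non-departing active bids and asks other than $i$, which is unchanged by any admissible report of $i$ and by $i$'s removal (removing or revaluing $i$ keeps that side's count below two if it was already, and leaves the other side untouched, so $t$ stays a no-quorum period).

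For the \emph{optimality} claim I would argue by contradiction: suppose some valid construction places two distinct losing bids/asks $i,j$, both with reported departure after $t$, into $\mathrm{SNT}^t$ in some run and some period $t$ with a quorum. Since $\mathrm{SNT}^t\subseteq\mathrm{NT}^t$ for any valid construction, both $i$ and $j$ lie in $\mathrm{NT}^t$. Apply Lemma~\ref{lem:dp8} when $i$ is a buyer, or its analogue with the roles of bids and asks exchanged (valid by the symmetry of the McAfee rule) when $i$ is a seller: there is a value-only misreport of $i$ — so its reported arrival and departure, in particular $d'_i>t$, are untouched, and the quorum at $t$ is preserved since the participants on each side are unchanged as a set — under which trade becomes possible for $j$, i.e.\ $j\notin\mathrm{NT}^t\supseteq\mathrm{SNT}^t$. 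Then $j$ has dropped out of $\{k\in\mathrm{SNT}^t:k\ne i,\hat{d}_k>t\}$, contradicting validity condition~(b) for $i$. Hence at most one losing, non-departing bid or ask can be placed in $\mathrm{SNT}^t$ whenever there is a quorum, so the construction in the statement (which places none) is optimal up to a single bid or ask.

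The step I expect to be the main obstacle is making the misreport arguments fully rigorous when a deviation by $i$ can \emph{propagate through earlier periods} and change the active sets $b^t,s^t$ that the period-$t$ matching rule sees. In the validity direction this is precisely what the no-quorum/inertness observation above neutralizes. In the optimality direction one must either confirm that the chosen value misreport of $i$ leaves the period-$t$ active composition intact, so that Lemma~\ref{lem:dp8} applies verbatim, or else note that if it does \emph{not}, then $i$'s report has already altered which agents are active in period $t$, which is itself the kind of change in the $\mathrm{SNT}^t$ designation of other agents that condition~(b) forbids; either way a contradiction is reached, but the case split and the exact scope of the reports condition~(b) quantifies over (notably whether $i$ must remain present in period $t$) is where the care is needed.
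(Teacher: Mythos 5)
Your maximality half is essentially the paper's own argument: with a quorum, two surviving losers $i,j$ with reported departure after $t$ would violate condition (b) because, by Lemma~\ref{lem:dp8} (and its bid/ask-symmetric counterpart), a value-only misreport by $i$ makes trade possible for $j$, ejecting $j$ from $\mathrm{NT}^t\supseteq\mathrm{SNT}^t$ while leaving the quorum and $i$'s temporal report intact; that is exactly how the paper concludes.

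The gap is in your validity half. You try to establish a cross-period invariance --- ``if $i$ is active in period $t$ then every period since its arrival had no quorum, hence the active sets change only by the removal of departing agents, hence the period-$t$ composition other than $i$ is independent of $i$'s report'' --- and then read conditions (a) and (b) off that. But new agents arrive during those no-quorum periods, and in the McAfee instantiation their admission is decided by counterfactual prices in pre-arrival periods: hypothetically inserting the newcomer into such a period can create a quorum there, and then whether the newcomer would win, and at what price, depends on the bids present, including $i$'s value. So $i$'s report can change which later arrivals are admitted and hence the composition (even the quorum status) of period $t$; for instance, with a single bid $i$ and asks $-3,-5$ in period $t_0$, a bid of value $4$ arriving at $t_0+1$ whose earlier-arrival window covers $t_0$ is rejected at admission when $\hat{w}_i=10$ but admitted when $\hat{w}_i=3$. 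The invariance you assert is therefore unsupported (and fails in such examples), so your derivations of (a) and (b) do not go through as written. The repair is to treat the validity conditions as per-period properties of the $\mathrm{SNT}$ construction as a function of the current active bids and asks --- this is how the paper's worked examples and the appendix proof of Lemma~\ref{lem:dp5b} read them --- under which validity is immediate: with no quorum every active agent is in $\mathrm{NT}^t=\mathrm{SNT}^t$, and neither a change in $i$'s value nor its removal can create a quorum, so the designation of $i$ and of all others is untouched; with a quorum $\mathrm{SNT}^t=\emptyset$ and condition (b) is vacuous. The cross-period statement you were reaching for is the paper's Lemma~\ref{lem:new}, which is deduced \emph{from} condition (b); trying to prove it first, from McAfee-specific ``inertness,'' is both unnecessary for this lemma and, because of admission of later arrivals, not actually true in the form you state.
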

\begin{proof}~To see that this is a valid construction, 
notice that strong no-trade
condition (a) holds since any bid (or ask) is always in both
$\mathrm{NT}^t$ and $\mathrm{SNT}^t$. Similarly, condition (b) 
trivially holds (with the other bids and asks remaining in
$\mathrm{SNT}^t$ even if any bid is not present in the market).
To see that this definition is essentially maximal, consider now
that $\min(|b^t|,|s^t|)\geq 2$. For contradiction, suppose that 
two losing bids $\{i,j\}$ with departure after the current period are designated
as strong no-trade. But, strong no-trade condition (b) fails because
of Lemma~\ref{lem:dp8} because either bid could have submitted an
alternate bid price that would remove the other bid from
$\mathrm{NT}^t$ and thus necessarily also from $\mathrm{SNT}^t$.
\end{proof}

The construction offered for $\mathrm{SNT}^t$ 
cannot be extended even to include one
agent selected at random from the set $i\in\mathrm{NT}^t$ that will not 
depart immediately, in the case of a quorum. Such a construction would
fail strong no-trade condition (b) when the set $\mathrm{NT}^t$
contains more than one bid (or ask) that does not depart in the
current period, because bid $i$'s absence from the market would cause
some other agent to be (randomly) selected as $\mathrm{SNT}^t$.
\vspace{0.3cm}

\paragraph{Windowed-McAfee:} This myopic matching rule is
parameterized on window size $\Delta$.
Augment the active bids and asks
with the bids and asks introduced to the history $H^t$ 
in periods $t'\in \{t-\Delta+1,\ldots,t\}$.
Run {\em McAfee} with this augmented set of bids and asks and determine which
of these bids and asks would trade. Denote this candidate set $C$. 
Some active agents identified as matching in $C$ 
may not be able to trade in this period because $C$ can also contain
non-active agents.

Let $B'$ and $S'$ denote, respectively, the {\em active}
bids and {\em active} asks in set $C$. Windowed-McAfee then proceeds
by picking a random subset of $\min(|B'|,|S'|)$ bids and asks
to trade. When $|B'|\neq |S'|,$ then
some bids and asks will not trade.
\vspace{0.3cm}

Define {\bf strong no-trade} for this matching rule as:
\vspace{0.2cm}

(i) if there are no active asks but active bids, then
$\mathrm{SNT}^t:=b^t$

(ii) if there are no active bids but active asks, then
$\mathrm{SNT}^t:=s^t$

(iii) if there are fewer than 2 asks or fewer than 2 bids in the {\em augmented}
bid set, then $\mathrm{SNT}^t:=b^t\cup s^t$,
\vspace{0.2cm}

\noindent and otherwise set $\mathrm{SNT}^t:=\emptyset$.
In all cases it should be clear that $\mathrm{SNT}^t\subseteq
\mathrm{NT}^t$. 

\begin{lemma}
The strong no-trade construction for windowed-McAfee is valid.
\end{lemma}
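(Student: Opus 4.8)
The plan is to verify the two requirements in the definition of a valid strong no-trade construction, conditions (a) and (b); the inclusion $\mathrm{SNT}^t\subseteq\mathrm{NT}^t$ is immediate and already noted, since in each of the regimes (i)--(iii) the market is degenerate on one side (no active asks, no active bids, or an augmented set too small for McAfee to clear any trade), so trade is impossible for every active bid and ask, and outside (i)--(iii) we set $\mathrm{SNT}^t=\emptyset$. The observation driving both checks is that the windowed-McAfee choice of $\mathrm{SNT}^t$ is a function \emph{only of cardinalities}: it depends on whether there are zero active asks, whether there are zero active bids, and whether the augmented bid set and the augmented ask set each contain at least two offers -- the reported \emph{values} of the active bids and asks, and their reported arrival and departure times, play no role. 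A second easy observation is that whenever $\mathrm{SNT}^t\neq\emptyset$, i.e.\ in each of cases (i), (ii), (iii), one has $\mathrm{SNT}^t=b^t\cup s^t$, the entire set of active offers. Following the tr-DA examples earlier, I would treat conditions (a) and (b) locally, as statements about the map taking the period-$t$ state $(H^t,b^t,s^t,E^t)$, with agent $i$'s entry varied or deleted, to the constructed sets.

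For condition (a) I would take $i\in\mathrm{NT}^t$ with $\hat{d}_i>t$ and an alternate report $\theta'_i\neq\hat{\theta}_i$ with $d'_i>t$. If $a'_i\le t$, then $i$ stays active at $t$ and, since misreports lie in $\Theta_i$, stays on the same side of the market, so the numbers of active bids, active asks, and offers in each augmented set are all unchanged; hence the applicable case among (i)--(iii) (or ``none'') is unchanged, and with it whether $i\in\mathrm{SNT}^t$. If instead $a'_i>t$ then $i$ is simply absent at $t$ and trivially not in $\mathrm{SNT}^t$; this is the harmless direction.

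For condition (b) I would take $i\in\mathrm{SNT}^t$ with $\hat{d}_i>t$, so we are in one of cases (i)--(iii) and $\mathrm{SNT}^t=b^t\cup s^t$; thus $\{\,j:j\in\mathrm{SNT}^t,\ j\neq i,\ \hat{d}_j>t\,\}$ is just the set of active agents other than $i$ with reported departure later than $t$. Varying $i$'s value (keeping it active and on its side) leaves every relevant cardinality, hence the applicable case, hence $\mathrm{SNT}^t$ itself, unchanged. The substantive step is to show that \emph{deleting} $i$ does not drop any other active agent from $\mathrm{SNT}^t$: in case (i) there are still no active asks, so any remaining active bid is again in a case-(i) market and stays in $\mathrm{SNT}^t$ (and if $i$ was the only active bid there is no other agent to worry about); case (ii) is symmetric; in case (iii) removing an offer only shrinks the augmented sets, so the ``fewer than two offers on some side'' condition persists and every remaining active agent stays in $\mathrm{SNT}^t$. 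Hence $\{\,j:j\in\mathrm{SNT}^t,\ j\neq i,\ \hat{d}_j>t\,\}$ is unchanged, and in particular independent of whether $i$ is present, which is condition (b).

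The main obstacle, and the one point requiring genuine care, is case (iii) of condition (b): one must confirm that deleting an active agent cannot push \emph{both} sides of the augmented set up to two offers -- impossible, since deletion only removes offers -- so a side that was below the quorum stays below it; the corresponding points for cases (i) and (ii) are immediate. Together, conditions (a) and (b) give validity of the construction.
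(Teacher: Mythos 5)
Your proposal is correct and follows essentially the same route as the paper: a case analysis of (i)--(iii) showing that the designation depends only on which side of the active market is empty and on whether the augmented set reaches the McAfee quorum (pure cardinality conditions, hence value-independent), and that removing an agent can only keep these degeneracies in force, so the remaining active agents all stay in $\mathrm{SNT}^t$. The only differences are cosmetic: the paper dispatches case (iii) by citing the validity of the strong no-trade construction for the standard McAfee rule, whereas you argue it directly via monotonicity of the cardinalities under deletion, and you are somewhat more explicit about the transitions between cases when deleting the sole offer on one side -- a detail the paper glosses over but which resolves exactly as you say.
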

\begin{proof}~That this is a valid SNT criteria in case (iii) follows immediately
from the validity of the SNT criteria for the standard McAfee matching
rule. Consider case (i). Case (ii) is symmetric and omitted. 
For strong no-trade condition (a), we see that all bids
$i\in\mathrm{NT}^t$ and also $i\in\mathrm{SNT}^t$, and 
whether or not they are designated strong no-trade is independent of
their own bid price but simply because there are no active
asks. Similarly,  for strong no-trade condition (b), we see that all
bids (and never any asks) are in $\mathrm{SNT}^t$ whatever the bid
price of any particular bid (and even whether or not it is present).
\end{proof}

Empirically, we find that the efficiency 
of Windowed-McAfee is sensitive to the size of $H^t$, but that frequently
the best choice is a small window size that includes only the active bids.
\vspace{0.5cm}

\paragraph{Active-McAfee:}   
Active-McAfee augments the active bids and asks to include all
unexpired but traded or priced-out offers.
It proceeds as in Windowed-McAfee given this augmented bid set.

\vspace{-0.1cm}
\subsection{Extended Examples}

We next provide two stylized examples to demonstrate the matching performed 
by \chain\ using both a price-based and a competition-based matching rule.
For both examples, we assume a maximal patience of $K=2$. Moreover,
while we describe when \chain\ determines that a bid or an ask
trades, remember that a winning buyer is not allocated the good until
its reported departure and a winning seller does not receive payment
until its reported departure.

\begin{example}
Consider \chain\ using an adaptive, price-based matching rule. The particular
details of how prices are determined are not relevant. Assume that the prices
in periods 1 and 2 are $\{p^1,p^2\}=\{8,7\}$ and the maximal patience is three
periods. Now consider period 3 and suppose
that the order book is empty at the end of period 2 and that the bids and asks
in Table~\ref{tab:ex1} arrive in period 3.
\begin{table}[h!]
\setlength{\tabcolsep}{1.6mm}
\centerline{\begin{tabular}{ccccccc|ccccccc}
\multicolumn{7}{c|}{\bf B} & \multicolumn{7}{c}{\bf S} \\
$i$ & $\hat w_i$ & $\hat d_i$ & $\hat d_i\!-\!K$ & $q_i$ & $p_i$ & SNT? & 
$i$ & $\hat w_i$ & $\hat d_i$ & $\hat d_i\!-\!K$ & $q_i$ & $p_i$ & SNT?
\\ \hline
$b_1$* & 15 & 4 & 2 & 7 & 7 & N  & $s_1$ & -1 & 4 & 2 & -7 & n/a & Y
\\
$b_2$* & 10 & 3 & 1 & 8 & 8 & N & $s_2$* & -3 & 5 & 3 & $-\infty$ & -6.5 & N 
\\
\sout{\hspace{1mm}$b_3$\hspace{1mm}} & 7 & 3 & 1 & 8 & n/a & N & 
$s_3$ & -4 & 3 & 1 & -7 & n/a & Y \\
$b_4$ & 6 & 5 & 3 & $-\infty$ & n/a & N &
$s_4$* & -5 & 4 & 2 & -7 & -6.5 & N\\
& & & & & & &
$s_5$ & -10 & 5 & 3 & $-\infty$ & n/a & Y\\
\end{tabular}} 
\caption{\label{tab:ex1} \small Bids and asks that arrive in period 3. Bids $\{b_1, b_2\}$ match
with asks $\{s_2, s_4\}$ (as indicated with a *). Bid $b_3$ is
priced-out upon admission because $q_{b_3} > \hat w_{b_3}$ (indicated with a strike-through). The admission price is
$q_i$ and the payment made by an agent that trades is $p_i$. Column
`SNT?' indicates whether or not the bid or ask satisfies the strong
no-trade predicate. Asks $\{s_1, s_5\}$ survive into the next period
because they are in SNT and have $d_i>3$.}
\vspace{-0.3cm}
\end{table}

Bids $\{b_1, b_2, b_4\}$ and asks $\{s_1,..,s_5\}$ are admitted. Bid
$b_3$ is priced out because $q_{b_3} = \max(p^1, p^2,
-\infty)=\max(8,7,-\infty)=8 > \hat w_{b_3} = 7$ by
Eq.~(\ref{eqn:admit-price}). Note that $b_4$ and $s_5$ are
admitted despite low bids (asks) because they have maximal patience
and their admission prices are $-\infty$. Now, suppose that $p^3:=6.5$ is
defined by the matching rule and consider applying {\sc
Match} to the admitted bids and asks. 

Suppose that the bids are randomly ordered as $(b_4, b_2, b_1)$ and
the asks as $(s_4, s_2, s_1, s_3, s_5)$. Bid $b_4$ is picked first but
priced-out because $\hat w_{b_4}=6 < p^3=6.5$. Bid $b_2$ is tentatively accepted
($\hat w_{b_2}=10\geq p^3=6.5$) and then ask $s_4$ is accepted ($w_{s_4}=-5\geq
p^3=-6.5$). Bid $b_2$ is matched with ask $s_4$, with payment
$\max(q_{b_2},p^3)=\max(8,6.5)=8$ for $b_2$ by
Eq.~(\ref{eq:newpay}) and payment $\max(q_{s_4},p^3)=\max(-\infty,-6.5)=-6.5$
for $s_4$. Bid $b_1$ is then tentatively accepted ($15\geq 6.5$) and
then matched with ask $s_2$, which is accepted because $-3\geq
-6.5$. The payments are $\max(7,6.5)=7$ for $b_1$ and
$\max(-\infty,-6.5)=-6.5$ for $s_2$. 
Ask $s_3$ expires but asks $s_1$ and $s_5$ survive and are marked
$i\in \mathrm{SNT}$ in this period because they were never offered the
chance to match with any bid. These asks will be active in period 4. 

Note the role of the admission price in truthfulness. Had bid $b_1$
delayed arrival until period 4, its admission price would be
$\max(p^2,p^3,-\infty)=\max(7,6.5)=7$ and its payment in period 4 (if it
matches) at least 7. Similarly, had ask $s_4$ delayed arrival, then its
admission price would be $\max(-7,-6.5,-\infty)=-6.5$ and the maximal
payment it can receive in period 4 is 6.5.

\end{example} 

\begin{example}
Consider \chain\ using the McAfee-based matching rule with $K=3$ and with the same bids and asks arriving in
period 3. Suppose that the prices in periods 1 and 2 that would have
been faced by a buyer are $\{p^1_b, p^2_b\}=\{8,7\}$ and
$\{p^1_s,p^2_s\}=\{-7,-6\}$ for a seller. These prices are determined
by inserting an additional bid (with value $\infty$) or an additional
ask (with value 0) into the order books in each of periods 1 and 2. We
will illustrate this for period 3. Consider now the bids and asks
in period 3 in Table~\ref{tab:ex2}.
\begin{table}[h!]
\setlength{\tabcolsep}{1.6mm}
\centerline{\begin{tabular}{ccccccc|ccccccc}
\multicolumn{7}{c|}{\bf B} & \multicolumn{7}{c}{\bf S} \\
$i$ & $w_i$ & $d_i$ & $d_i\!-\!K$ & $q_i$ & $p_i$ & SNT? & 
$i$ & $w_i$ & $d_i$ & $d_i\!-\!K$ & $q_i$ & $p_i$ & SNT?
\\ \hline
$b_1$* & 15 & 4 & 2 & 7 & 7 & N  & $s_1$* & -1 & 4 & 2 & -6 & -4 & N
\\
$b_2$* & 10 & 3 & 1 & 8 & 8 & N & $s_2$* & -3 & 5 & 3 & $-\infty$ & -4 & N 
\\
\sout{\hspace{1mm}$b_3$\hspace{1mm}} & 7 & 3 & 1 & 8 & n/a & N & 
$s_3$ & -4 & 3 & 1 & -6 & n/a & N \\
$b_4$ & 6 & 5 & 3 & $-\infty$ & n/a & N &
$s_4$ & -5 & 4 & 2 & -6 & n/a & N\\
& & & & & & &
$s_5$ & -10 & 5 & 3 & $-\infty$ & n/a & N\\
\end{tabular}} 
\caption{\label{tab:ex2} \small Bids and asks that arrive in period 3. Bids $\{b_1, b_2\}$ match
with asks $\{s_1, s_2\}$ (as indicated with a *). Bid $b_3$ is
priced-out upon admission because $q_{b_3} > \hat w_{b_3}$. 
The admission price is
$q_i$ and the payment made by an agent that trades is $p_i$. Column
`SNT?' indicates whether or not the bid or ask satisfies the strong
no-trade predicate. No asks or bids survive into the next period.}
\vspace{-0.3cm}
\end{table}

As before bid $b_3$ is not admitted. The myopic matching rule now runs
the (static) McAfee auction rule on bids $\{b_1,b_2, b_4\}$ and asks
$\{s_1,..,s_5\}$. Consider bids and asks in decreasing order of value,
the last efficient trade is indexed $m=3$ with $\hat w_{b_4}+\hat w_{s_3}=6-4\geq 0$. 
But $p_{m+1}=(0-(-5))/2=2.5$ (inserting a dummy bid with value 0 as
described in Section~\ref{sec:mcafee}). Price $-p_{m+1}=-2.5>s_3=-4$
and this trade cannot be executed by McAfee. Instead, buyers
$\{b_1,b_2\}$ trade and face price $p^b_m=\hat w_{b_4}=6$ and sellers
$\{s_1,s_2\}$ trade and face price $p^s_m=\hat w_{s_3}=-4$. Bids $b_4$ and asks
$\{s_3,s_4,s_5\}$ are priced-out and do not survive into the next
round. Ultimately, payment $\max(q_{b_1},p^b_m)=\max(7,6)=7$ is collected
from buyer $b_1$ and payment $\max(q_{b_2},p^b_m)=\max(8,6)=8$ is
collected from buyer $b_2$. For sellers, payment $\max(-6,-4)=-4$ and
$\max(-\infty,-4)=-4$ for $s_1$ and $s_2$ respectively. 

The prices $p^3_b$ and $p^3_s$ that are used in 
Eq.~(\ref{eqn:admit-price}) to define the admission price for
bids and asks with arrivals in periods 4 and 5 are determined
as follows. For the buy-side price, we introduce an additional bid with
bid-price $\infty$. With this the bid values considered by McAfee would be
$(\infty,15,10,6,0)$ and the ask values would be $(-1,-3,-4,-5,-10)$, where
a dummy bid with value 0 is included on the buy-side. The last
efficient pair to trade is $m=4$ with $6-5\geq 0$ and
$p_{m+1}=(0-(-10))/2=5$, which satisfies this bid-ask pair. Therefore
the buy-side price, $p^3_b:=5$. On the sell-side, we introduce an
additional ask with ask-price $0$ so that the bid values considered by
McAfee are $(15,10,6,0)$ (again, with a dummy bid included) and the
ask values are $(0,-1,-3,-4,-5,-10)$. This time $m=3$ and the last efficient
pair to trade is $6-3\geq 0$. Now $p_{m+1}=(0-(-4))/2=2$ and
this price does not satisfy $s_2$, with $-p_{m+1}>s_2$ and price
$p^s_{m+1}=s_2=-3$ is adopted. Therefore the sell-side price,
$p^3_s:=-3$. 

Again, we can see that bidder 1 cannot improve its price by delaying
its entry until period 4. The admission price for the bidder would be
$\max(p^2_b,p^3_b)=\max(7,p^3_b)\geq 7$ and thus its payment in period
4, if it matches, will be at least 7.\footnote{We can check that
$p^3_b:=6$ in this case. Suppose that bidder 1 were not present in
period 3. Now
consider introducing an additional bid with value $\infty$ so that the bids 
values are $\{\infty,10,6,0\}$ (with a dummy bid) with ask values
$\{-1,-3,-4,-5,-10\}$. Then $m=3$ and $p_{m+1}=(0-(-5))/2=2.5$, which
does not support the trade between bid $b_4$ and ask
$s_3$. Instead, $p^b_m=\hat w_{b_4}=6$ is adopted, and we would have
$p^3_b:=6$. Of course, this is exactly the price determined by McAfee
for bid $b_1$ in period 3 when the bidder is truthful.} Similarly for
ask $s_1$, which would face admission price
$\max\{p^2_s,p^3_s\}=\max\{-6,-4\}=-4$ and can receive a payment of at
most 4 in period 4. We leave it as an exercise for the
reader to verify that $p^3_s=-4$ if ask $s_1$ delays its
arrival until period 4 (in comparison, $p^3_s=-3$ when ask $s_1$ is
truthful). 

\end{example}

Because the McAfee-based pricing scheme computes a price and clears the order
book following every period in which there are at least two bids and two asks,
the bid activity periods tend to be short in comparison to the
adaptive, price-based rules where orders can be kept active longer when
there is an asymmetry in the number of bids and asks in the market. 
In fact, one interesting artifact that occurs with adaptive, price-based
matching rules is that
the admission-price and $\mathrm{SNT}$ can perpetuate this kind of
bid-ask asymmetry.
Once the market has more asks than bids, $\mathrm{SNT}$ becomes likely for future
asks, but not bids.
Therefore, bids are much more likely than asks 
to be immediately priced out of the market
by failing to meet the admission price constraint.

\vspace{-0.2cm}
\section{Theoretical Analysis: Truthfulness, Uniqueness, and Justifying Bounded-Patience}
\label{sec:theory}

In this section we prove that \chain\ combined with a well-defined
matching rule and a valid strong no-trade construction generates a
truthful, no-deficit, feasible and individual-rational dynamic DA.
In Section~\ref{sec:nec}, we establish that  uniqueness of 
\chain\ amongst dynamic DAs that are constructed from single-period
matching rules as building blocks. In Section~\ref{sec:neg}, we 
establish the importance of the existence of a maximal bound on bidder patience
by presenting a simple 
environment in which no truthful, no-deficit DA can implement
even a single trade despite the number of efficient trades
can be increased without bound. 

\vspace{-0.1cm}
\subsection{The Chain Mechanism is Strongly Truthful}
\label{sec:truth}

It will be helpful to adopt a price-based interpretation of a valid
single-period matching rule. Given rule $M_{mr}$, define an
{\em agent-independent price}, $z_i(H^t,A^t\setminus i,\omega)\in
\mathbb{R}$ where $A^t=b^t\cup s^t$, such that for all $i$, all bids $b^t$,
all asks $s^t$, all history $H^t$, and all random events
$\omega\in\Omega$.
We have:
\begin{enumerate}
\item[] (A1) $\hat{w}_i-z_i(H^t,A^t\setminus i,\omega)>0\Rightarrow
\pi_{mr,i}(H^t,b^t,s^t,\omega)=1$, and $\hat{w}_i-z_i(H^t,A^t\setminus i,\omega)<0\Rightarrow
\pi_{mr,i}(H^t,b^t,s^t,\omega)=0$
\item[] (A2) payment $x_{mr,i}(H^t,b^t,s^t,\omega)=z_i(H^t,A^t\setminus
i,\omega)$ if $\pi_{mr,i}(H^t,b^t,s^t,\omega)=1$ and
$x_{mr,i}(H^t,b^t,s^t,\omega)=0$ otherwise
\end{enumerate}

The interpretation is that there is an agent-independent price,
$z_i(H^t,A^t\setminus i,\omega)$, that is at least $\hat{w}_i$ when
the  agent loses and  no greater than $\hat{w}_i$ otherwise.
In particular, $z_i(H^t,A^t\setminus i,\omega)=\infty$ when
$i\in\mathrm{NT}^t$. Although an agent's price is only 
explicit in a matching rule when the agent trades, it is well known
that such a price exists for any truthful, single-parameter mechanism;
e.g., see works by Archer and Tardos~\citeyear{archer01} and
Goldberg and Hartline~\citeyear{goldberg03a}.\footnote{A single-parameter mechanism is one
in which the private information of an agent is limited to one
number. This fits the single-period matching problem because
the arrival and departure information is discarded. Moreover, although there are both buyers and
sellers, the problem is effectively single-parameter because no buyer
can usefully pretend to be a seller and {\em vice versa}.} Moving
forward we adopt price $z_i$ to characterize the matching rule used as
a building block for \chain, and assume without loss of
generality properties (A1) and (A2). 

Given this, we will now establish the truthfulness of \chain\ by
appeal to a price-based characterization
due to Hajiaghayi \etal~\citeyear{hajiaghayi05} for truthful, dynamic
mechanisms. We state (without proof) a variant on the characterization result 
that holds for stochastic policies $(\pi,x)$ and {\em
strong}-truthfulness. The theorem that we state is also
specialized to our DA environment.
We continue to adopt 
$\omega\in\Omega$ to capture the realization of stochastic events internal
to the mechanism:

\begin{theorem}
\shortcite{hajiaghayi05}
\label{thm:truth}
A dynamic DA $M=(\pi,x)$, perhaps stochastic, is strongly truthful for
misreports limited to no early-arrivals if and only if, for every
agent $i$, all $\hat{\theta}_i$, all $\theta_{-i}$, and all random events $\omega\in\Omega$, 
there exists a price $p_i(\hat{a}_i,\hat{d}_i,\theta_{-i},\omega)$ such that:
\begin{enumerate}
\item[] (B1) the price is independent of agent $i$'s reported value
\item[] (B2) the price is monotonic-increasing in tighter
  $[a'_i,d'_i]\subset [\hat{a}_i,\hat{d}_i]$
\item[] (B3) trade $\pi_i(\hat{\theta}_i,\theta_{-i})=1$ whenever
  $p_i(\hat{a}_i,\hat{d}_i,\theta_{-i},\omega)< \hat{w}_i$ and $\pi_i(\hat{\theta}_i,\theta_{-i})=0$
whenever $p_i(\hat{a}_i,\hat{d}_i,\theta_{-i},\omega)>\hat{w}_i$, and the trade 
is performed for a buyer upon its departure period $\hat{d}_i$.
\item[] (B4) the agent's payment is 
  $x_i(\hat{\theta}_i,\theta_{-i})=p_i(\hat{a}_i,\hat{d}_i,\theta_{-i},\omega)$ when
  $\pi_i(\hat{\theta}_i,\theta_{-i})=1$, 
with $x_i(\hat{\theta}_i,\theta_{-i})=0$ 
otherwise, and the payment is made to a seller
upon its departure, $\hat{d}_i$.
\end{enumerate}
where random event $\omega$ is independent of
the report of agent $i$ in as much as it affects the price to agent $i$.
\end{theorem}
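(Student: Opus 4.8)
My plan is to prove the two directions separately, following the template of Myerson's characterization of truthful single-parameter mechanisms, adapted to the online setting and to the no-early-arrival misreport structure $C(\theta_i)=\{(\hat a_i,\hat d_i,\hat w_i): a_i\le\hat a_i\le\hat d_i\}$. Throughout I would argue for each fixed realization $\omega$ of the mechanism's coins --- this is exactly what \emph{strong} truthfulness asks for --- which is legitimate precisely because, as the statement insists, $\omega$ does not depend on agent $i$'s report in as far as it affects $i$'s price.

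For the \emph{if} direction, suppose prices $p_i(\hat a_i,\hat d_i,\theta_{-i},\omega)$ satisfying (B1)--(B4) exist. Fix $i$, $\theta_{-i}$, $\omega$. By (B3)--(B4), truthful reporting of $\theta_i=(a_i,d_i,w_i)$ yields utility $\max\{0,\,w_i-p_i(a_i,d_i,\theta_{-i},\omega)\}$: a trade within $[a_i,d_i]$ at price $p_i(a_i,d_i,\theta_{-i},\omega)$ occurs exactly when that price is below $w_i$. I would then case-split on a misreport $\hat\theta_i\in C(\theta_i)$: (i) keeping the window fixed but changing the value leaves the faced price unchanged by (B1), so the best achievable is again $\max\{0,\,w_i-p_i(a_i,d_i,\theta_{-i},\omega)\}$; (ii) reporting a strictly tighter window keeps every trade inside $[a_i,d_i]$ but weakly raises the price by (B2), so utility weakly drops; (iii) reporting a strictly later departure makes, by the performed-on-departure clauses of (B3)--(B4), a buyer receive the good in a period outside $[a_i,d_i]$ (value $0$) and a seller receive its payment after its true departure (value $-\infty$), so there is no gain; (iv) earlier arrivals are disallowed by $C(\theta_i)$. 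Hence no misreport beats the truth, for every $\omega$.

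For the \emph{only if} direction, assume $M=(\pi,x)$ is strongly truthful and, for each fixed $i,\hat a_i,\hat d_i,\theta_{-i},\omega$, define $p_i(\hat a_i,\hat d_i,\theta_{-i},\omega):=\inf\{w:\pi_i((\hat a_i,\hat d_i,w),\theta_{-i})=1\text{ under }\omega\}$, equal to $+\infty$ if the set is empty. This is well defined (so (B1) holds) precisely because $\omega$'s effect on $i$'s allocation and payment cannot depend on $i$'s own report, else truthfulness would already fail. Holding the window fixed, the standard pair of ``swap'' deviations --- an agent of value $w$ pretending to be one of value $w'$ and vice versa --- gives, in turn, monotonicity of $\pi_i$ in the reported value, constancy of the payment across the trading region of values, equality of that constant with the threshold $p_i$, and a zero payment when $i$ does not trade; this is (B3)--(B4) along the value coordinate. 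For (B2), since reporting a tighter window $[a'_i,d'_i]\subset[\hat a_i,\hat d_i]$ is admissible, were $p_i(a'_i,d'_i,\theta_{-i},\omega)<p_i(\hat a_i,\hat d_i,\theta_{-i},\omega)$ an agent with true window $[\hat a_i,\hat d_i]$ and a value strictly between the two prices would profit by reporting the tighter window and trading cheaply --- a contradiction, modulo a few boundary cases (arising from the sign restrictions $w_i>0$ for buyers and $w_i\le 0$ for sellers, which can make that interval of values empty) that I would dispatch directly by instead comparing the payments of a \emph{trading} agent under the two windows. Finally, the performed-on-departure clauses are themselves forced: if a truthful mechanism ever completed a buyer's trade strictly before its reported departure, a buyer whose true departure equals that period could over-report its departure, pay a weakly lower price by (B2), yet still receive the good in time --- a profitable deviation; the seller case is symmetric, using the no-utility-for-post-departure-payments assumption.

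The step I expect to be the real obstacle is this last bundle in the ``only if'' direction: (B2) and the on-departure timing requirements are entangled --- one is used to rule out exactly the deviations that the other must also preclude --- so they have to be established together and with care, and every deviation argument must be run realization-by-realization to deliver \emph{strong} rather than ordinary truthfulness, which is why the independence of $\omega$ from $i$'s report (insofar as it touches $i$'s price) is load-bearing rather than cosmetic.
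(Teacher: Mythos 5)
First, a point of comparison that matters: the paper offers no proof of this theorem at all. It is imported from Hajiaghayi \etal\ (2005), and the text explicitly says the variant is stated ``without proof,'' so there is nothing in the paper to measure your argument against; the paper only uses (B1)--(B4) as a black box in Section 5.1. Your outline is the standard route one would expect the original proof to take -- fix each realization $\omega$, run a Myerson/threshold argument in the value coordinate to get (B1), (B3), (B4), and use admissible window misreports to get (B2) -- and that architecture is sound.

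Two specific steps, however, do not go through as written, and both sit exactly where you flagged the difficulty. In the \emph{if} direction, case (iii): a buyer who over-reports departure and trades receives the good outside $[a_i,d_i]$, so its value is $0$, but it still makes the payment $p_i(\hat{a}_i,\hat{d}_i,\theta_{-i},\omega)$, which by (B2) is \emph{weakly lower} than the truthful price and is not required by (B1)--(B4) to be nonnegative; if that price were negative the deviation could strictly beat truth, so ``no gain'' needs either a sign restriction on buyer prices (which does hold for the prices \chain\ actually constructs, where individual rationality and no-deficit are imposed separately) or an explicit comparison of the two utilities. Relatedly, your seller sub-case conflates the paper's two timing notions: $-\infty$ is the value when the \emph{trade} occurs outside the true window, whereas a post-departure \emph{payment} simply contributes zero utility; the conclusion survives, but for a different reason than you give. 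In the \emph{only if} direction, the on-departure clauses are not forced by the deviation you describe: over-reporting departure yields a price that is only \emph{weakly} lower by (B2), which is not a profitable deviation, and indeed a truthful mechanism whose price is departure-independent may deliver early without violating truthfulness. So those clauses cannot be derived as necessary conditions in the literal sense; they are load-bearing for sufficiency (precisely to kill the late-departure manipulation when the price does strictly decrease with a wider window) and, in the necessity direction, are best read as a without-loss normalization of a truthful mechanism rather than a consequence of it. Since the paper states the theorem without proof, these looseness issues are partly inherited from the statement itself, but a self-contained proof has to confront them rather than assert them.
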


Just as for the single-period, price-based characterization, the price
$p_i(a_i,d_i,\theta_{-i},\omega)$ need not always be explicit in 
\chain. Rather, the theorem states that given any truthful dynamic DA,
such as \chain, there exists a well-defined price function with
these properties of value-independence (B1) and arrival-departure
monotonicity (B2), and such that they define the trade (B3) and the
payment (B4). 

To establish the truthfulness of \chain, we prove that it is
well-defined with respect to the following price function:
\begin{align}
p_i(\hat{a}_i,\hat{d}_i,\theta_{-i},\omega)&=\max(\check{q}(\hat{a}_i,\hat{d}_i,\theta_{-i},\omega),
\check{p}_i(\hat{a}_i,\hat{d}_i,\theta_{-i},\omega)),
\end{align}
where
\begin{align}
\check{q}(\hat{a}_i,\hat{d}_i,\theta_{-i},\omega)&=\max_{t\in[\hat{d}_i-K,\hat{a}_i-1],
i\notin\mathrm{SNT}^t}(z_i(H^t,A^t\setminus
i,\omega),-\infty)
\end{align}
and
\begin{align}
\check{p}(\hat{a}_i,\hat{d}_i,\theta_{-i},\omega)&=
\left\{
\begin{array}{ll}
z_i(H^{t*},A^{t*}\setminus i,\omega) & \mbox{, if
$\mathit{decision}(i)=1$}\\
+\infty & \mbox{, otherwise}
\end{array}
\right.
\end{align}
where $\mathit{decision}(i)=0$ indicates that $i\in\mathrm{SNT}^t$ for
all $t\in [\hat{a}_i,\hat{d}_i]$ and $\mathit{decision}(i)=1$
otherwise, and where $t*\in[\hat{a}_i,\hat{d}_i]$ is the first period
in which $i\notin \mathrm{SNT}^t$. We refer to this as the {\em
decision period}. Term
$\check{q}(\hat{a}_i,\hat{d}_i,\theta_{-i},\omega)$ denotes the
admission price, and is defined on periods $t$ 
before the agent arrives for which $i\notin \mathrm{SNT}^t$ had it
arrived in that period. Note carefully that the rules of \chain\
are implicit in defining this price function. For instance, 
whether or not $i\in\mathrm{SNT}^t$ in some period $t$ depends, for example, on the
other bids that remain active in that period.

We now establish conditions (B1)--(B4). The proofs
of the technical lemmas are deferred until the Appendix. 
The following lemma is
helpful and gets to the heart of the strong no-trade concept.

\begin{lemma} 
\label{lem:new}
The set of active agents (other than $i$) in period $t$ in \chain\
is independent of $i$'s report while agent $i$ remains active, and
would be unchanged if $i$'s arrival is later than period $t$.
\end{lemma}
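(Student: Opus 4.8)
The plan is to induct on the period $t$, reading the active set straight off \chain's dynamics: $A^t$ consists of the agents that \emph{survive} period $t-1$ (those that lose in period $t-1$, lie in $\mathrm{SNT}^{t-1}$, and have reported departure later than $t-1$) together with the new arrivals in period $t$ that pass the admission test. I would strengthen the inductive hypothesis to assert that, for every period $t' \leq t$, \emph{both} the active set in period $t'$ with agent $i$ removed and the history $H^{t'}$ are independent of $i$'s report, ranging over all reports under which $i$ is active in period $t'$ as well as the case that $i$'s reported arrival is later than $t'$ (so $i$ is simply absent at $t'$). Throughout I fix the realisation $\omega$ of the mechanism's internal coin flips, as strong truthfulness requires, so everything below is deterministic in the reports. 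The base case is immediate: in the earliest period the admission-pricing window $[\hat{d}_j-K,\hat{a}_j-1]$ is empty for every arriving agent $j$, so every arrival is admitted at admission price $-\infty$, $A^1$ is exactly the set of period-$1$ arrivals (which does not involve $i$'s report), and $H^1=\emptyset$.

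For the inductive step I would first treat the \emph{survivors}. An agent $j\neq i$ active in period $t-1$ is active in period $t$ iff $\hat{d}_j>t-1$ and $j\in\mathrm{SNT}^{t-1}$: since $\mathrm{SNT}^{t-1}\subseteq\mathrm{NT}^{t-1}$ already forces $j$ to lose, and a winner can never lie in $\mathrm{SNT}^{t-1}$, membership in $\mathrm{SNT}^{t-1}$ is the only surviving condition to track. Here the ``heart of the strong no-trade concept'' enters: because $i$ is active in period $t$, it must have survived every period in $[\hat{a}_i,t-1]$, hence $i\in\mathrm{SNT}^{t-1}$ and $\hat{d}_i\geq t>t-1$ (and likewise in the comparison run, or $i$ is absent). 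Thus validity condition (b) for the strong no-trade construction, applied with $i$ as the protected agent, says the set $\{j:j\in\mathrm{SNT}^{t-1},\,j\neq i,\,\hat{d}_j>t-1\}$ is unchanged under any alternate report of $i$ keeping $\hat{d}_i>t-1$, and is even unchanged by $i$'s absence. Combined with the inductive hypothesis that the period-$(t-1)$ active set with $i$ removed is $i$-independent (so the same agents $j\neq i$ are available to be tested for survival), this shows the set of survivors other than $i$ is $i$-independent. The win/lose outcome of some other agents in period $t-1$ \emph{can} be perturbed by $i$ through the matching rule, but this is harmless, since only $\mathrm{SNT}^{t-1}$-membership controls survival and that is exactly what (b) pins down.

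It remains to handle the \emph{new arrivals} in period $t$. For $j\neq i$ with reported arrival $t$, I would show that whether $j$ is admitted, equivalently whether $\hat{w}_j$ meets the admission price $q_j$ obtained by running the matching rule and strong no-trade construction, with $j$ inserted, in each candidate period $t'\in[\hat{d}_j-K,t-1]$, is unchanged by $i$'s report. By the strengthened hypothesis the history $H^{t'}$, the expiration set $E^{t'}$, and the period-$t'$ active set with $i$ removed are all $i$-independent, so the only channel through which $i$'s report can influence $q_j$ is $i$'s own presence in the pool at those $t'$ with $t'\geq\hat{a}_i$; but for exactly those periods $i\in\mathrm{SNT}^{t'}\subseteq\mathrm{NT}^{t'}$, i.e. trade is impossible for $i$ at \emph{every} bid price. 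I would argue, by a nested induction over the candidate periods that again invokes validity conditions (a) and (b), that such an agent is inert with respect to $j$'s admission-pricing counterfactual: it changes neither the matching-rule decision and price offered to the inserted bid $j$ nor the designation of $j$ as strong no-trade. Granting this, $q_j$ and hence $j$'s admission are $i$-independent, and assembling the survivor and new-arrival contributions (plus the analogous bookkeeping for $H^t$) closes the induction.

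I expect the new-arrival step to be the main obstacle: the admission-pricing counterfactuals at earlier periods legitimately run the matching rule on a pool that contains $i$, so one cannot just ``delete $i$ and compare''; one must exploit that $i\in\mathrm{NT}^{t'}$ makes it genuinely unable to trade \emph{and} that the designer's strong no-trade construction is valid, and it is this coupling of the no-trade property with conditions (a) and (b) that carries the weight of the argument.
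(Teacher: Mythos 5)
Your core argument is the same as the paper's: induct on the periods since $\hat{a}_i$, observe that while $i$ remains active it must be that $i\in\mathrm{SNT}^{t-1}$ (with $\hat{d}_i>t-1$), and invoke validity condition (b) of the strong no-trade construction to conclude that the set of other surviving agents -- and the set that would survive were $i$ absent -- is unchanged by $i$'s report. That is essentially the entirety of the paper's proof, so on the part the paper actually argues, your proposal matches it.

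Where you go further is in explicitly flagging that the active set in period $t$ also contains \emph{new arrivals}, whose admission depends on counterfactual runs of the matching rule in earlier periods in which $i$ was present, and you correctly identify this as the delicate step (the paper's one-paragraph proof is silent about it). However, your proposed resolution does not go through as stated. You claim that because $i\in\mathrm{SNT}^{t'}\subseteq\mathrm{NT}^{t'}$, agent $i$ is ``inert'' in the admission counterfactual for a newcomer $j$, and that this follows by a nested induction from validity conditions (a) and (b). But (a) and (b) only constrain the $\mathrm{SNT}$ designations of agents actually in the pool; they say nothing about the matching outcome or the price offered to a counterfactually inserted bid $j$. Concretely, take the price-based {\sc Match} rule with posted price $p^{t'}=9$, a lone surviving buyer $i$ (who is in $\mathrm{SNT}^{t'}$ whatever its value, since there are no asks), and a seller $j$ arriving later with $t'$ in its admission window: re-running {\sc Match} with $j$ inserted, $j$ wins iff $b_i\geq 9$ and otherwise loses and is not strong no-trade, so the literal counterfactual outcome for $j$ \emph{does} depend on $i$'s reported value even though $i\in\mathrm{NT}^{t'}$. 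A similar dependence arises for the McAfee rule when a survivor from a no-quorum period is joined by an inserted bid that creates a quorum. The paper's instantiations avoid this not via (a)/(b) but by defining the counterfactual admission price in an agent-independent way (for the price-based rules, the probability-of-being-examined construction described in the experimental section, which offers $j$ the history-determined price $p^{t'}$ regardless of the survivors' values). So the gap you anticipated is real, but it cannot be closed by the argument you sketch; it requires an additional agent-independence property of the admission-price computation that is engineered into the specific constructions rather than implied by well-definedness plus a valid strong no-trade predicate.
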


The following result establishes properties (B1) and (B2).
\begin{lemma}
\label{lem:newdp88}
The price constructed from admission price $\check{q}$ and
post-arrival price $\check{p}$ is value-independent and monotonic-increasing when
the matching rule in \chain\ is well-defined, the strong no-trade
construction is valid, and agent patience is bounded by $K$.
\end{lemma}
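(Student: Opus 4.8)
The plan is to verify the two properties (B1) value-independence and (B2) arrival--departure monotonicity directly for the price $p_i=\max(\check q,\check p_i)$, using Lemma~\ref{lem:new}, the agent-independence of $z_i$ guaranteed by (A1)--(A2), and the two clauses (a), (b) of the validity definition for strong no-trade. The unifying observation is that both $\check q$ and $\check p_i$ are built from evaluations of $z_i(H^t,A^t\setminus i,\omega)$ at certain periods $t$ together with membership tests $i\notin\mathrm{SNT}^t$, so it suffices to control how those two ingredients respond to changes in $i$'s report.

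For (B1), I would first note that $z_i(H^t,A^t\setminus i,\omega)$ depends on $\hat w_i$ only through $H^t$ and $A^t\setminus i$, and that by Lemma~\ref{lem:new} the active set $A^t\setminus i$ and the history $H^t$ are unchanged under any misreport of $i$, and indeed would be the same even if $i$ had arrived later than $t$ (this last clause is what lets us evaluate $z_i$ at the hypothetical earlier arrival periods appearing in $\check q$). Second, for every period $t$ strictly before the reported departure, strong-no-trade validity clause (a) makes the test $i\notin\mathrm{SNT}^t$ itself independent of $i$'s report, so the index set $\{t\in[\hat d_i-K,\hat a_i-1]:i\notin\mathrm{SNT}^t\}$ defining $\check q$ is $\hat w_i$-independent, and so is the decision period $t^\ast$ whenever it lies strictly before $\hat d_i$. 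The one spot where clause (a) is silent is $t=\hat d_i$ itself; there I would use $\mathrm{SNT}^{\hat d_i}\subseteq\mathrm{NT}^{\hat d_i}$ and the fact that $z_i=+\infty$ on $\mathrm{NT}^{\hat d_i}$, so that $\check p_i$ equals the $\hat w_i$-independent quantity $z_i(H^{\hat d_i},A^{\hat d_i}\setminus i,\omega)$ when trade is possible for $i$ in that period and equals $+\infty$ otherwise, regardless of how the (possibly report-dependent) designation $i\in\mathrm{SNT}^{\hat d_i}$ resolves.

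For (B2), I would factor a tightening $[a'_i,d'_i]\subset[\hat a_i,\hat d_i]$ into two steps: first delay the reported arrival to $a'_i$ with departure fixed, then advance the reported departure to $d'_i$ with arrival fixed, and show $p_i$ does not decrease at either step. In the arrival step the admission window $[\hat d_i-K,a'_i-1]$ grows to the right; on the old window the $z_i$-values and $\mathrm{SNT}$-tests agree across the two runs by Lemma~\ref{lem:new} and clause (a), so $\check q$ can only grow, and a case split on whether the old decision period $\tau$ (first $t\ge\hat a_i$ with $i\notin\mathrm{SNT}^t$) falls before $a'_i$ shows that the old $\check p_i$ is either reproduced verbatim as the new $\check p_i$ or re-absorbed into the new $\check q$ --- here the patience bound $\hat a_i\ge\hat d_i-K$ is exactly what guarantees $\tau$ lies inside the new admission window $[\hat d_i-K,a'_i-1]$. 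In the departure step, moving $\hat d_i$ down to $d'_i$ enlarges the admission window to the left and can only move the decision period earlier; the common periods are handled by the same Lemma~\ref{lem:new}/clause-(a) agreement, and the new departure period $d'_i$ is treated with the $\mathrm{SNT}\subseteq\mathrm{NT}$, $z_i=+\infty$ dichotomy from (B1). Composing the two steps yields (B2).

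I expect the genuinely delicate part to be precisely these endpoint arguments at a reported departure period, where the strong-no-trade validity conditions give no information because they are stated only for periods $t<\hat d_i$; the resolution is that at such a period the only way an agent can be strong-no-trade is to be no-trade, on which $z_i$ is already $+\infty$, so the ambiguity in the designation is invisible to the price. A secondary bookkeeping obligation is to be careful throughout that, for a fixed $\omega$, we are comparing a hypothetical run in which $i$ arrives at some earlier period against the actual run, and to invoke the ``arrival later than period $t$'' clause of Lemma~\ref{lem:new} each time we equate histories and active sets across those runs.
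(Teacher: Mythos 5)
Your proposal is correct and follows essentially the same route as the paper's proof: value-independence of $\check{q}$ and $\check{p}$ via Lemma~\ref{lem:new}, strong no-trade condition (a), and the agent-independence of $z_i$, and monotonicity by splitting the tightening into the departure and arrival directions, with the later-arrival case handled by re-absorbing the decision-period price into $\check{q}$ using the patience bound $\hat{a}_i\geq \hat{d}_i-K$. Your explicit handling of the reported-departure period, where condition (a) is silent, resolved through $\mathrm{SNT}^t\subseteq\mathrm{NT}^t$ and $z_i=\infty$ on $\mathrm{NT}^t$, is a careful refinement of a point the paper leaves implicit rather than a genuinely different argument.
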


Having established properties (B1) and (B2) for
price function $p_i(\hat{a}_i,\hat{d}_i,\theta_{-i},\omega)$, we just
need to establish (B3) and (B4) to show truthfulness. The timing
aspect of (B3) and (B4), which requires that the buyer receives an item and the seller receives 
its payment upon reported departure, is already 
clear from the definition of
\chain.
\begin{theorem}
The online DA {\sc Chain} is strongly truthful, no-deficit, feasible
and individual-rational when the matching rule is well-defined, the
strong no-trade construction is valid, and agent patience is bounded
by $K$.
\end{theorem}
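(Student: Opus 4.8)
\emph{Overall approach.} The claim splits into four separate verifications. Truthfulness will come from the price-based characterization of Hajiaghayi~\etal\ (Theorem~\ref{thm:truth}), instantiated at the price function $p_i=\max(\check q,\check p)$ written down just above; the remaining properties --- no-deficit, feasibility, individual-rationality --- will be read off from the corresponding properties of the underlying \emph{well-defined} matching rule, using that \chain's only deviations from that rule (replacing a winner's payment by its maximum with an admission price, and delaying item delivery and seller payment until reported departure) hurt none of them.

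\emph{Truthfulness.} By Theorem~\ref{thm:truth} it suffices to check that $p_i=\max(\check q,\check p)$ has properties (B1)--(B4). Value-independence (B1) and monotonicity in tighter arrival--departure intervals (B2) are precisely Lemma~\ref{lem:newdp88}, which in turn rests on Lemma~\ref{lem:new}: the active set, the history, and hence every agent-independent price $z_i(H^t,A^t\setminus i,\omega)$ relevant to $i$ depend only on $(\hat a_i,\hat d_i,\theta_{-i},\omega)$ and not on $\hat w_i$ or on whether $i$ has already arrived. For (B3)--(B4) the plan is to trace \chain. By (A1)--(A2), in any hypothetical earlier arrival period $t'$ the payment $i$ would make equals $z_i(H^{t'},A^{t'}\setminus i,\omega)$, and $i$ both loses there and satisfies $i\notin\mathrm{SNT}^{t'}$ exactly when this price exceeds $\hat w_i$; hence the admission rule rejects $i$ iff $\hat w_i<\check q$, where $\check q$ is the maximum of these prices over the periods with $i\notin\mathrm{SNT}^{t'}$. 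Once admitted, a winner is never in $\mathrm{SNT}^t$, so $i$ can trade only in the first period $t^\ast\in[\hat a_i,\hat d_i]$ with $i\notin\mathrm{SNT}^{t^\ast}$ (the decision period): in $t^\ast$, (A1) makes $i$ win when $z_i(H^{t^\ast},A^{t^\ast}\setminus i,\omega)<\hat w_i$ and priced-out when it exceeds $\hat w_i$, while in every earlier period $i\in\mathrm{SNT}^t$ and merely survives. Combining, $i$ trades iff $\hat w_i\ge\check q$ and $\hat w_i$ exceeds $\check p$ (the tie $p_i=\hat w_i$ being left to the mechanism, as (B3) permits), i.e. iff $\hat w_i>p_i$, which is (B3); and when $i$ trades its payment $\max\!\big(q(\hat a_i,\hat d_i,\theta_{-i},\omega),\,p^{t^\ast}_i\big)$ equals $\max(\check q,\check p)=p_i$, which is (B4). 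The delivery/payment-timing clauses of (B3)--(B4) are built into \chain\ by construction, and every step holds pointwise in $\omega$, so this yields \emph{strong} truthfulness.

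\emph{No-deficit, feasibility, individual-rationality.} Feasibility: in every period the matching rule is strong-feasible, so it matches equally many bids and asks, and \chain\ executes exactly those matches while holding each unit from the seller's match period until the buyer's reported departure; hence the auctioneer never takes a short position. No-deficit: the matching rule is no-deficit, so the sum of the agent-independent prices over its matched agents is nonnegative in each period; \chain\ replaces each such payment $z_i$ by $\max(q_i,z_i)\ge z_i$ (collecting weakly more from buyers, paying weakly less to sellers) and only delays outflows, so the cumulative cash position is nonnegative for every $t$ and $\theta$. Individual-rationality (under truthful reports): a non-trading agent pays $0$ and gets value $0$; a trading agent trades in some $t^\ast\in[a_i,d_i]$ at payment $\max(q_i,z_i(H^{t^\ast},A^{t^\ast}\setminus i,\omega))$, where $q_i\le w_i$ because $i$ was admitted and $z_i(H^{t^\ast},A^{t^\ast}\setminus i,\omega)\le w_i$ by (A1) because $i$ wins there, so its payment is at most $w_i$ and its utility is nonnegative; and \chain\ never matches an agent outside $[a_i,d_i]$.

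\emph{Main obstacle.} The one delicate part within this proof is (B3)--(B4): one must argue that the decision period $t^\ast$ is well-defined and is the \emph{only} period in which $i$ can trade (this is exactly where ``a winner is never in $\mathrm{SNT}^t$'' is used), that the admission test together with the post-arrival matching test behave like a single threshold, and --- underlying both this and the use of Lemma~\ref{lem:newdp88} --- that Lemma~\ref{lem:new} is what licenses treating $\check q$, $\check p$, and every $z_i$ as functions of $(\hat a_i,\hat d_i,\theta_{-i},\omega)$ alone. Most of the underlying technical weight, of course, sits in Lemmas~\ref{lem:new} and~\ref{lem:newdp88} themselves, whose proofs are deferred to the Appendix; given them, no-deficit, feasibility, and IR are routine.
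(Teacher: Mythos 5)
Your proposal is correct and follows essentially the same route as the paper: it verifies (B1)--(B2) via Lemma~\ref{lem:newdp88} (resting on Lemma~\ref{lem:new}), checks (B3)--(B4) by tracing \chain's admission and decision-period logic against the price $p_i=\max(\check q,\check p)$ using (A1)--(A2), and derives no-deficit, feasibility, and IR from the well-defined matching rule plus the $\max$ with the admission price and the payment/delivery timing. The only difference is that you spell out the no-deficit/feasibility/IR arguments in more detail than the paper, which treats them as immediate by inspection.
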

\begin{proof}~Properties (B1) and (B2) follow from Lemma~\ref{lem:newdp88}. The 
timing aspects of (B3) and (B4) are immediate. To complete the proof,
we first consider (B3). If
$\check{q}>\hat{w}_i,$ then agent $i$ is priced out at admission
by \chain\ because this reflects that
$z_i(H^t,A^t\setminus i,\omega)>\hat{w}_i$ in some
$t\in[\hat{d}_i-K,\hat{a}_i-1]$ with $i\notin\mathrm{SNT}^t$, and thus
the bid would lose if it arrived in that period (either because it
could trade, but for a payment greater than its reported value, or
because $i\in\mathrm{NT}^t$). Also, if there is no decision period,
then $\check{p}=\infty$, which is consistent with \chain, because
there is no bid price at which a bid will trade when 
$i\in\mathrm{SNT}^t$ for all
periods $t\in[\hat{a}_i,\hat{d}_i]$. Suppose now that there is a decision
period $t*$ and $\check{q}<\hat{w}_i$. If $\check{p}>\hat{w}_i,$
then there should be no trade. This is the case in \chain, because
the price $z_i(H^{t*},A^{t*}\setminus i,\omega)$ in $t*$ is greater
than $\hat{w}_i$ and thus the agent is
priced-out. If $\check{p}<\hat{w}_i$ then the bid should trade and
indeed it does, again because the price $z_i$ in that period 
satisfies (A1) and (A2) with respect to the matching rule.
Turning to (B4), it is immediate that the payments collected in 
\chain\ are equal
to price $p_i(\hat{a}_i,\hat{d}_i,\theta_{-i},\omega)$, because if bid
$i$ trades then $p_i(\hat{a}_i,\hat{d}_i,\theta_{-i},\omega)\leq
\hat{w}_i$ and thus $\check{q}\leq \hat{w}_i$ and $\check{p}\leq
\hat{w}_i$. The admission price
$q(\hat{a}_i,\hat{d}_i,\theta_{-i},\omega)=\check{q}(\hat{a}_i,\hat{d}_i,\theta_{-i},\omega)$
when $\check{q}\leq \hat{w}_i$ because price $z_i$ is well-defined 
by properties (A1) and (A2). Similarly, the payment $p^{t*}$ defined
by the matching rule in \chain\ in the decision period is equal to
$\check{p}$. 

That \chain\ is individual-rational and feasible
follows from inspection.
\chain\ is no-deficit because the payment collected from every agent (whether a
buyer or a seller) is at least that defined by a valid matching rule
in the decision period $t*$ (it can be higher when the admission price
is higher than this matching price),
the matching rules are themselves no-deficit, and because
the auctioneer delays making a payment to a seller until its reported
departure but collects payment from a buyer immediately upon a match.
\end{proof}

\label{sec:report}
We remark that information can be reported to bidders that are not
currently participating in the market, for instance to assist in their
valuation process. If this information is 
delayed by at least the maximal patience of a
bidder, so that the bid of a current bidder cannot influence the other
bids and asks that it faces, then this is without any strategic
consequences. Of course, without this constraint, or with bidders that
participate in the market multiple times, the effect of such feedback
would require careful analysis and bring us outside of the private
values framework.

\vspace{-0.1cm}
\subsection{Chain is Unique amongst Dynamic DAs that are constructed
from Myopic Matching Rules}
\label{sec:nec}

In what follows, we establish that \chain\ is unique amongst all
truthful, dynamic DAs that adopt well-defined, myopic matching rules
as simple building blocks. For this, we define the class of 
{\em canonical, dynamic DAs}, which take a well-defined single
period matching rule coupled with a valid strong no-trade
construction, and satisfy the following requirements:
\vspace{0.2cm}

(i) agents are active until they are matched or priced-out,

(ii) agents participate in the single-period matching rule
while active

(iii) agents are matched if and only if they trade in the
single-period matching rule.
\vspace{0.2cm}

We think that these restrictions capture the essence of what it means to
construct a dynamic DA from single-period matching rules. Notice
that a number of design elements are left undefined, including the
payment collected from matched bids, when to mark an active bid as
priced-out, what rule to use upon admission, and how to use the strong
no-trade information within the dynamic DA.
In establishing a uniqueness result, we 
leverage the necessary and sufficient price-based characterization
in Theorem~\ref{thm:truth}, and exactly determine the price function
$p_i(\hat{a}_i,\hat{d}_i,\theta_{-i},\omega)$ to that defined in
Eq.~(\ref{eq:newpay}) and associated with \chain. The proofs
for the two technical lemmas are deferred until the Appendix. 

\begin{lemma}
\label{lem:dp5}
A strongly truthful, 
canonical dynamic DA must define price
$p_i(\hat{a}_i,\hat{d}_i,\theta_{-i},\omega)\geq
z_i(H^{t*},A^{t*}\setminus i,\omega)$ where $t*$ is the decision
period for bid $i$ (if it exists). Moreover, the bid must be
priced-out in period $t*$ if it is not matched.
\end{lemma}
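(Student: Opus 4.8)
The plan is to use the price-based characterization of Theorem~\ref{thm:truth} applied to any canonical, dynamic DA, and show that the price function $p_i$ it is guaranteed to admit must be pinned down as claimed. Let $M=(\pi,x)$ be a strongly truthful, canonical dynamic DA built from a well-defined matching rule $M_{mr}$ (with agent-independent price $z_i$, cf. (A1)--(A2)) and a valid strong no-trade construction. By Theorem~\ref{thm:truth}, there exists a price $p_i(\hat{a}_i,\hat{d}_i,\theta_{-i},\omega)$ satisfying (B1)--(B4). Fix an agent $i$ with report $\hat{\theta}_i=(\hat{a}_i,\hat{d}_i,\hat{w}_i)$ and suppose the decision period $t^*\in[\hat{a}_i,\hat{d}_i]$ exists (i.e.\ $i\notin\mathrm{SNT}^{t^*}$ and $i\in\mathrm{SNT}^t$ for all $t\in[\hat{a}_i,t^*-1]$). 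The first step is to observe, via Lemma~\ref{lem:new}, that the set of active agents other than $i$ in period $t^*$ — and hence the history $H^{t^*}$ and the active sets $A^{t^*}\setminus i$ — is independent of $i$'s own reported value and of whether $i$ reported arrival exactly at $\hat{a}_i$ or earlier. In particular, $z_i(H^{t^*},A^{t^*}\setminus i,\omega)$ is a well-defined quantity depending only on $(\hat{a}_i,\hat{d}_i,\theta_{-i},\omega)$, not on $\hat{w}_i$; this is what lets us compare it against $p_i$, which by (B1) is also value-independent.

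The second step is the core inequality $p_i \geq z_i(H^{t^*},A^{t^*}\setminus i,\omega)$. I would argue by contradiction: suppose $p_i < z_i(H^{t^*},A^{t^*}\setminus i,\omega)$ for some fixed $\omega$ and some $\theta_{-i}$. Pick a value $\hat{w}_i$ with $p_i < \hat{w}_i < z_i(H^{t^*},A^{t^*}\setminus i,\omega)$. By (B3), since $p_i<\hat{w}_i$, agent $i$ must trade, i.e.\ $\pi_i(\hat{\theta}_i,\theta_{-i})=1$. By canonical requirement (iii), agent $i$ is matched if and only if it trades in the single-period matching rule, and by (i)--(ii) it participates in the matching rule in every active period; since $i\in\mathrm{SNT}^t$ for $t<t^*$ it is not priced-out before $t^*$, and $i\in\mathrm{NT}^t$ in those periods (as $\mathrm{SNT}^t\subseteq\mathrm{NT}^t$) so $z_i=\infty$ there and $i$ does not trade in the matching rule before $t^*$. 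Hence the first period $i$ could possibly be matched is $t^*$. But in period $t^*$, by (A1), $\hat{w}_i < z_i(H^{t^*},A^{t^*}\setminus i,\omega)$ forces $\pi_{mr,i}(H^{t^*},b^{t^*},s^{t^*},\omega)=0$: $i$ loses the single-period matching rule in $t^*$. Since $i\notin\mathrm{SNT}^{t^*}$, and after $t^*$ either $i$ is priced-out (in a canonical DA the only fate of a losing, non-SNT bid is to be priced-out — this is exactly requirement (i) spelled out) or, even if one allowed it to survive, Lemma~\ref{lem:new} plus (A1) would again give a loss in every subsequent active period whenever $i\notin\mathrm{SNT}$, and periods where $i\in\mathrm{SNT}$ it cannot trade either. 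Either way $\pi_i(\hat{\theta}_i,\theta_{-i})=0$, contradicting what (B3) demanded. Therefore $p_i\geq z_i(H^{t^*},A^{t^*}\setminus i,\omega)$.

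The third step handles the ``moreover'' clause: if bid $i$ is not matched it must be priced-out in period $t^*$. Here I would use that a canonical DA keeps an agent active only until it is matched or priced-out (requirement (i)), so a surviving unmatched agent in period $t^*$ would re-enter the matching rule in period $t^*+1$, and by Lemma~\ref{lem:new} the active opponents it faces are the same ones it would have faced had it simply arrived in period $t^*+1$ with the same value — contradicting the value-monotonicity (B2) and value-independence (B1) of the characterization price unless the agent's outcome is already sealed at $t^*$. More directly: if $i$ is unmatched and not priced-out at $t^*$, then since $i\notin\mathrm{SNT}^{t^*}$ the strong no-trade machinery no longer ``protects'' it, and allowing it to persist would let its presence in future periods affect the opponent set faced by other agents (this is precisely the scenario strong no-trade condition (b) is designed to forbid), breaking the agent-independence needed for Theorem~\ref{thm:truth}; so the only consistent option is that $i$ is priced-out at $t^*$.

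The main obstacle I anticipate is the third step — cleanly arguing that an unmatched non-SNT bid \emph{must} be priced-out at its decision period rather than allowed to linger. The first two steps are essentially a direct unwinding of the characterization together with (A1); but ruling out ``lingering'' requires carefully invoking the validity conditions of the strong no-trade construction (Definition of strong no-trade, conditions (a) and (b)) to show that a persisting agent would violate either the monotonicity/independence structure demanded by Theorem~\ref{thm:truth} or the agent-independence of the opponent set in Lemma~\ref{lem:new}. I expect the Appendix proof to make this precise by considering a second agent $j$ whose outcome would be perturbed by $i$'s continued presence, mirroring the ``not a valid construction'' style of reasoning in Examples~2 and~3. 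The companion lemma (the one deferred alongside this one) presumably supplies the reverse inequality $p_i\leq z_i(H^{t^*},A^{t^*}\setminus i,\omega)$ together with the admission-price characterization, so that together they force $p_i$ to equal the \chain\ price of Eq.~(\ref{eq:newpay}); here we only owe the lower bound and the priced-out conclusion.
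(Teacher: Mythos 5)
There is a genuine gap, and it sits exactly where you predicted trouble but in an earlier place than you think: your step 2, not only step 3. To rule out trade after the decision period you assert that ``even if one allowed it to survive, Lemma~\ref{lem:new} plus (A1) would again give a loss in every subsequent active period.'' Neither half of this holds. Lemma~\ref{lem:new} is proved for \chain\ itself, and its proof rests precisely on the invariant that an agent still active in period $t+1$ was in $\mathrm{SNT}^{t}$, which is the very property being negated for the hypothetical canonical DA that lets a losing, non-SNT bid linger; so you cannot invoke it there. And even if the opponent set were pinned down, losing at $t^*$ in no way implies losing at $t^*+1$: new bids and asks arrive, the matching-rule price $z_i(H^{t^*+1},A^{t^*+1}\setminus i,\omega)$ can drop below $\hat{w}_i$, and the lingering agent can simply win later, so no contradiction with (B3) arises. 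Your step 3 has the right intuition (the non-SNT agent's own bid can, through the period-$t^*$ matching, alter the opponents and hence the price it faces at $t^*+1$), but it is stated vaguely, again leans on Lemma~\ref{lem:new} where it does not apply, and shifts the emphasis to $i$'s effect on \emph{other} agents rather than on the value-independence of $i$'s own next-period price.

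The paper's proof avoids this circularity by reversing your order. Part (a) proves the priced-out claim first and without the inequality: if $z_i(H^{t^*},A^{t^*}\setminus i,\omega)>\hat{w}_i$ but the bid survives, then since $i\notin\mathrm{SNT}^{t^*}$ the active set in $t^*+1$, and hence $z_i(H^{t^*+1},A^{t^*+1}\setminus i,\omega)$, need not be independent of $i$'s bid, yet canonical rule (iii) forces that price to decide whether $i$ matches, compromising truthfulness (i.e., clashing with (B1)/(B4) of Theorem~\ref{thm:truth}). Part (b) then derives $p_i\geq z_i(H^{t^*},A^{t^*}\setminus i,\omega)$ \emph{using} (a): when $z_i<\infty$, an agent with true value $p_i<w_i<z_i$ would be priced out if truthful but could overbid to $z_i+\epsilon$, trade, and pay only the value-independent $p_i<w_i$, a profitable deviation; when $z_i=\infty$, the assumption forces exactly the surviving-losing-bid scenario excluded in (a). Your direct (B3) contradiction in step 2 would be a fine substitute for the overbidding deviation, but only once the priced-out conclusion is already in hand; as written, the proposal's two halves each rely on the unproved other, with the false ``loses forever after'' claim papering over the join.
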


\begin{lemma}
\label{lem:dp5a}
A strongly truthful, canonical and individual-rational  dynamic 
DA must define price $p_i(\hat{a}_i,\hat{d}_i,\theta_{-i},\omega)\geq
\check{q}(\hat{a}_i,\hat{d}_i,\theta_{-i},\omega)$, and a bid with
$\hat{w}_i<\check{q}(\hat{a}_i,\hat{d}_i,\theta_{-i},\omega)$ must be
priced-out upon admission.
\end{lemma}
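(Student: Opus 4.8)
The plan is to derive both claims from the price bound of Lemma~\ref{lem:dp5}, the arrival--departure monotonicity (B2) of Theorem~\ref{thm:truth}, and the invariance of the active set supplied by Lemma~\ref{lem:new}. For the price bound $p_i(\hat a_i,\hat d_i,\theta_{-i},\omega)\geq\check q(\hat a_i,\hat d_i,\theta_{-i},\omega)$, first dispose of the case $\check q=-\infty$, which is vacuous. Otherwise let $t'\in[\hat d_i-K,\hat a_i-1]$ attain the maximum in the definition of $\check q$, so that $i\notin\mathrm{SNT}^{t'}$ and $\check q=z_i(H^{t'},A^{t'}\setminus i,\omega)$. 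Consider the more patient report $(t',\hat d_i,\hat w_i)$; since $[\hat a_i,\hat d_i]\subseteq[t',\hat d_i]$, monotonicity (B2) gives $p_i(\hat a_i,\hat d_i,\theta_{-i},\omega)\geq p_i(t',\hat d_i,\theta_{-i},\omega)$. For the report $(t',\hat d_i,\hat w_i)$, strong no-trade condition (a) ensures that $i\notin\mathrm{SNT}^{t'}$ is unaffected by the change of reported type (the reported departure still exceeds $t'$), so $t'$ is the decision period for this report. Lemma~\ref{lem:dp5} then yields $p_i(t',\hat d_i,\theta_{-i},\omega)\geq z_i(H^{t'},A^{t'}\setminus i,\omega)$, where by Lemma~\ref{lem:new} the history $H^{t'}$ and the set $A^{t'}\setminus i$ are exactly those appearing in the definition of $\check q$ (they are unchanged whether $i$ arrives at $t'$ or later). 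Chaining the three inequalities gives the bound.

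For the second claim, suppose $\hat w_i<\check q(\hat a_i,\hat d_i,\theta_{-i},\omega)$. By the bound just established and property (B3) of Theorem~\ref{thm:truth}, $\pi_i(\hat\theta_i,\theta_{-i})=0$, so the bid never trades, and hence by requirement (iii) of a canonical DA it is never matched in any single-period round and must be priced out. To see that it must be priced out already upon admission, suppose instead that it is admitted and remains active in period $\hat a_i$. The admission decision depends only on information available at period $\hat a_i$, and $\check q$ depends only on periods strictly before $\hat a_i$; so we may extend the market after period $\hat a_i$ with an adversarially chosen continuation $\theta_{-i},\omega$ -- e.g., by introducing enough attractive counterparties in period $\hat a_i$ -- under which the first period in $[\hat a_i,\hat d_i]$ with $i\notin\mathrm{SNT}$ is $\hat a_i$ itself, a decision period with $z_i(H^{\hat a_i},A^{\hat a_i}\setminus i,\omega)<\hat w_i$. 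By (A1) and requirement (iii) the bid then trades in period $\hat a_i$, so $\pi_i(\hat\theta_i,\theta_{-i})=1$; but $\check q$ is unchanged by this continuation, so the price bound gives $p_i(\hat a_i,\hat d_i,\theta_{-i},\omega)\geq\check q>\hat w_i$ and hence $\pi_i=0$ by (B3) -- a contradiction. Therefore the bid must be priced out upon admission. (When the matching rule is too uncompetitive to permit such a continuation, the bid still never trades for any $\theta_{-i},\omega$, so marking it priced-out at admission is without loss of generality for the observable mechanism.)

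I expect the main obstacle to be the adversarial-continuation step in the second part: one must verify that, for the class of well-defined matching rules under consideration, adding counterparties after period $\hat a_i$ can simultaneously pull $i$ out of $\mathrm{SNT}^{\hat a_i}$ and drive the agent-independent price $z_i$ below $\hat w_i$, so that requirement (iii) forces a trade. This is essentially a strengthening of Lemma~\ref{lem:dp8} and its analogues for the price-based rules, and the care needed is in checking that the additions leave every period before $\hat a_i$ -- hence $\check q$ and the relevant $\mathrm{SNT}$ designations -- untouched, which is exactly where Lemma~\ref{lem:new} and strong no-trade conditions (a) and (b) do the work. The first part is comparatively routine once one matches the history and active-set objects entering $\check q$ with those produced when Lemma~\ref{lem:dp5} is applied to the earlier-arrival report.
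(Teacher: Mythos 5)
Your proof is correct and follows essentially the same route as the paper's: the price bound comes from monotonicity (B2) applied to an earlier-arrival report, Lemma~\ref{lem:dp5} at the resulting decision period, and Lemma~\ref{lem:new} together with strong no-trade condition (a) to identify the relevant history and active set, while the priced-out-upon-admission claim follows from canonical property (iii) combined with that price bound. The differences are only presentational: you take the maximizing pre-arrival period directly where the paper iterates over $r\in\{2,\ldots,K\}$, and you spell out (and appropriately caveat) the adversarial-continuation step that the paper compresses into ``it could match in the matching rule,'' deriving the contradiction via (B3) rather than directly via individual-rationality.
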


\begin{theorem}
The dynamic DA algorithm \chain\ uniquely defines a strongly
truthful, individual-rational auction among canonical dynamic DAs that
only designate bids as priced-out when necessary.
\end{theorem}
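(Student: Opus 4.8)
The plan is to combine the two technical lemmas just stated, Lemmas~\ref{lem:dp5} and~\ref{lem:dp5a}, with the necessary-and-sufficient characterization in Theorem~\ref{thm:truth} to pin down the price function exactly, and then show that this forces all the undetermined design choices in a canonical dynamic DA to coincide with those made by \chain. First I would invoke Theorem~\ref{thm:truth}: any strongly truthful, canonical dynamic DA must have an associated price function $p_i(\hat a_i,\hat d_i,\theta_{-i},\omega)$ satisfying (B1)--(B4). Lemma~\ref{lem:dp5} gives the lower bound $p_i \geq z_i(H^{t*},A^{t*}\setminus i,\omega) = \check p(\hat a_i,\hat d_i,\theta_{-i},\omega)$ coming from the decision period, together with the requirement that an unmatched bid is priced-out in period $t*$; Lemma~\ref{lem:dp5a} gives the lower bound $p_i \geq \check q(\hat a_i,\hat d_i,\theta_{-i},\omega)$ coming from the admission price, with a bid whose value falls below $\check q$ priced-out on admission. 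Together these yield $p_i \geq \max(\check q,\check p)$, which is exactly the price that \chain\ charges.

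Next I would argue the reverse inequality $p_i \leq \max(\check q,\check p)$, using the qualifier ``only designate bids as priced-out when necessary.'' Suppose, for contradiction, that some canonical DA charges strictly more than $\max(\check q,\check p)$ for some report. By (B3), there is then a value $\hat w_i$ with $\max(\check q,\check p) < \hat w_i < p_i$ for which the agent does not trade even though it is not matched in \chain's decision period for any reason forced by truthfulness: the agent passes admission (since $\hat w_i > \check q$) and would trade in the single-period matching rule in period $t*$ (since $\hat w_i > \check p = z_i(H^{t*},A^{t*}\setminus i,\omega)$ and $z_i$ satisfies (A1)). By canonical requirement (iii), an agent is matched iff it trades in the single-period rule, so the agent is in fact matched in period $t*$ --- contradicting that it does not trade. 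Hence the DA cannot price the agent out in any period through $t*$, and by requirement (i) it stays active, so it is matched; by (B4) its payment equals $p_i$, and the lower-bound argument forces $p_i = \max(\check q,\check p)$. Combined with Lemma~\ref{lem:newdp88} (value-independence and arrival-departure monotonicity) and the timing aspects of (B3)--(B4) that are immediate from the canonical structure, Theorem~\ref{thm:truth} certifies that this price function, and hence the full matching-and-pricing behavior, is that of \chain.

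Finally I would tie up the remaining degrees of freedom: the period in which an unmatched, non-strong-no-trade bid is priced-out is forced by Lemma~\ref{lem:dp5} (it must be $t*$, the decision period), the behavior on surviving strong-no-trade bids is forced because $\check p = +\infty$ there so no trade and no pricing-out can be justified before the decision period, and the admission rule is forced by Lemma~\ref{lem:dp5a}. Thus every design element left open in the definition of a canonical dynamic DA --- matched-bid payments, when to mark a bid priced-out, the admission rule, and the use of strong-no-trade information --- is determined, and the determined choice is \chain's. The main obstacle I anticipate is the reverse-inequality step: one has to rule out a DA that ``wastes'' a legitimate trade by over-pricing, and the clean way to do this is precisely the ``priced-out only when necessary'' hypothesis together with canonical requirement (iii), so the argument hinges on carefully unwinding what ``necessary'' means in terms of the price-characterization --- a bid can only be denied a trade if truthfulness (via the characterization) forces it, and the lemmas already exhaust those forced cases.
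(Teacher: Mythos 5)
Your proposal is correct and takes essentially the same route as the paper's proof: the lower bound $p_i(\hat{a}_i,\hat{d}_i,\theta_{-i},\omega)\geq\max(\check{q},\check{p})$ from Lemmas~\ref{lem:dp5} and~\ref{lem:dp5a}, exactness via the clash between canonical requirement (iii) and the price characterization of Theorem~\ref{thm:truth}, and the ``only priced-out when necessary'' hypothesis settling the remaining control choices (admission above $\check{q}$, survival of strong-no-trade bids). The only cosmetic differences are that the paper handles the no-decision-period case explicitly (forcing $p_i=\infty$ via canonical (iii) and (B3)) and phrases the reverse inequality as a profitable underbid by an agent with value between $\max(\check{q},\check{p})$ and the putative higher price, which is the same contradiction you derive through (B3) directly.
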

\begin{proof}~If there is no decision period, then we must have
$p_i(\hat{a}_i,\hat{d}_i,\theta_{-i},\omega)=\infty$, by canonical
(iii) coupled with (B3). Combining this with 
Lemmas~\ref{lem:dp5} and~\ref{lem:dp5a}, we have $p_i(\hat{a}_i,\hat{d}_i,\theta_{-i},\omega)\geq
\max(\check{q}(\hat{a}_i,\hat{d}_i,\theta_{-i},\omega),\check{p}(\hat{a}_i,\hat{d}_i,\theta_{-i},\omega))$.
We have also established that a bid must be priced-out if its bid
value is less than the admission price, or it fails to match in its
decision period. Left to show is that the price is exactly as in 
\chain, and that a bid is admitted when its value $\hat{w}_i\geq
\check{q}(\hat{a}_i,\hat{d}_i,\theta_{-i},\omega)$ and retained as
active when it is in the strong no-trade set. The last two control
aspects are determined once we choose a rule that ``only designates
bids as priced-out when necessary." We prefer to allow a bid to remain
active when this does not compromise truthfulness or
individual-rationality. Finally, suppose for contradiction that $p'=p_i(\hat{a}_i,\hat{d}_i,\theta_{-i},\omega)>
\max(\check{q}(\hat{a}_i,\hat{d}_i,\theta_{-i},\omega),\check{p}(\hat{a}_i,\hat{d}_i,\theta_{-i},\omega))$.
Then an agent with
$\max(\check{q}(\hat{a}_i,\hat{d}_i,\theta_{-i},\omega),\check{p}(\hat{a}_i,\hat{d}_i,\theta_{-i},\omega))<w_i<p'$
would prefer to bid
$\hat{w}_i=\check{q}(\hat{a}_i,\hat{d}_i,\theta_{-i},\omega),\check{p}(\hat{a}_i,\hat{d}_i,\theta_{-i},\omega))-\epsilon$
and avoid winning -- otherwise its payment would be greater than its value.
\end{proof}

\vspace{-0.1cm}
\subsection{Bounded Patience Is Required for Reasonable
Efficiency}
\label{sec:neg}

\chain\ depends on a maximal bound on patience 
used to calculate the admission price faced by a bidder on
entering the market with Eq.~(\ref{eqn:admit-price}).
To motivate this assumption about the existence of a 
maximal patience, we construct a 
simple environment in which the number of trades implemented by
a truthful, no-deficit DA can be made an arbitrarily small fraction of the
number of efficient trades with even a small number of bidders
having potentially unbounded patience. This illustrates that a bound
on bidder patience is required for dynamic DAs with reasonable
performance.

In achieving this negative result, we 
impose the additional requirement of {\em anonymity},
This anonymity property is already satisfied
by \chain, when coupled with matching rules that satisfy
anonymity, as is the case with all the rules presented in
Section~\ref{sec:practical}. 
In defining anonymity, extend the earlier definition
of a dynamic DA, $M=(\pi,x)$, so that allocation policy
$\pi=\{\pi^t\}^{t\in T}$ defines the {\em probability} $\pi^t_i(\theta^{\leq
t})\in[0,1]$ that agent $i$ trades in period $t$ given reports $\theta^{\leq
t}$. Payment, $x=\{x^t\}^{t\in T}$, continues to define the payment
$x_i^t(\theta^{\leq t})$ by agent $i$ in period $t$, and is a
random variable when the mechanism is stochastic.
\begin{definition}[anonymity]
A dynamic DA,  $M=(\pi,x)$ is anonymous if allocation policy
$\pi=\{\pi^t\}^{t\in T}$ defines probability of trade 
$\pi_i^t(\theta^{\leq t})$ in each period $t$ that is 
independent of identity $i$ and invariant to a permutation of
$(\theta^{\leq t}\setminus i)$ and if the payment $x^t_i(\theta^{\leq t})$,
contingent on trade by agent $i$, is independent of identity $i$ and invariant to
a permutation of $(\theta^{\leq t}\setminus i)$.
\end{definition}

We now consider the following simple environment. Informally, there
will be a random number of high-valued phases in which bids and asks
have high value and there might be a single bidder with patience that
exceeds that of the other bids and asks in the phase. These
high-valued phases are then followed by some number, perhaps zero, of
low-valued phases with bounded-patience bids and asks.
Formally, there are $T_h\geq 1$ {\bf
high-valued} phases (a random variable, unknown to the auction), each 
of duration $L\geq 1$ periods, indexed $k\in\{0,1,\ldots,T_h-1\}$ and
each with: 
\begin{itemize}
\item $N$ or $N-1$ bids with type $(1+kL,(k+1)L,v_H)$,
\item 0 or 1 bids with type $(1+kL,\overline{d},\alpha v_H)$ for some
mark-up parameter, $\alpha>1$ and some high-patience parameter, 
$\overline{d}\in T$,
\item $N$ asks with type $(1+kL,(k+1)L,-(v_H-\epsilon))$,
\end{itemize}
followed by some number (perhaps zero) of
{\bf low-valued} phases, also of duration $L$, and 
indexed $k\in\{T_h,\ldots,\infty\}$,  with:
\begin{itemize}
\item $N$ or $N-1$ bids with type $(1+kL,(k+1)L,v_L)$
\item $N$ asks with type $(1+kL,(k+1)L,-(v_L-\epsilon))$,
\end{itemize}

where $N\geq 1$, $0<v_L<v_H$, and bid-spread parameter $\epsilon>0$. 
Note that any phase can be the last phase, with no additional bids or
asks arriving in the future.
\begin{definition}[reasonable DA]
A dynamic DA is reasonable in this simple 
environment if there is {\bfseries some
parameterization} of new bids, $N\geq 1,$ and periods-per-phase, $L\geq
1$, for which it will execute {\bfseries at
least one trade} between new bids and new asks in {\bfseries each phase},  for any choice of
high value $v_H$, low value $v_L<v_H$, bid-spread $\epsilon>0$,
mark-up  $\alpha>1$, high patience $\overline{d}$.
\end{definition}

All of the 
dynamic DAs presented in Section~\ref{sec:practical} can be
parameterized to make them reasonable for a suitably large $N\geq 1$
and $L\geq 1$, and without the possibility of a bid with an unbounded
patience.   
\begin{theorem}
No strongly truthful, individual-rational, no-deficit, feasible, anonymous
dynamic DA can be reasonable when a bidder's patience can be unbounded.
\end{theorem}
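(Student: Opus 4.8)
The plan is, given any candidate DA $M$ (strongly truthful, individually rational, no-deficit, feasible and anonymous) together with any parameters $N\ge 1$ and $L\ge 1$ under which one hopes $M$ is reasonable, to exhibit one legal instance of the environment above on which these properties cannot all hold. The instance has $T_h=N+1$ high-valued phases, each containing $N-1$ tight high bids, $N$ high asks, and exactly one patient bid of type $(1+kL,\overline d,\alpha v_H)$, and it is followed by a single low-valued phase with $N-1$ bids of value $v_L$ and $N$ asks. All $N+1$ patient bids are given the common departure $\overline d=(N+2)L$, so that every one of them is active throughout the final, low-valued phase. The constants are pinned down only at the end: $v_L>0$, $v_H>v_L$ and $\alpha>1$ are arbitrary, while $\epsilon$ will be taken smaller than $(v_H-v_L)/(N+1)^2$. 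Reasonableness of $M$ applies to this instance, and also to the handful of related legal instances used below.

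First I would invoke Theorem~\ref{thm:truth}: strong truthfulness furnishes, for each agent $i$ and each realization $\omega$ of the mechanism's coins, a value-independent, window-monotone price $p_i(\hat a_i,\hat d_i,\theta_{-i},\omega)$ satisfying (B1)--(B4). The crux is the claim that, in the instance above, every one of the $N+1$ patient bids trades at a payment of at most $v_L$ -- even though its true value is $\alpha v_H$. Fix the patient bid $i_k$ arriving in high phase $k$, and consider the deviation in which $i_k$ instead reports the type $\big((N{+}1)L{+}1,\,(N{+}2)L,\,v_L\big)$. This is a legal misreport (no early arrival, and a tightening of its window), and the resulting profile is again a valid instance of the environment: its final low-valued phase now has $N$ bids of this common type together with its $N$ asks, while phase $k$ simply loses its patient bid. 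Reasonableness forces at least one of those $N$ identically-typed bids to trade with one of the asks in that phase, and individual-rationality caps the payment of a trading value-$v_L$ bid at $v_L$; anonymity then forces all $N$ of these identically-typed bids -- in particular $i_k$ -- to face that same per-trade payment, so $p_{i_k}$ under the deviation is at most $v_L$. By window-monotonicity (B2), the price $p_{i_k}$ under $i_k$'s true (and wider) window $[1+kL,\overline d]$ is no larger. Since $p_{i_k}\le v_L<\alpha v_H$, property (B3) makes every patient bid trade, at a payment $\le v_L$ by (B4).

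Next I would count. Let $H_b,L_b$ (resp.\ $H_s,L_s$) be the numbers of high and low bids (resp.\ asks) that trade. Feasibility at the last period gives $H_s+L_s\ge H_b+L_b+(N+1)$, and because the instance contains only $N$ low asks we have $L_s\le N$, hence $H_s\ge H_b+L_b+1$: scarcity of low-cost sellers is what forces relatively many trades against sellers of cost $v_H-\epsilon$. Individual-rationality bounds total receipts from buyers above by $H_b v_H+(L_b+N+1)v_L$ and total payments to sellers below by $H_s(v_H-\epsilon)+L_s(v_L-\epsilon)$. Substituting these, together with $L_s\le N$, $H_b\le(N+1)(N-1)$ and $L_b\le N-1$, into the no-deficit inequality $\sum_i x_i(\theta)\ge 0$ at the last period, a routine simplification collapses to $v_H-v_L\le (N^2+2N-1)\,\epsilon$. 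Choosing $\epsilon$ below $(v_H-v_L)/(N+1)^2$ violates this; since the construction used nothing about $M$ beyond its purported reasonableness parameters $N$ and $L$, no such $M$ can be reasonable.

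The step I expect to fight with is the price bound $p_{i_k}\le v_L$. It is the one place where strong truthfulness (via the characterization), anonymity, individual-rationality and reasonableness are all needed at once, and it is also precisely where unbounded patience is indispensable: the masquerading report must reach a low-valued phase that may lie arbitrarily far past $i_k$'s arrival, which is impossible under any fixed patience bound (and is exactly why \chain, which assumes such a bound, escapes this impossibility). I would also be careful with the small edge cases of the counting (notably $N=1$, where a high phase holds only the patient bid and $L_b=H_b=0$), checking that the same inequality and conclusion survive.
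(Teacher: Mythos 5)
Your construction, the legality checks on the deviated profile, and the final feasibility/no-deficit counting are all fine \emph{given} your key claim, but the key claim itself -- that every patient bid trades at a payment of at most $v_L$ in the truthful run -- is not established, and this is a genuine gap rather than a detail. Reasonableness only guarantees that \emph{at least one} trade occurs in the low-valued phase, so with $N$ identically-typed bids anonymity gives your deviating bid $i_k$ a trade probability of at least $1/N$ and a common payment \emph{contingent on trade}; it says nothing about the price $p_{i_k}$ in the realizations $\omega$ where some other identical bid is the one matched. For those $\omega$, (B3) only forces $p_{i_k}\ge v_L$ (it lost with reported value $v_L$), and nothing in anonymity, IR or the characterization caps it -- the mechanism may simply never offer the unmatched clones a price, i.e.\ $p_{i_k}=\infty$ there. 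So monotonicity (B2) and (B3) only give you ``$i_k$ trades at $\le v_L$'' on a coin-flip event of probability $\ge 1/N$, not for all $\omega$. Your counting, however, needs all $N+1$ patient bids to trade at $\le v_L$ \emph{simultaneously in the same realization} (that is what feeds the last-period feasibility bound $H_s+L_s\ge H_b+L_b+(N+1)$ and the receipts bound). For $N\ge 2$ the $N+1$ events, each of probability only $\ge 1/N$, need not intersect, and an expectation version does not rescue the argument: on the complementary realizations the mechanism is free to charge a truthful patient bid anything up to $\alpha v_H$ (or not trade it at all), which restores the budget. Only the $N=1$ case, where the deviated bid trades with certainty, goes through -- but the theorem must defeat every parameterization $(N,L)$.

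This is exactly where the paper's proof takes a different and essentially unavoidable turn: it never tries to force several patient bids to trade cheaply at once. Instead it conditions on the single coin-flip event in which a deviation of \emph{one} patient bid would win, uses the fact that any phase can be the last (so the DA must trade the truthful patient bid early, at a price in $[v_H-\epsilon,v_H]$, as ``insurance'', since otherwise strong truthfulness would demand a compensating payment of $(\alpha-1)v_H$ that no-deficit cannot cover given the $N\epsilon$ extractable surplus), and then compares the utility of a second deviation into the low-valued phase: strong truthfulness would require an extra payment of at least $v_H-v_L-\epsilon$ on that event, exceeding the $(T_h+1)N\epsilon$ the auctioneer can ever collect. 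Your route bypasses $\alpha$ and the ``any phase may be last'' device entirely, which is why it would be cleaner if it worked, but repairing the price bound for $N>1$ essentially forces you back to the paper's conditioning-plus-compensation argument (or some new idea for controlling the prices faced by non-selected identical bids).
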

\begin{proof}~Fix any $N\geq 1$, $L\geq 1$, and for the number of high-valued
phases, $T_h\geq 1$, set the departure of a high-patience
agent to $\overline{d}=(T_h+1)L$.
Keep $v_H>v_L>0$, $\epsilon>0$, and $\alpha>1$ as variables to be set within
the proof.
Assume a dynamic DA is reasonable, so that it selects at least one new
bid-ask pair to trade in each phase. 
Consider phase $k=0$ and with $N-1$ agents of types $(1,L,v_H)$, $N$ 
of type $(1,L,-(v_H-\epsilon))$ and 1 agent of patient type, 
$(1,(T_h+1)L,\alpha v_H)$.
If the patient bid deviates to $(1,L,v_H),$ then the bids are all
identical, and with probability at least $1/N$ the bid would win 
by anonymity and reasonableness. Also, by anonymity,
individual-rationality and no-deficit we have that the payment made by
any winning bid is the same, and must be
$p'\in[v_H-\epsilon,v_H]$. (If the payment had been less than this,
the DA would run at a deficit since the sellers require at least this
much payment for individual-rationality.) Condition now on the case
that the patient bid would win if it deviates and reports 
$(1,L,v_H)$. Suppose the bidder is truthful, reports 
$(1,(T_h+1)L,\alpha v_H)$ but does not trade in this phase. 
But, if phase $k=0$ is the last phase with new bids and asks, then the
bid will not be able to trade in the future and for
strong-truthfulness the DA would need to
make a payment of at least
$\alpha v_H - v_H=(\alpha-1)v_H$ in a later phase to prevent the
bid having a useful deviation to $(1,L,v_H)$ and 
winning in phase $k=0$. But, if:
\begin{align}
\label{eq:dp15}
N\epsilon&<(\alpha-1)v_H,
\end{align} 
then the DA cannot make this payment without failing no-deficit
(because $N\epsilon$ is an upper-bound on the surplus the auctioneer
could extract from bidders in this phase without violating
individual-rationality). 
We will later pick values of $\alpha, \epsilon$ and $v_H$, to satisfy
Eq.~(\ref{eq:dp15}). So, the bid must trade when it reports
$(1,(T_h+1)L,\alpha v_H)$, in the event that it would win with report 
$(1,L,v_H)$, as ``insurance" against this being the last phase with new
bids and asks. Moreover, it should trade for payment,
$p'\in[v_H-\epsilon,v_H]$, to ensure an agent with true type $(1,L,v_H)$
cannot benefit by reporting $(1,(T_h+1)L,\alpha v_H)$.
 
Now suppose that this was 
not the last phase with new bids and asks, and $T_h>1$. Now consider what would
happen if the patient bid in phase $k=0$ deviated and reported
$(1+T_hL,(T_h+1)L,v_L)$. As before, this bid would win with
probability at least $1/N$ by anonymity and reasonableness, but now
with some payment $p''\in[v_L-\epsilon,v_L]$.  Condition now 
on the case that the patient bid would win, both with a report 
of $(1,L,v_H)$ and with a report of
$(1+T_hL,(T_h+1)L,v_L)$. When truthful, it trades in phase $k=0$ with
payment at least $v_H-\epsilon$. If it had reported
$(1+T_hL,(T_h+1)L,v_L),$ it would trade in phase $k=T_h$ for payment at
most $v_L$. For strong truthfulness, the DA must 
make an additional payment to the patient agent of at least
$(v_H-v_L)-(v_H-(v_H-\epsilon))=v_H-v_L-\epsilon$. But, suppose
that the high and low values are such that,
\begin{align}
\label{eq:dp16}
(T_h+1)N\epsilon &< v_H-v_L-\epsilon.
\end{align}

Making this payment in this case would violate no-deficit,  because 
$(T_h+1)N\epsilon$ is an upper-bound on the surplus the auctioneer
can extract from bidders across all phases, including the current phase,
without violating individual-rationality. But now we can fix any $v_L>0$,
$\epsilon<v_L$ and choose $v_H>(T_h+1)N\epsilon+v_L+\epsilon$ to
satisfy Eq.~(\ref{eq:dp16}) and $\alpha>(N\epsilon/v_H)+1$ to satisfy
Eq.~(\ref{eq:dp15}). Thus, we have proved that no truthful dynamic DA
can choose a bid-ask pair to trade in period $k=0$. The proof can be
readily extended to show a similar problem with choosing a bid-ask
pair in any period $k<T_h$, by considering truthful type of
$(1+kL,(T_h+1)L,\alpha v_H)$.
\end{proof}

To drive home the negative result: notice that the number of efficient
trades can be increased without limit by choosing an arbitrarily large
$T_h$, and that no truthful, dynamic DA with these properties
will be able to execute even a
single trade in each of these $\{0,\ldots,T_h-1\}$ periods.
Moreover,  we see that only a {\em vanishingly small fraction of
high-patience agents} is required for this negative result. The proof
only requires that  at least one 
patient agent is possible in all of the high-valued phases.

\vspace{-0.2cm}
\section{Experimental Analysis}
\label{sec:empirical}

In this section, we evaluate in simulation each of the \chain-based DAs 
introduced in Section~\ref{sec:practical}.  We measure the allocative
efficiency (total value from the trades),  net efficiency (total value
discounted for the revenue that flows to the auctioneer), 
and revenue to the auctioneer. All values are normalized by the total
offline value of the optimal matching.

For comparison we also implement several 
other matching schemes: the truthful,
surplus-maximizing matching algorithm presented by Blum 
\etal~\citeyear{blum06}, an untruthful greedy matching
algorithm using truthful bids as input to provide an
upper-bound on performance, and an untruthful DA populated with
simple adaptive agents that are modeled after the Zero-intelligence Plus
trading algorithm that has been leveraged in the study of static
DAs~\shortcite{cliff:simple,preist:zip}. 

\subsection{Experimental Set-up}

Traders arrive to the market as a Poisson stream to exchange a single
commodity at discrete moments. This is a standard model of arrival
in dynamic systems, economic or otherwise.
Each trader, equally likely to be a buyer or seller, arrives after the
previous with an exponentially distributed delay, with probability density 
function (pdf):
\begin{align}
f(x)=\lambda e^{-\lambda x}, \ \ \ \ x\ge 0,
\end{align}
where $\lambda>0$ represents the arrival intensity in agents per
second. Later we present results as the {\em interarrival time}, 
$\frac{1}{\lambda},$ is varied between 0.05 and 1.5; i.e., as the
arrival intensity is varied between 20 and $\frac{2}{3}$.
A single trial continues until at least 5,000 buyers and 5,000 sellers have
entered the market. In our experiments we vary the maximal patience
$K$ between 2 and 10.
For the distribution on an agent's activity
period (or patience, $d_i-a_i$), we consider both a uniform
distribution with pdf:
\begin{align}
f(x)=\frac{1}{K}, \ \ \ \ x\in[0,K],
\end{align}
and a truncated exponential distribution with pdf:
\begin{align}f(x)=\alpha e^{-\alpha x}, \ \ \ \ x\in[0,K],\end{align}
where $\alpha=-\ln(0.05)/K$ so that 95\% of the underlying exponential
distribution is less than the maximal patience. 
Both arrival time and activity duration are rounded to the nearest integral
time period.
A trader who arrives and departs during the same period is assumed to 
need an immediate trade and is active for only one period.

Each trader's valuation represents a sample drawn at its arrival from a 
uniform distribution with spread 20\% about the current mean
valuation. (The value is positive for a bid and negative for an ask.) 
To simulate market volatility, we run experiments that 
vary the average valuation using Brownian motion, a common model for valuation 
volatility upon which many option pricing models are 
based~\shortcite{copeland:fintheory}.
At every time period, the mean valuation randomly increases or decreases by
a constant multiplier, $e^{\pm \gamma},$ where $\gamma$ is the approximate
volatility and varied between 0 and 0.15 in our experiments.

We plot the mean efficiency of 100 runs for each experiment, with the
same sets of bids and asks used across all double auctions. All
parameters of an auction rule are reoptimized for each market
environment; e.g., we can find the optimal fixed price and the optimal
smoothing parameters offline given the ability to sample from the
market model.

\subsection{Chain Implementation}

We implement \chain\ for 
the five price-based matching rules (history-clearing, history-median, history-McAfee, history-EWMA,
and fixed-price) and the three competition-based matching rules (McAfee, active-McAfee,
and windowed-McAfee).

The price-based implementations keep a fixed-size set of 
the most recently expired, traded, or priced-out offers, $H^t$.
Offers priced-out by their admission prices are inserted into $H^t$ 
prior to computing $p^t$.
The history-clearing metric computes a price to maximize the number of 
trades to agents represented by $H^t$ had they all been contemporary. 
The history-median metric chooses the price to be 
the median of the absolute valuation of the offers in $H^t$.
The history-McAfee method computes the ``McAfee price" for the
scenario where all agents represented by $H^t$ are simultaneously present. 
The EWMA metric computes an exponentially-weighted average of bids in the
order that they expire, trade, or price out.
The simulations initialize the price to the average of the mean buy and
sell valuations.
If two bids expire during the same period, they are included in arbitrary
order to the moving average.

None of the metrics require more than one parameter, which is
optimized offline with access to the model of the market environment. 
Parameter optimization proceeds by uniformly sampling the parameter range, smoothing the
result by averaging each result with its immediate neighbors.  The
optimization repeats twice more over a narrower range about the 
smoothed maximum, returning the parameter that maximizes (expected) allocative
efficiency.
None of the price-based methods
appeared to be sensitive to small ($<$10\%) changes in the size of $H^t$.
With most simulations, the window size was chosen to be about 150 offers.
For EWMA, the smoothing factor was usually chosen to be around 0.05 or lower.
The windowed-McAfee matching rule, however, was extremely sensitive to 
window size for simulations with volatile valuations, and the search process
frequently converged to suboptimal local maxima.

The admission price in the price-based methods is computed by first 
determining whether {\sc Match} would check the value of the bid
against bid price if the bid had arrived in some earlier period $t'$.
Rather than simulate the entire {\sc Match} procedure, it is sufficient
to determine the probability $\rho_i$ of this event. This is determined
by checking the construction of the strong no-trade sets in that
earlier period.
If $\mathrm{SNT}^{t'}$ contains non-departing buyers (sellers), then the 
probability that an additional  seller (buyer) would be examined is 1
and $\rho_i=1$.
Otherwise the probability is equal to the ratio
of the number of bids (asks) examined not included in $\mathrm{SNT}^t$
and one more than the total number of bids (asks) present.
Finally, with probability $\rho_i$ the
price the agent would have faced in period $t'$ is 
defined as $p^{t'}$ ($-p^{t'}$ for sellers), and otherwise it is
$-\infty$. Here, $p^{t'}$ is the history-dependent price defined
in period $t'$.

The competition-based matching rules price out all non-trading bids at the end of
each period in which trade occurs (because of the definition of strong no-trade
in that context). The admission prices are calculated by considering
the price that a bid (ask) would have faced in some period $t'$ before
its reported arrival. In such a period, the price for a bid (ask) is 
determined by inserting an additional bid (ask) with valuation $\infty$ (0)
and applying the competition-based matching rule to that (counterfactual) state.
From this we determine whether the agent would win for its reported value,
and if so what price it would face.

\subsection{Optimal Offline Matching}

We use a commercial integer program solver (CPLEX\footnote{www.ilog.com}) to
compute the optimal offline solution, i.e. with complete knowledge
about all offers received over time. In determining the offline solution we enforce the
constraint that a trade can only be executed if the activity periods of both 
buyer, $i,$ and seller, $j,$ overlap,
\begin{align}\label{eqn:overlap}
(a_i \le d_j) \ \wedge \ (a_j \le d_i)
\end{align}

An integer-program formulation to maximize total value is:
\begin{align}
\max \sum_{(i,j)\in\mathit{overlap}} & x_{ij}(w_i+w_j)\\
\quad \mbox{s.t. } 0  \leq
\sum_{i:(i,j)\in\mathit{overlap}}x_{ij}&\leq 1, \ \forall j\in\mathit{ask}
\notag \\
0 \leq \sum_{j:(i,j)\in\mathit{overlap}}x_{ij}&\leq 1, \ \forall i\in\mathit{bid}
\notag \\
x_{ij}&\in\{0,1\}, \ \forall i, j, \notag
\end{align}
where $(i,j)\in\mathit{overlap}$ is a bid-ask pair that could
potentially trade because they have overlapping arrival 
and departure intervals satisfying Eq.~(\ref{eqn:overlap}). 
The decision variable $x_{ij}\in\{0,1\}$
indicates that bid $i$ matches with ask $j$.
This provides the optimal, offline allocative efficiency.

\subsection{Greedy Online Matching}

We implement a greedy matching algorithm that immediately matches 
offers that yield non-negative budget surplus. This is a non-truthful
matching rule but  provides an 
additional comparison point for the efficiency of the other matching schemes.
During each time period, the greedy matching algorithm orders active bids
and asks by their valuations, exactly as the McAfee mechanism does, and matches
offers until pairs no longer generate positive surplus.
The algorithm's performance allows us to infer the number of
offers that the optimal matching defers before matching and the amount
of surplus lost by the McAfee method due to trade reduction and 
due to the additional constraint of admission pricing.

\subsection{Worst-Case Optimal Matching}

Blum \etal~\citeyear{blum06} 
derive a mechanism equivalent to our fixed-price matching 
mechanism, except that the price used is chosen from the cumulative distribution
\begin{align}\label{eqn:bsz-price}
D(x) = \frac{1}{r\alpha}\ln\left(\frac{x-w_{min}}{(r-1)w_{min}}\right),
\end{align}
where $r$ is the fixed point to the equation
\begin{align}
r = \ln\left(\frac{w_{max}-w_{min}}{(r-1)w_{min}}\right),
\end{align}
and $w_{min}\geq 0$ and $w_{max}\geq 0$ are the minimum and maximum absolute 
valuations of all traders in the market.
For our simulations, we give the mechanism the exact knowledge of the minimum
and maximum absolute valuations for each schedule.
Blum \etal~\citeyear{blum06} show that this method
guarantees an expected competitive ratio of
$\max(2,\ln(w_{max}/w_{min}))$ with respect to the optimal offline
solution in an adversarial setting. We were interested to see how will this performed in
practice in our simulations.

\subsection{Strategic Open-outcry Matching: ZIP Agents}

To compare \chain\ with the existing literature on
continuous double auctions, we 
implement a DA that in every period
sorts all active offers and matches the highest valued bids with the lowest 
valued asks so long as the match yields positive net surplus.  
The DA prices each trading pair at the mean of the pair's declared valuations.
{\em Since the trade price depends on a bidder's declaration,
the market does not support truthful bidding strategies.
We must therefore adopt a method to simulate the behavior of bidding
agents within this simple open outcry market.}

For this, we randomly assign each bid
to one of several ``protocol agents" that each use a modified 
ZIP trading algorithm, as initially presented by Cliff and 
Bruten~\citeyear{cliff:simple} and improved upon by Preist and van
Tol~\citeyear{preist:zip}. 
The ZIP algorithm is a common benchmark used to compare learned bidding 
behavior in a simple double-auction trading environment in which
agents are present at once and adjust their bids in seeking a
profitable trade.
We adapt the ZIP algorithm for use in our dynamic environment.

In our experiments we consider five of these protocol agents. 
New offers are assigned uniformly
at random to a protocol agent, which remains persistent throughout
the simulation. Each offer is associated with a patience category, $k\in
\{\mbox{low, medium, high}\}$, defined to evenly partition the
range of possible offer patience.
Each protocol agent, $j$, is defined with parameters
$(r_j,\beta_j,\gamma_j)$ and 
maintains a {\em profit margin}, $\mu_j^k$, on each patience category $k$.
Parameters $(\beta_j,\gamma_j)$ control the
adaptivity of the protocol agent in how it adjusts the target profit
margin on an individual offer, with $\beta_j\sim U(0.1,0.2)$ defining
the {\em offer-level learning rate} and $\gamma_j\sim U(0.2, 0.8)$ defining
the {\em offer-level damping factor}.
Parameter $r_j\in[0,1]$ is the {\em learning rate} adopted for updating
the profit margins. 

The protocol agents are trained over 10 trials and their final
performance is measured in the 11th trial. The learning rate 
decreases through the training session and depends on the
initial learning rate $r^0_j$ and the adjustment rate $r^+_j$.
In period $t\in\{1,\ldots,t^k_{\mbox{\tiny end}}\}$
of trial $k\in \{1,\ldots,T+1\}$, where $T=10$ is the number of trials
used for training and $t^k_{\mbox{\tiny end}}$ is the number of
periods in trial $k$, the learning rate is defined as:
\begin{align}
r_j := 1 - \left(r_j^0 + (k-1)r^+_j + \left(\frac{t}{t^k_{\mbox{\tiny
end}}}\right)^2r^+_j\right)
\end{align}
where $r^+_j=(1-r^0_j)/(T+1)$. We define $r^0_j:=0.7$. The effect of
this adjustment rule is that $r_j$ is initially 0.3, decreases
during training, and trends to
0.0 as $t\rightarrow t_{\mbox{\tiny end}}$ in trial $k=11$.

Within a given trial, upon assignment of a new offer $i$ in patience category $k$, 
the protocol agent managing the offer initializes $(\mu_i(t), \delta_i(t)):= (\mu_j^k, 0)$, where
$\mu_i(t)$ represents the {\em target profit margin} for the offer and
$\delta_i(t)$ represents a {\em profit-margin correction} term. 
The target profit margin and the profit margin correction term are
adjusted for offer $i$ in subsequent periods while the bid remains active.

The target profit margin is used to define a bid price for the offer
in each period while it remains active:
\begin{align}
\hat{w}_i(t)& :=
w_i(1+\mu_i(t)).
\label{eq:bidprice}
\end{align} 

At the end of a period in which an offer matches or simply expires,
the profit margin $\mu_j^k$ for its patience category is updated
as:
\begin{align}
\mu^k_j:= (1-r_j)\mu^k_j + r_j \mu_i(t),
\end{align}
where the amount of adaptivity depends on the learning rate $r_j$. 
Because the profit margin on an offer decays over its lifetime, this
update adjusts towards a small profit margin if the offer expires or
took many periods to trade, and a larger profit margin otherwise.
The long-term learning of a protocol agent occurs through the profit
margin assigned to each patience category. 

At the start of a period each protocol agent also computes {\em target
prices} for bids and asks in each patience category. 
These are used  to drive an adjustment  in the target profit margin
for each active bid and ask. 
Target prices $\tau^k_b(t)$ and $\tau^k_s(t)$ are computed as:
\begin{align}
\tau^k_b(t) & := \left\{
\begin{array}{ll}
 (1+\eta) \max\limits_{i\in B^k(t-1)}\{\hat{w}_i(t-1)\} +\xi
&
\mbox{, if $0 > \!\!\!\max\limits_{i\in S(t-1)}\{\hat{w}_i(t-1)\} + \!\!\!
	           \max\limits_{i\in B^k(t-1)}\{\hat{w}_i(t-1)\}$
}
\\
(1-\eta) \max\limits_{i\in B^k(t-1)}\{\hat{w}_i(t-1)\} -\xi
& \mbox{,	    otherwise}
\end{array}
\right.
\end{align}
and,
\begin{align}
\tau^k_s(t) & := 
\left\{
\begin{array}{ll}
(1+\eta) \max\limits_{i\in S^k(t-1)}\{\hat{w}_i(t-1)\} +\xi
& \mbox{, if $0 >\!\!\! \max\limits_{i\in B(t-1)}\{\hat{w}_i(t-1)\} + \!\!\!
	           \max\limits_{i\in S^k(t-1)}\{\hat{w}_i(t-1)\}$}
\\
          (1-\eta) \max\limits_{i\in S^k(t-1)}\{\hat{w}_i(t-1)\} -\xi
&
\mbox{, otherwise}
\end{array}
\right.
\end{align}
where $\xi, \eta \sim U(0,0.05)$. Here, $B(t-1)$ and $S(t-1)$ denote
the set of active bids and asks in the market in period $t-1$ (defined
before
market clearing), and $B^k(t-1)$ and $S^k(t-1)$ denote the restrictions to
patience category $k$. The target price on a bid in category $k$ is set to
something slightly {\em greater} than the most competitive bid in the
previous round when that bid could not trade, and slightly less
otherwise. Similarly for the target price on asks, where 
these prices are negative, so that increasing the target price makes an
ask more competitive.

Target prices are used to adjust the target profit margin at the start
of each period on all
active offers that arrived in some earlier period,
where the influence of target prices is through the profit-margin
correction term:
\begin{align}
\mu_i(t) & := 
\frac{(\hat{w}_i(t-1) + \delta_i(t))}{w_i}  - 1,
\label{eq:dp18}
\end{align}
and the profit-margin correction term, $\delta_i(t)$, is defined in
terms of the target price $\tau^k_i(t)$ (equal to $\tau^k_b(t)$ if $i$
is a bid and $\tau^k_s(t)$ otherwise) as,
\begin{align}
\delta_i(t) & := \gamma_j \delta_i(t-1) + (1-\gamma_j) \beta_j (\tau^k_i(t)
						     -
\hat{w}_i(t-1)),
\label{eq:dp19}
\end{align}
where $\gamma_j$ and $\beta_j$ are the offer-level learning rates and
damping factor. The value $w_i$ and the ``-1" term in 
Eq.~(\ref{eq:dp18}) provide
normalization. Eq.~(\ref{eq:dp19}) is the Widrow-Hoff \shortcite{hassoun:fnn}
rule, designed to minimize the least mean square error in the 
profit margin and adopted here to mimic earlier ZIP designs.

\subsection{Experimental Results}

Our experimental results show that market conditions drive DA choice.
We compare allocative efficiency, revenue, and net efficiency. 
All results are averaged over 100 trials.
In experiments we found only minimal qualitative differences between the
use of the two patience distributions. The uniform patience
distribution provides a slight increase in
efficiency over result using exponential patience, caused by a larger 
proportion 
of patient agents which relaxes somewhat the admission-price constraint
in Eq.~(\ref{eqn:admit-price}).
For this reason we choose to report only results for
the uniform patience distribution. 

While the performance of all methods are summarized in 
Table~\ref{tab:revenue}, we omit the performance of some markets from
the plots to keep the presentation of results
as clear as possible. We do not plot the price-based results for the 
median- or clearing-based prices because the performance was
typically around that of the performance of \chain\ instantiated
on the history-EWMA price. We do not plot the
windowed-McAfee results because of inconsistent performance,
and in most cases, upon manual inspection, it was optimal to choose the smallest
possible window size, i.e. including only active bids and making it
equivalent to active-McAfee.

Our plots also leave out the performance of the Blum \etal~\citeyear{blum06}
worst-case optimal matching scheme because it was
dominated by the fixed-price \chain\ instantiation and in
many cases failed to yield any substantial surplus. We note here
that the modeling assumption made by Blum \etal~\citeyear{blum06} is
quite different than that in our work: they worry about performance in an adversarial environment while we consider 
probabilistic environments. Our fixed-price \chain\ mechanism 
operates essentially identically to the surplus-maximizing 
scheme of Blum \etal~\citeyear{blum06}, except that \chain\ can also use additional statistical
information to set the ideal price, rather than drawing the price from a 
distribution that is used to guarantee worst-case performance
against an adversary. We defer the results for the Blum \etal~\citeyear{blum06} scheme to Table~\ref{tab:revenue}.

Figures~\ref{fig:ivS}--\ref{fig:pvs} plot results from two sets of
experiments, one for high-patience/low-volatility and one for 
low-patience/high-volatility, as we vary the inter-arrival time (and
thus the arrival intensity), volatility and maximal patience. All
plots  are for allocative efficiency except Figure~\ref{fig:ivs}, where
we consider net efficiency.
Active-McAfee is included on Figure~\ref{fig:ivS}, but not on any other
plots because it did not improve upon the McAfee performance in the 
other environments. To emphasize: the results for {\em greedy} provide
an upper-bound on the best possible performance because this is
a non-truthful algorithm, simulated here with truthful inputs. 

\begin{figure}
\centerline{
	\psfig{figure=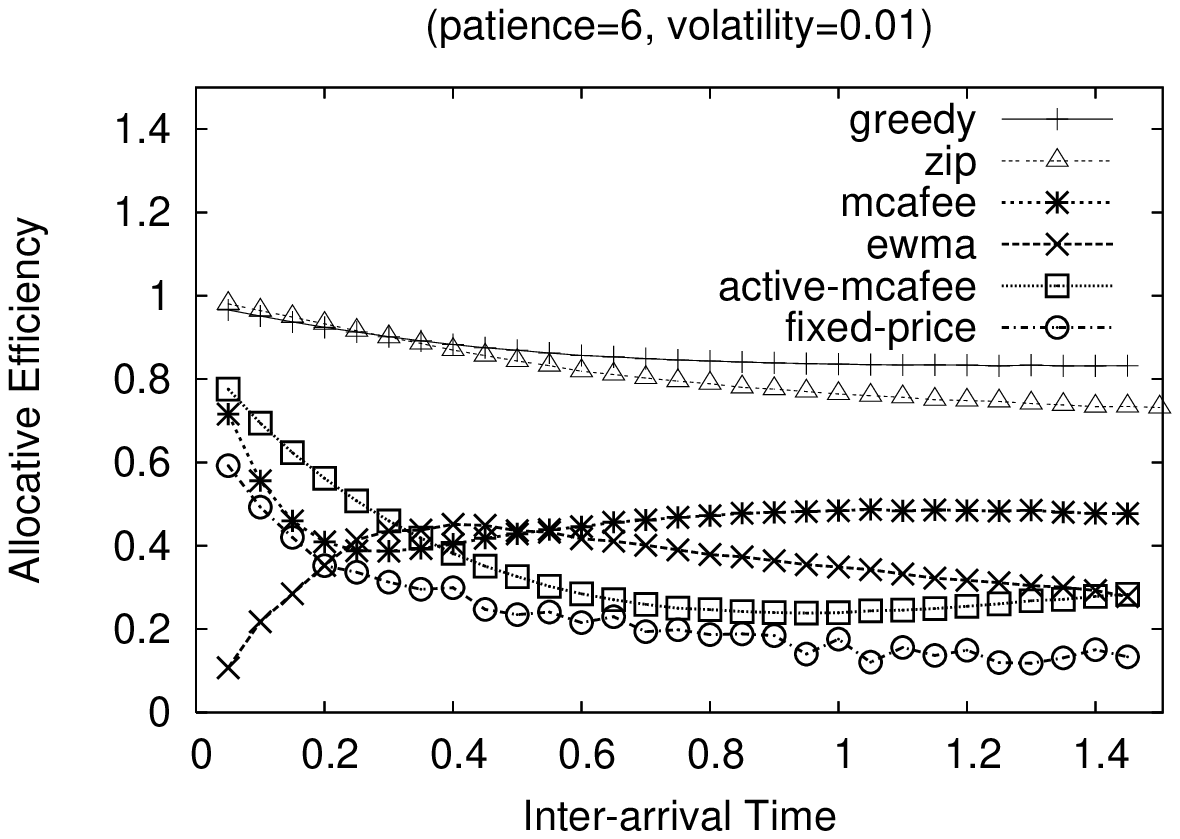,width=8.5cm}
	\psfig{figure=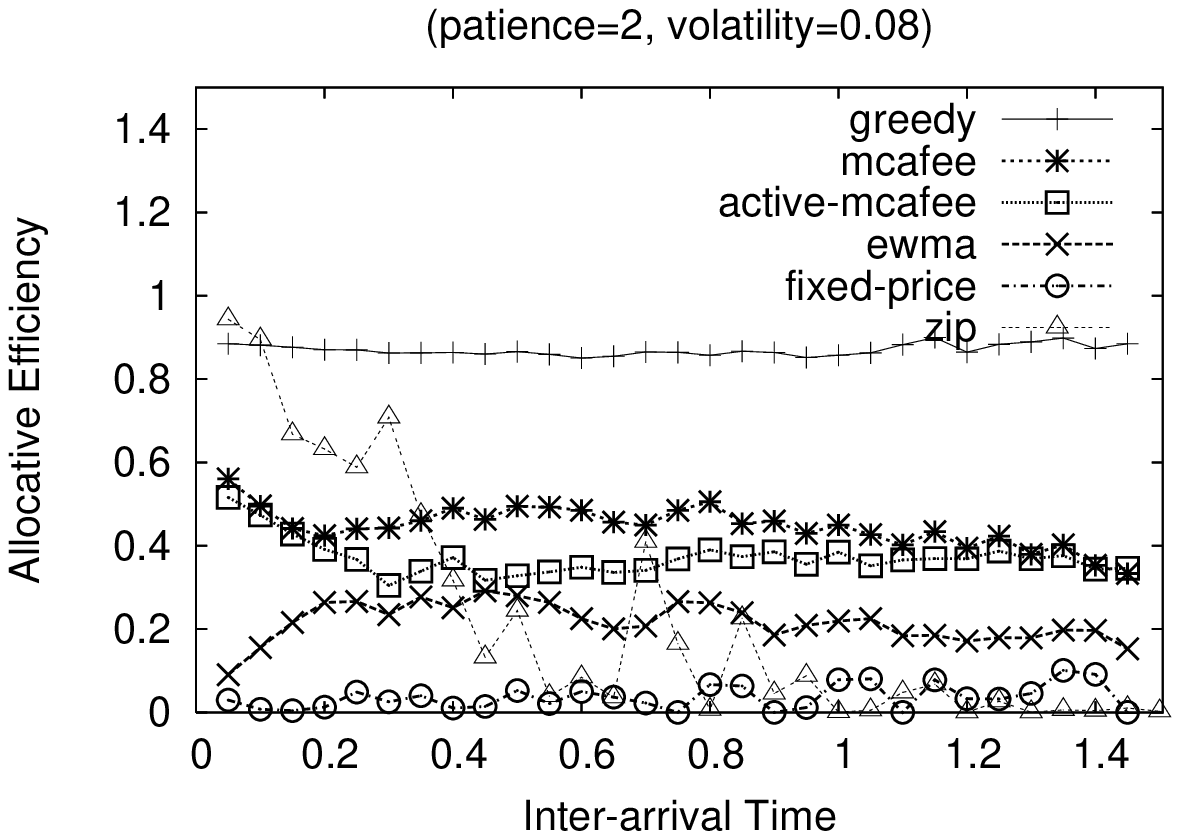,width=8.5cm}}
\caption{\small Allocative efficiency vs. inter-arrival time (1 / intensity)
for several DAs.  The left plot shows high-patience, low-volatility simulations,
whereas the right plots results from low-patience, high-volatility runs.
Both sets of experiments use uniform patience distributions.
\label{fig:ivS}}\vspace{-0.5cm}
\end{figure}

\begin{figure}
\centerline{
	\psfig{figure=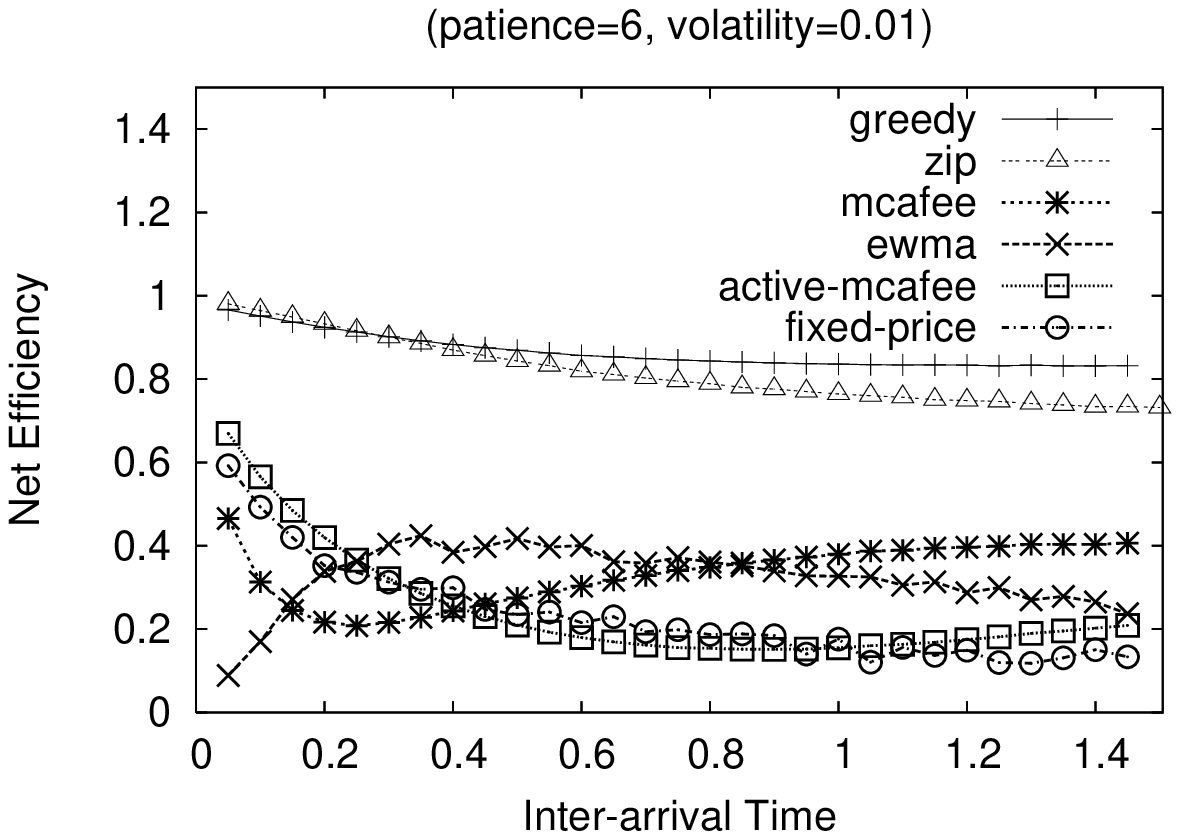,width=8.5cm}
	\psfig{figure=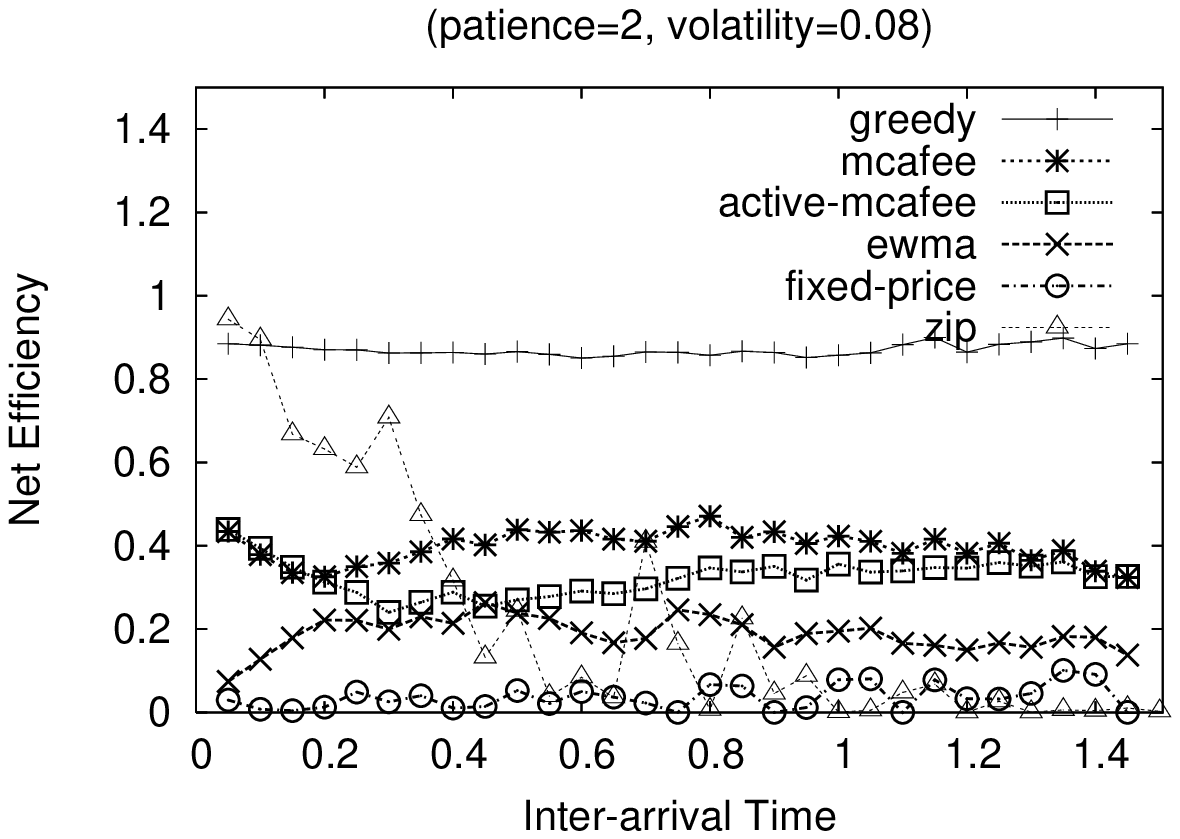,width=8.5cm}}
\caption{\small Net efficiency vs. inter-arrival time (1 / intensity) for several DAs.
The left plot shows high-patience, low-volatility simulations,
whereas the right plots results from low-patience, high-volatility runs.
Both sets of experiments use uniform patience distributions.
\label{fig:ivs}}\vspace{-0.5cm}
\end{figure}

In Figure~\ref{fig:ivS} (left) we see that 
from within the truthful DAs, the McAfee-based DA has the best 
efficiency for medium to low arrival intensities. 
There also is a general decrease in performance, relative to the
optimal offline solution, as the arrival intensity falls. This trend,
also observed with the greedy (non-truthful) DA, occurs because the
\chain\ scheme is myopic in that it matches as soon as the static
DA building block finds a match, while it is better to be less myopic
when arrival intensity is low.
The McAfee-based DAs are less sensitive to this than other methods because
they can aggressively update prices using the 
active traders. 
The price-based DAs experience inefficiencies due to the lag in price 
updates because they use only expired, traded, and priced-out offers
to calculate prices. 

For very high arrival intensity we see Active-McAfee dominates
McAfee. Active-McAfee smooths the price, which helps to
mitigate the impact of fluctuations in cost on the admission price
via Eq.~(\ref{eqn:admit-price}) in return for less responsiveness.
This is helpful in  ``well-behaved" markets with high arrival
intensity and low volatility but was not helpful in most environments we studied, where the additional responsiveness provided by the (vanilla) McAfee scheme
paid off.  

The ZIP market also has good performance in this
high-patience/low-volatility environment. The reason is simple: this
is an easy environment for simple learning agents, and the agents quickly
learn to be truthful. We emphasize that these ZIP market results should be treated with caution and are certainly optimistic. This is because 
the  ZIP agents are   not programmed to consider timing-based manipulations. 
The effect in this environment 
is that the ZIP market tends to operate as if a truthful market, but without
the cost of imposing truthfulness explicitly via market-clearing rules.
By comparison the \chain\
auctions are fully strategyproof, to both value and temporal
manipulations.

Compare now with Figure~\ref{fig:ivS} (right), which is for low
patience and high volatility. Now we see that McAfee dominates across
the range of arrival intensities. Moreover, the performance of ZIP 
is now quite poor because the agents do not have enough time to adjust
their bids (patience is low) and high volatility makes this a more
difficult environment. 
With volatile valuations, the possibility of valuation swings leaves open 
the possibility of larger profits, 
luring agents to set wider profit margins, but only after the market
changes.
The ZIP agents also
have fewer concurrent competitive offers to use in setting useful price
targets during learning. 
As we might expect,
high volatility also negatively impacts the efficiency of the
fixed-price scheme. 

In Figure~\ref{fig:ivs} we see that the net efficiency  trends
are qualitatively similar except that 
the competition-based DAs such as McAfee fare less well in comparison with the
price-based DAs. The auctioneer accrues more revenue for competition-based 
matching rules such as McAfee because they often generate buy and
sell prices with a spread. Together with the competition-based schemes being
intrinsically more dynamic, this drives an increased price spread in
\chain\ via the admission price constraints. 
In Figure~\ref{fig:ivs} (left) we see
that the fixed-price scheme
performs well for high arrival intensity while
EWMA dominates for intermediate arrival intensities. The McAfee 
scheme is still dominant  for lower
patience and higher volatility (Figure~\ref{fig:ivs}, right).

To reinforce these observations, in  Table~\ref{tab:revenue} we present
the the net efficiency, allocative efficiency and (normalized) 
revenue across all arrival intensities (i.e. 
inter-arrival time from 0.05 to 1.5) and 
for both low and high volatility trials. All five price-based
methods, all three competition-based methods, and all three comparison
methods are included. 
We highlight the best performing competition-based method, price-based
method, as well as the performance of the ZIP market (skipping over the non-truthful,
greedy algorithm). We omit information about the  mean standard error for each measurement because in no case did this error 
exceed a tenth of a percent of 
the mean optimal surplus.
From within the truthful DAs, we see that the McAfee-based scheme dominates overall for both
allocative and net efficiency and both low and high volatility,
although EWMA competes with McAfee for net efficiency in low
volatility markets. Notice also the good performance of the ZIP-based
market (with the aforementioned caveat about the restricted strategy space)
at low volatilities. 

\begin{table}[tb] 
\begin{center}
\begin{tabular}{l|lll|lll} 
&\multicolumn{6}{c}{\bf scenario} \\
&\multicolumn{3}{c|}{low-volt/high-pat}&\multicolumn{3}{c}{high-volt/low-pat}\\
& net & alloc & rev & net & alloc & rev \\\hline

mcafee			& {\bf 0.33}	& {\bf 0.47}	&{\bf  0.14}
& {\bf 0.40}	& {\bf 0.45}	& {\bf 0.05}\\
active-mcafee	& 0.24	& 0.35	& 0.11	& 0.32	& 0.37	& 0.05\\
windowed-mcafee	& 0.24	& 0.26	& 0.02	& 0.21	& 0.23	& 0.03\\\hline
history-clearing& 0.33	& 0.34	& 0.01	& 0.17	& 0.17	& 0.01\\
history-ewma	& {\bf 0.33}	& {\bf 0.35}	& {\bf 0.03}	& {\bf
0.19}	& {\bf 0.22}	& {\bf 0.03} \\
history-fixed	& 0.23	& 0.23	& 0.00	& 0.04	& 0.04	& 0.00\\
history-mcafee	& 0.33	& 0.34	& 0.01	& 0.15	& 0.16	& 0.01\\
history-median	& 0.33	& 0.34	& 0.01	& 0.17	& 0.18	& 0.01\\\hline
blum \etal& 0.10	& 0.10	& 0.00	& 0.02	& 0.02	& 0.00\\
greedy			& 0.86	& 0.86	& 0.00	& 0.87	& 0.87	& 0.00\\
zip				& {\bf 0.82}	& {\bf 0.82}	& {\bf
0.00}	& {\bf 0.23}	& {\bf 0.23}	& {\bf 0.00}\\
\end{tabular} 
\end{center} 
\caption{\small Net efficiency, allocative efficiency and auctioneer
revenue (all normalized by the optimal value from trade), 
averaged across all arrival 
intensities (0.05--1.5) and for low and high value volatility. 
The best performing
competition-based, price-based and `other' (ignoring greedy, which
is not truthful) results are highlighted.
\label{tab:revenue}}
\vspace{-0.5cm}
\end{table}

Figure~\ref{fig:vvs} plots allocative efficiency versus volatility
for high patience (left) and low patience (right) and for fairly
low arrival intensity.
Higher volatility hurts all methods~-- especially the ZIP agents, 
which struggle to learn appropriate profit and price targets, probably due to
few opportunities to update prices for every individual offer.
The McAfee scheme fairs very well, showing
good robustness for both large patience and small patience
environments. The fixed-price scheme has the best performance when
there is zero volatility but its efficiency falls off extremely
quickly as volatility increases. 

\begin{figure}
\centerline{
	\psfig{figure=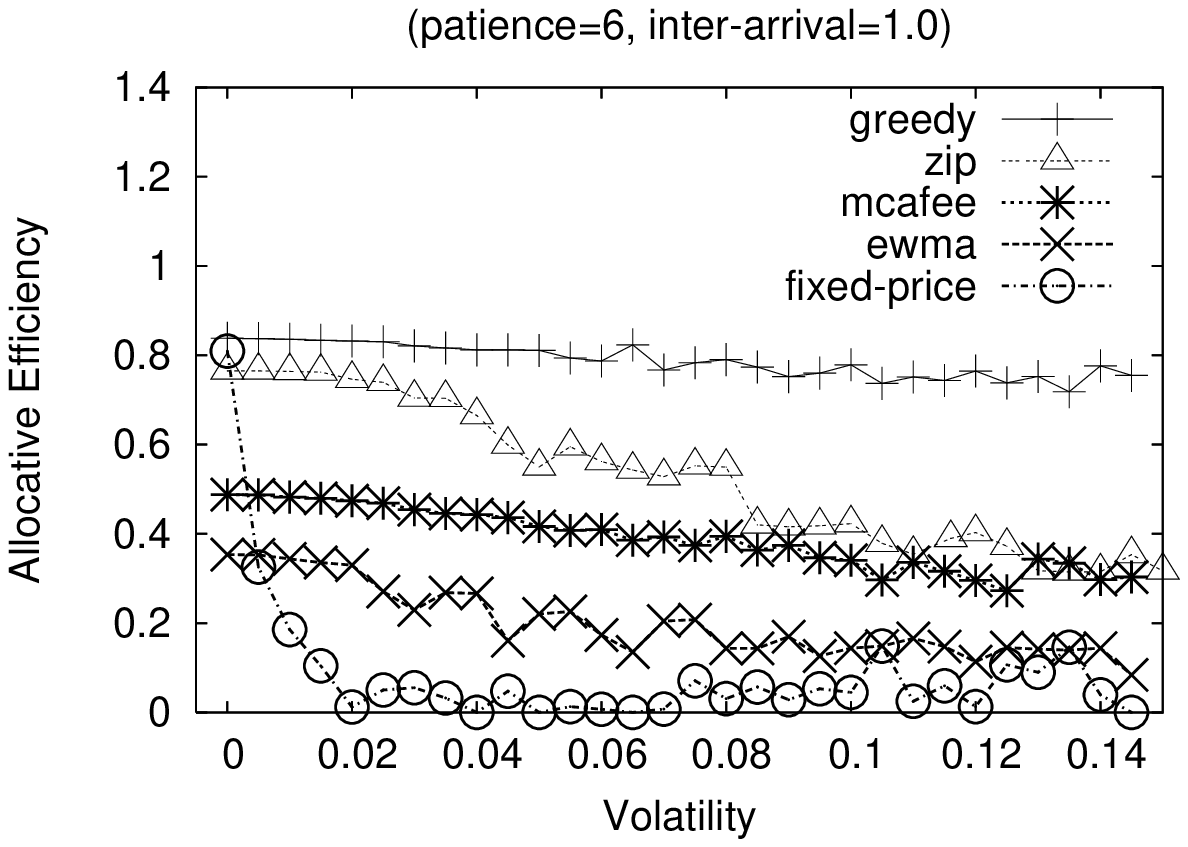,width=8.5cm}
	\psfig{figure=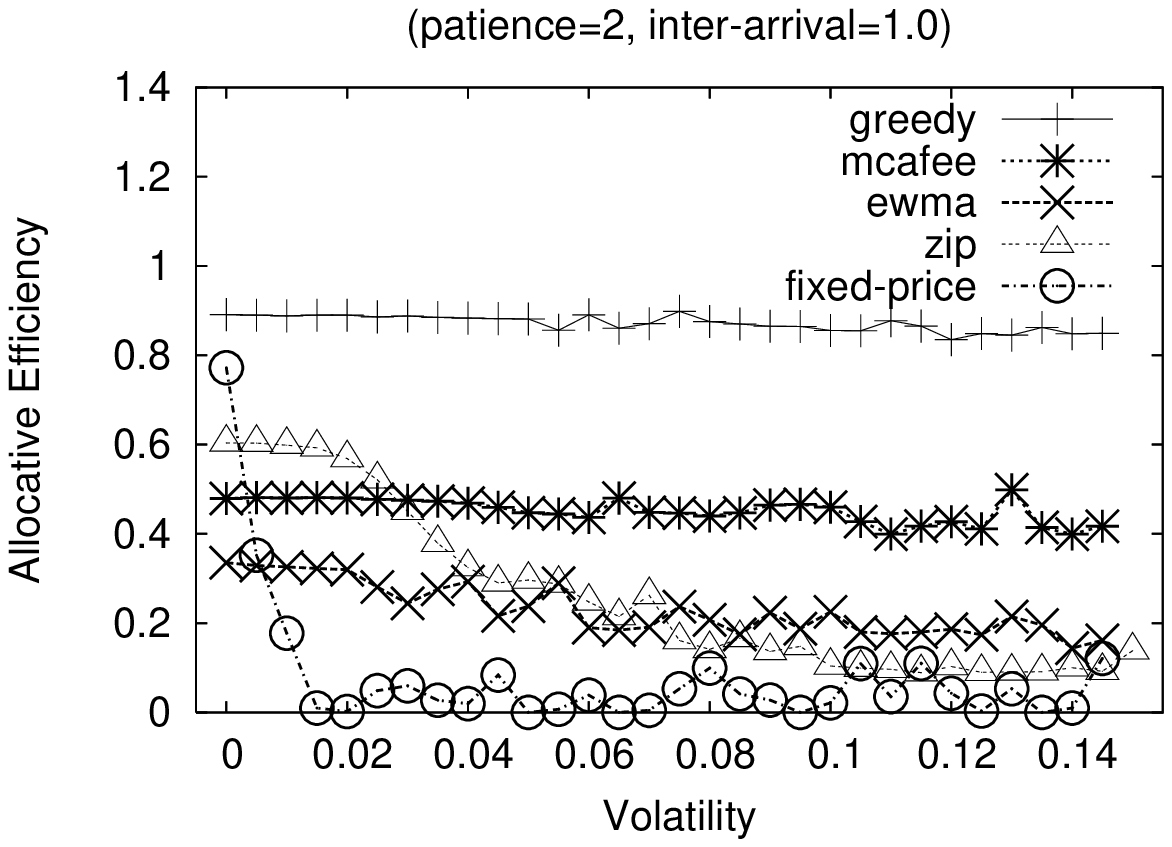,width=8.5cm}}
\caption{\small Allocative efficiency vs. volatility for several DAs
for a fairly low arrival intensity. The left plot is for large maximal 
patience and the right plot is for small maximal patience. 
Both sets of experiments use uniform patience distributions.
\label{fig:vvs}}\vspace{-0.6cm}
\end{figure}

We also consider the effect of varying maximal patience. This is
shown in Figure~\ref{fig:pvs}, with low volatility (left) and 
high volatility (right). Again, the McAfee
scheme is the best of the truthful DAs based on \chain. We also see that the performance
of ZIP improves as patience increases due to more opportunities for
learning.
Perversely, a larger patience can negatively affect the truthful
DAs. In part this is simply because the performance of greedy online
schemes, relative to the offline optimal, decreases as patience 
increases and 
the offline optimal matching is able to draw more benefit
from its lack of myopia. 
\begin{figure}
\centerline{
	\psfig{figure=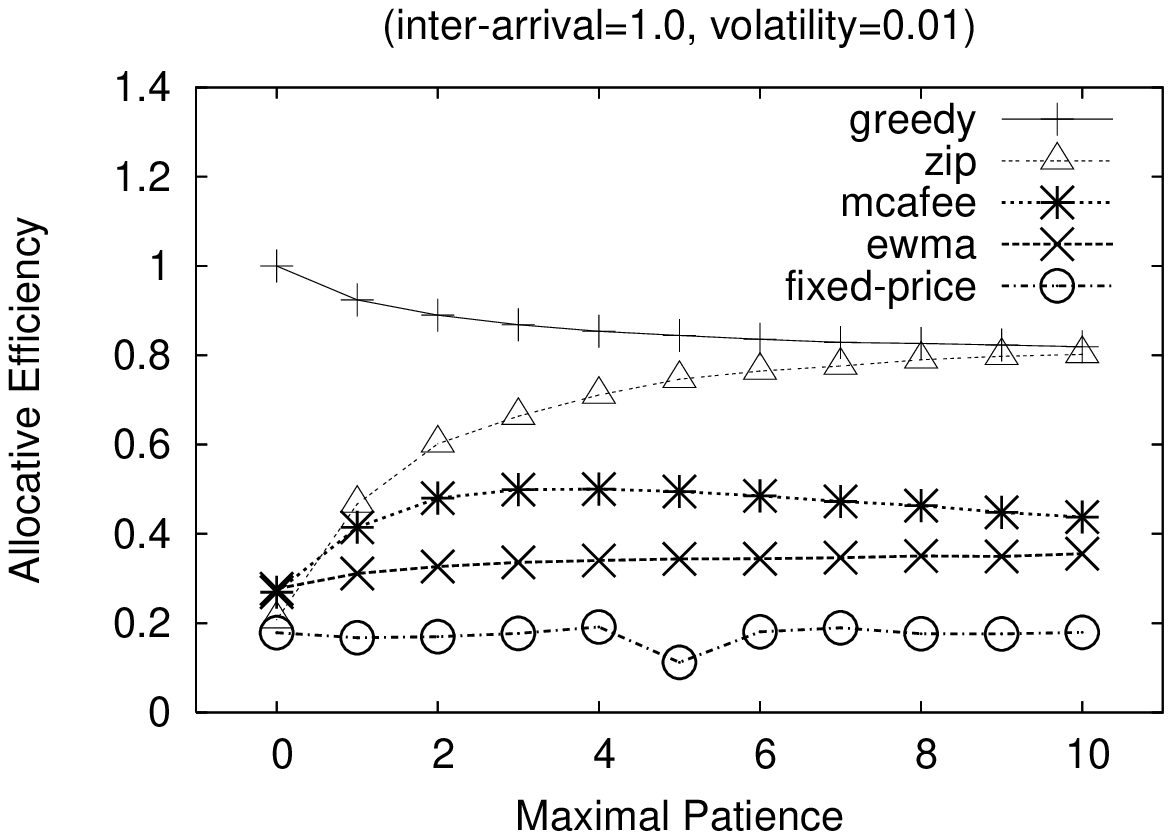,width=8.5cm}
	\psfig{figure=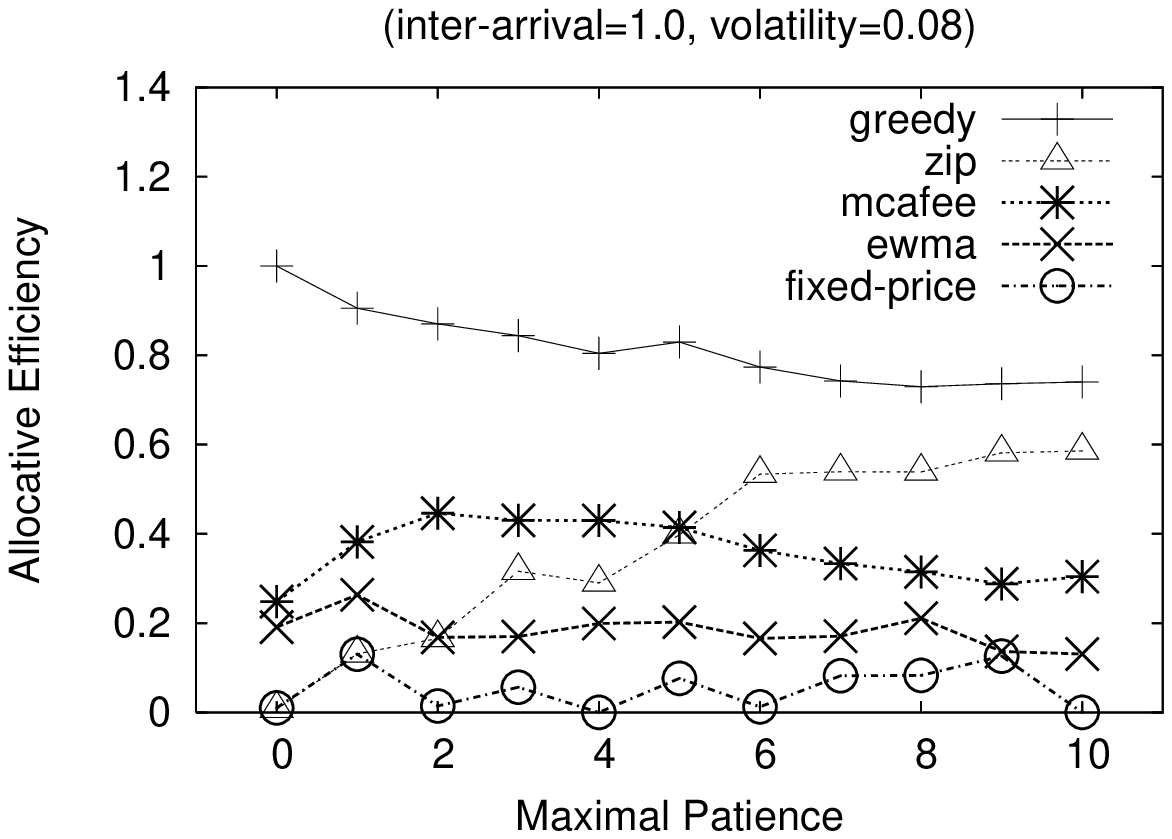,width=8.5cm}}
\caption{\small Allocative efficiency vs. maximal patience for
several DAs and fairly low arrival intensity.
 The left plot is for low volatility and the right plot is
for high volatility.
Both sets of experiments use uniform patience distributions.
\label{fig:pvs}}\vspace{-0.6cm}
\end{figure}

We also suspected another culprit, however.
The possibility
of the presence of patient agents requires the truthful DAs to include 
additional terms
in the $\max$ operator in Eq.~(\ref{eqn:admit-price}) to prevent
manipulations, leading to higher admission prices and less admitted
offers. 
To better understand this effect we experimented with delayed market
clearing in the McAfee scheme, where
the market matches agents only every $\tau$-th period (the ``clearing duration").
The idea is to make a tradeoff between using fewer admission prices and the
possibility that we will miss the opportunity to match some impatient
offers. 

Figure~\ref{fig:cvs} shows allocative efficiency when the matching mechanism 
clears less frequently and for different maximal patience,
$K$. Figure~\ref{fig:cvs} (left) is for low volatility. There we see
that the best clearing duration is roughly 1, 2, 3 and 4 for maximal
patience of $K\in\{4,6,8,10\}$ and that by optimizing the clearing
duration the performance of McAfee remains approximately constant as
maximal patience increases. In   Figure~\ref{fig:cvs} (right)
we consider the effect in a high volatility environment, with these
results averaged over 500 trials because the performance of the DA has
higher variance. We see a qualitatively similar trend, although higher
maximal patience now hurts overall and cannot be fully compensated for
by tuning the clearing duration.
\begin{figure}
\centerline{
	\psfig{figure=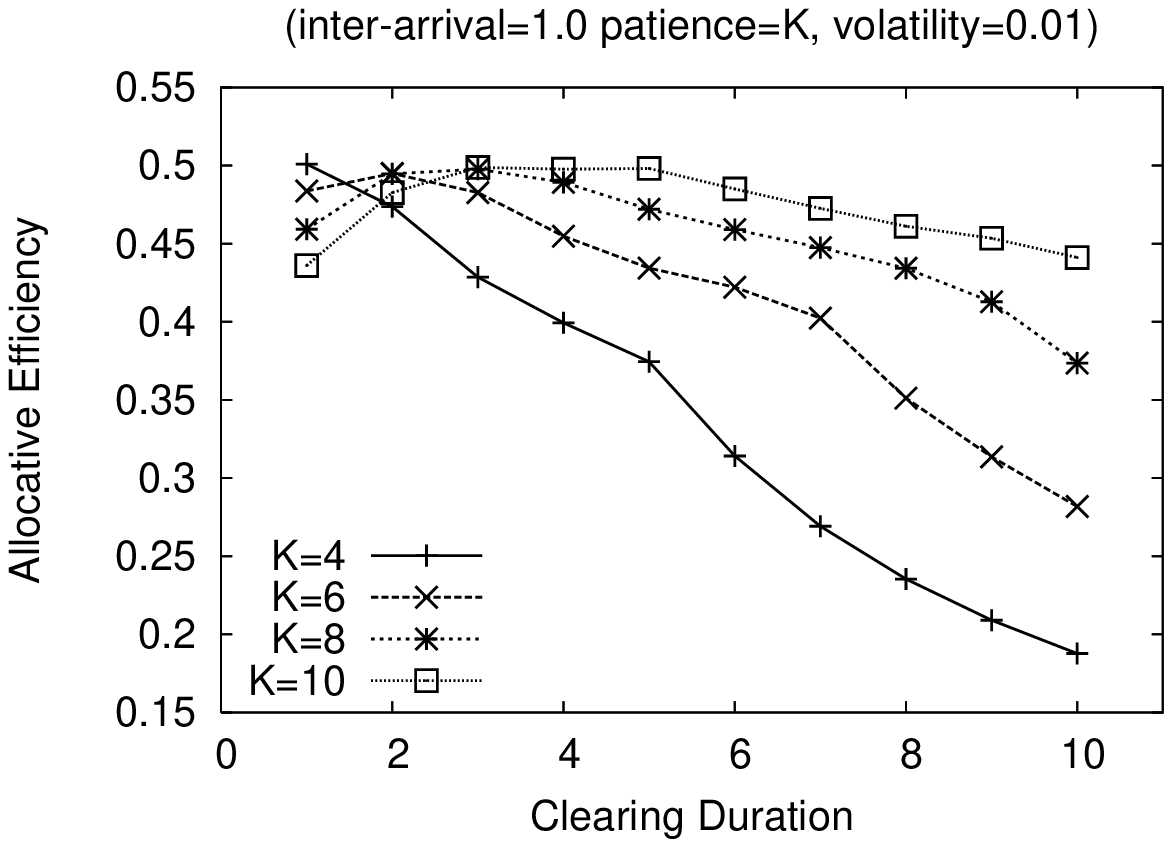,width=8.5cm}
	\psfig{figure=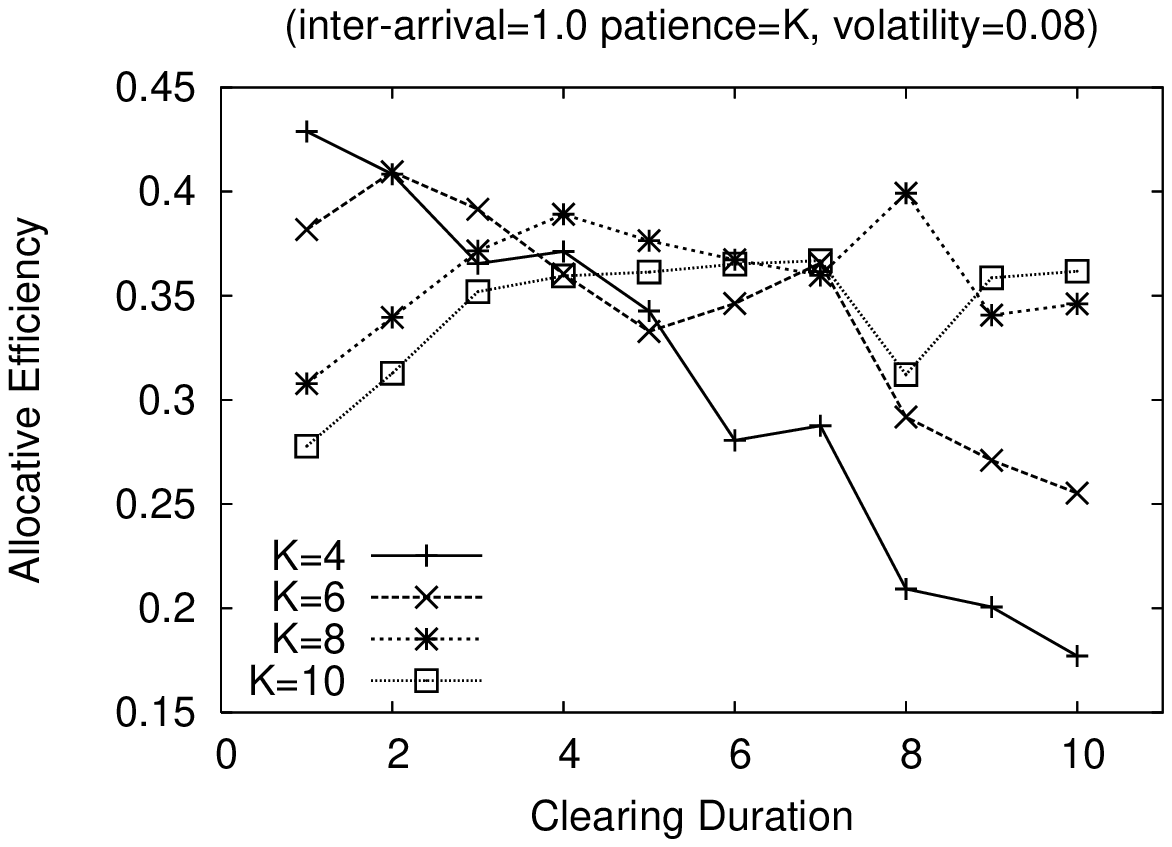,width=8.5cm}}
\caption{\small Allocative efficiency vs. clearing duration in the McAfee-based
\chain\ auction for fairly low arrival intensity and as maximal patience is varied from 4 to 10. The left plot is for low volatility and the right plot is for high volatility. Both sets of experiments use uniform patience distributions.
\label{fig:cvs}}\vspace{-0.5cm}
\end{figure}

\section{Related Work}
\label{sec:related}

Static two-sided market problems have been widely
studied~\shortcite{myerson:efficient,chatterjee87,satterthwaite89,yoon01,deshmukh02}.
In a classic result, 
Myerson and Satterthwaite proved that it is
impossible to achieve efficiency with voluntary participation and
without running a deficit, even relaxing dominant-strategy equilibrium to a
Bayesian-Nash equilibrium.
Some truthful DAs are known for static
problems~\shortcite{mcafee:double92,huang:double02,babaioff04a,babaioff05}.
For instance, McAfee introduced a
DA that sometimes forfeits trade in return for achieving
truthfulness. McAfee's auction achieves asymptotic
efficiency as the number of buyers and sellers increases.
Huang \etal\ extend McAfee's mechanism to handle agents exchanging 
multiple units of a single commodity.  Babaioff and colleagues
have considered extensions of this work to supply-chain and spatially
distributed markets.

Our problem is also similar to a traditional {\em continuous} double
auction (CDA), where buyers and sellers may at any time submit offers to a market that 
pairs an offer as soon as a matching offer is submitted. 
 Early work considered market efficiency
of CDAs with human experiments in labs~\shortcite{Smith62}, while 
recent work investigates the use of software agents to execute 
trades~\shortcite{Rust94,cliff:simple,gjerstad:priceform,tesauro:cda}.
While these markets have no dominant strategy equilibria, populations of
software trading agents can learn to extract virtually all available surplus,
and even simple automated trading strategies outperform human 
traders~\shortcite{das01agenthuman}. However, these studies of CDAs assume
that all traders share a known deadline by
which trades must be executed. This is quite different from our
setting, in which we have dynamic arrival and departure.

Truthful one-sided online auctions, in which agents arrive and depart across time, have
received some recent
attention~\shortcite{lavi00,hajiaghayi:online04,hajiaghayi05,porter04,lavi05}.
We adopt and extend the
monotonicity-based truthful characterization in the 
work of Hajiaghayi \etal~\citeyear{hajiaghayi05} in developing our
framework for truthful DAs. 
Our model of DAs must also address some of the
same constraints on timing that occur in Porter, 
Hajiaghayi, and Lavi and Nisan's work. 
In these previous works, the items were {\em
reusable} or {\em expiring} and could only be allocated in particular
periods. In our work we provide limited allowance to the match-maker,
allowing it to hold onto a seller's item until a matched buyer
is ready to depart (perhaps after the seller has departed).

The closest work in the literature is due to 
Blum \etal~\citeyear{blum06}, who present
online matching algorithms for the same dynamic DA model.
The main focus in their paper is on the design of matching algorithms 
with good worst-case performance in an adversarial setting, i.e. 
within the framework of competitive analysis. Issues related to
incentive compatibility receive less attention. One way in which
their work is more general is that they also study goals of {\em
profit} and {\em maximizing the number of trades}, in addition to 
the goal of maximizing social welfare that we consider in our 
work. However, the only algorithmic result that they present that is
truthful in our model (where agents can misreport arrival and
departure) is for the goal of social welfare. The DA that they
describe is an instance of \chain\ in which a {\em fixed 
price} is drawn from a distribution at the start of time, and used
as the matching price in every period. Perhaps unsurprisingly, given
their worst-case approach, 
we observe that their auction performs significantly
worse than \chain\ defined for a fixed price that is picked to
optimize welfare given distributional information about the domain.

\vspace{-0.3cm}
\section{Conclusions}
\label{sec:conclude}

We presented a general framework to construct algorithms to match buyers
with sellers in online markets where both valuation and activity-period 
information are private to agents.
These algorithms guarantee truthful dominant strategies by first
imposing a minimum admission price for each offer and then pricing and pairing the
offer at the first opportunity.
At the heart of the \chain\ framework lies a  pricing algorithm that must
for each offer either determine a price independent of any information 
describing the offer or choose to discard the offer.
The pricing algorithm should be chosen to match market conditions.
We present several examples of suitable pricing schemes, 
including fixed-price,  moving-average, and McAfee-based schemes. 

More often than not, we find that the competition-based scheme that employs
a McAfee-based rule to truthfully price the market delivers the best
allocative efficiency. For exceptionally low volatility and high
arrival intensity, we find that adaptive price-based schemes such as
an exponentially-weighted moving average (EWMA) and even fixed price schemes 
perform well. We see qualitatively similar results for net
efficiency, where the revenue that accrues to the auctioneer is
discounted, albeit that the price-based rules such as EWMA have improved
performance  because they have no price spread. 
The observations are rooted in simulations comparing the market efficiency
under each mechanism with the optimal offline solution.

Additionally, we compare the efficiency of our truthful markets with 
a fixed-price worst-case optimal scheme presented by 
Blum \etal~\citeyear{blum06}, 
a market of strategic agents using
a variant on the ZIP price update algorithm developed by Cliff and
Bruten~\citeyear{cliff:simple} for continuous double auctions,
and a non-truthful, greedy matching algorithm to provide an
upper-bound on performance. The best of our schemes yield
around 33\% net efficiency in low volatility, high patience
environments and 40\% net efficiency in high volatility, low patience
environments,  while the greedy bound suggests that as
much as 86\% efficiency is possible with non-strategic agents.
We note that the Blum \etal scheme, designed for adversarial settings, 
fairs poorly in our simulations ($< 10\%$). 

One can argue, we think convincingly, that truthfulness brings
benefits {\em in itself} in that it avoids the waste of costly
counterspeculation and promotes fairness in
markets~\shortcite{sandholm:compvic,abdulkadiroglu06}.
On the other hand, it is certainly of interest that the
gap between the efficiency of greedy matching with non-truthful 
matching and that of our truthful auctions is so large.
Here, we observe that the ZIP-populated
(non-truthful) markets achieve around 82\% efficiency in 
low volatility environments but collapse to around
23\% efficiency in high volatility environments. Based on this, one 
might conjecture
that designing for truthfulness is especially important in
badly behaved, highly volatile (``thin") 
environments but less important in well behaved, less volatile
(``thick") environments.

Formalizing this tradeoff between providing absolute truthfulness
and approximate truthfulness, and while considering the nature 
of the environment, is an interesting direction for
future work \shortcite<see paper by>{parkes01}. Given that
reporting of market statistics can be incorporated within our
framework (see Section~\ref{sec:report}), and given
that markets also play a role in information aggregation and
value discovery, future research should also consider this additional 
aspect of market design. Perhaps there is an  interesting tradeoff between efficiency, truthful
value revelation, and the process of information aggregation.

While the general \chain\ framework achieves good efficiency, further tuning seems 
possible. One direction is to adopt a {\em meta}-pricing scheme that
chooses, or blends, prices from competing algorithms.
Another direction is to consider richer temporal
models; e.g., the value of goods to agents might decay or grow over time to
better account for the time value of assets.
A richer temporal model might also consider the possibility of agents 
or the match-maker  taking short positions (including
short-term cash deficits) to increase trade.
It is also  interesting to extend our work to markets with non-identical
goods and more complex valuation models such as bundle trades~\shortcite{chu06,babaioff05,gonen07}, 
and to dynamic matching problems without
prices, such as an online variation of the classic ``marriage''
problem~\shortcite{gusfield:marriage}.

\vspace{-0.1cm}
\section*{Acknowledgments}

An earlier version of this paper appeared in the Proceedings 
of the 21st Conference on Uncertainty in Artificial Intelligence, 2005.
This paper further characterizes necessary conditions for truthful online trade;
truthfully matches offers using a generalized framework based upon an arbitrary
truthful static pricing rule; and compares the efficiency of our truthful 
framework to that achieved in non-truthful markets populated with strategic trading agents and with that of worst-case optimal double auctions.

Parkes is supported in part by NSF grant IIS-0238147 and an Alfred P. Sloan 
Fellowship and Bredin would like to thank the Harvard  School of
Engineering and Applied Sciences for hosting his sabbatical during which 
much this work was completed. Thanks also to the 
three anonymous reviewers, who provided
excellent suggestions in improving an earlier draft of this paper.

\section*{Appendix: Proofs}

\noindent {\bf Lemma 1} {\em Procedure {\sc Match} defines a valid strong no-trade construction.}
\vspace{0.3cm}

\begin{proof}~In all cases, $\mathrm{SNT}^t\subseteq \mathrm{NT}^t$. The set
$\mathrm{NT}^t$ is correctly constructed: equal to all remaining 
bids $b^t$ when $(j=0)$ in Case I, all remaining bids $s^t$ when
$(i=0)$ in Case II, and all remaining bids and asks otherwise. In each
case, no bid (or ask) in $\mathrm{NT}^t$ could have traded at any
price because there was no available bid or ask on the opposite of the
market given its order. 

In verifying strong no-trade (SNT) conditions (a) and (b), we proceed by
case analysis. 

\noindent
{\em Case I.} $(i\neq 0)$ and $(j=0)$. $\mathrm{NT}^t:=b^t$. 
\begin{enumerate}
\item[(I-1)] $\forall
k\in s^t \cdot (\hat{d}_k=t)$ and $\mathrm{SNT}^t:=b^t$. For SNT-a,
consider $l\in \mathrm{NT}^t$ with $\hat{d}_l>t$. If $l$ deviates and
$i$ changes but we remain in Case I then $\mathrm{NT}^t$ is unchanged
and still contains $l$. If $l$ deviates and $i\rightarrow 0$ then,
we go to Case III and $\mathrm{SNT}^t:=b^t\cup s^t$ and still contains
$l$. For SNT-b, consider $l\in\mathrm{SNT}^t$ that deviates with
$\hat{d}_l>t$. Again, either we remain in this case and
$\mathrm{SNT}^t$ is unchanged or $i\rightarrow 0$ and we go to Case
III. But now $\mathrm{SNT}^t$ still contains all $b^t$ and is
therefore unchanged for all agents with $\hat{d}_k>t$.
\item[(I-2)] Buyer $k\in b^t$ with $\hat{d}_k=t$ and $b_k\geq p^t$ and
$\mathrm{SNT}^t:=b^t$. For SNT-a, consider $l\in \mathrm{NT}^t$ with
$\hat{d}_l>t$. We remain in this case for any deviation by buyer $l$
because buyer $k$ will ensure $i\neq 0$, and so $\mathrm{SNT}^t$
remains unchanged and still contains $l$. For SNT-b, if
$l\in\mathrm{SNT}^t$ with $\hat{d}_l>t$ deviates we again remain in
this case and $\mathrm{SNT}^t$ is unchanged.
\item[(I-3)] Some seller with $\hat{d}_k>t$ and no buyer with
$\hat{d}_{k'}=t$ willing to accept the
price. $\mathrm{SNT}^t:=b^t\setminus\mathrm{checked}_B$. For SNT-a,
consider $l\in\mathrm{NT}^t$ with $\hat{d}_l>t$. First, suppose
$l\in\mathrm{checked}_B$ and $i\neq l$. If $l$ deviates but still has
$\hat{d}_l>t,$ then even if $i:=l$ then we remain in this case and $l$
does not enter $\mathrm{SNT}^t$. Second, suppose
$l\in\mathrm{checked}_B$ and $(i=l)$. If $l$ deviates but still has
$\hat{d}_l>t,$ then even if $(i=0)$ and $(j=0)$, we go to Case III 
and $\mathrm{SNT}^t=\emptyset$ and $l$ does not enter
$\mathrm{SNT}^t$. Third, suppose $l\notin\mathrm{checked}_B$ and
$\hat{d}_l>t$. Deviating while $\hat{d}_l>t$ has no effect and we
remain in this case and $l$ remains in $\mathrm{SNT}^t$. For SNT-b,
consider $l\in\mathrm{SNT}^t$ with $\hat{d}_l>t$, i.e. with $l\notin
\mathrm{checked}_B$. If $l$ deviates but $\hat{d}_l>t,$ then this has
no effect and we remain in this case and $\mathrm{SNT}^t$ remains
unchanged. 
\end{enumerate}

\noindent
{\em Case II}. $(j\neq 0)$ and $(i=0)$. $\mathrm{NT}^t:=s^t$. 
Symmetric with Case I.

\noindent
{\em Case III}. $(i=0)$ and $(j=0)$. $\mathrm{NT}^t:=b^t\cup s^t$. 
\begin{enumerate}
\item[(III-1)] $\forall k\in b^t \cdot (\hat{d}_k=t)$ but $\exists
k'\in s^t \cdot (\hat{d}_{k'}>t)$ and $\mathrm{SNT}^t:=b^t\cup
s^t$. For SNT-a, consider $l\in\mathrm{NT}^t$ with $\hat{d}_l>t$. This
must be an ask. If $l$ deviates but we remain in this case, then $l$
remains in $\mathrm{SNT}^t$. If $j:=l,$ then we go to Case II and
$\mathrm{SNT}^t:=s^t$ and $l$ remains in $\mathrm{SNT}^t$. For SNT-b,
consider $l\in\mathrm{SNT}^t$ with $\hat{d}_l>t$, which must be an
ask. If $l$ deviates but we remain in this case, $\mathrm{SNT}^t$
is unchanged. If $l$ deviates and $j:=l,$ then we go to Case II,
$\mathrm{SNT}^t:=s^t,$ and buyers $b^t$ are removed from
$\mathrm{SNT}^t$. But this is OK because all buyers depart in period
$t$ anyway.
\item[(III-2)] $\forall k\in s^t cdot (\hat{d}_k=t)$ but $\exists
k'\in b^t \cdot (\hat{d}_{k'}>t)$ and $\mathrm{SNT}^t:=b^t\cup
s^t$. Symmetric to Case III-1.
\item[(III-3)] $\forall k\in b^t \cdot (\hat{d}_k=t)$ and $\forall
k\in s^t cdot (\hat{d}_k=t)$. $\mathrm{SNT}^t:=b^t\cup s^t$. SNT-a and
SNT-b are trivially met because no bids or asks have departure past
the current period.
\item[(III-4)] $\exists
k\in b^t \cdot (\hat{d}_{k}>t)$ and $\exists
k'\in s^t \cdot (\hat{d}_{k'}>t)$ and $\mathrm{SNT}^t:=\emptyset$. For
SNT-a, consider $l\in\mathrm{NT}^t$ with $\hat{d}_l>t$. Assume that
$l$ is a bid. If $l$ deviates and $\hat{d}_l>t$ and $i=0$ then we
remain in this case and $l$ is not in $\mathrm{SNT}^t$. If $l$ deviates
and $\hat{d}_l>t$ but $i:=l,$ then we go to Case I and we are
necessarily in Sub-case (I-a) because $\hat{d}_l>t$ and there can be
no other bid willing to accept the price (else $i\neq 0$ in the first
place). Thus, we would have
$\mathrm{SNT}^t:=b^t\setminus\mathrm{checked}_B$ and $l$ would not be
in $\mathrm{SNT}^t$. For SNT-b, this is trivially satisfied because
there are no agents $l\in \mathrm{SNT}^t$.
\end{enumerate}
\end{proof}

\noindent {\bf Lemma 5} {\em The set of active agents (other than $i$) in period $t$ in \chain\
is independent of $i$'s report while agent $i$ remains active, and
would be unchanged if $i$'s arrival is later than period $t$.}
\vspace{0.3cm}

\begin{proof}~Fix some arrival period $\hat{a}_i$. Show for any $\hat{a}_i\geq a_i$,
the set of active agents in period $t\geq \hat{a}_i$ while $i$ is
active is the same as $A^t$ without agent $i$'s arrival until some
$a'_i>t$. Proceed by induction on the number of periods that $t$ is after
$\hat{a}_i$. For period $t=\hat{a}_i$ this is trivial. Now consider
some period $\hat{a}_i+r$, for some $r\geq 1$ and assume the inductive
hypothesis for $\hat{a}_i+r-1$. Since $i$ is still active then,
$i\in\mathrm{SNT}^{t'}$ for $t'=\hat{a}_i+r-1$, and therefore the
other agents in $\mathrm{SNT}^{t'}$ that survive into this period are
independent of agent $i$'s report by strong no-trade condition
(b). This completes the proof.
\end{proof}

\noindent {\bf Lemma 6} {\em The price constructed from admission price $\check{q}$ and
post-arrival price $\check{p}$ is value-independent and monotonic-increasing when
the matching rule in \chain\ is well-defined, the strong no-trade
construction is valid, and agent patience is bounded by $K$.}
\vspace{0.3cm}

\begin{proof}~First fix $\hat{a}_i,\hat{d}_i$ and $\theta_{-i}$. To show
value-independence (B1), first note that $\check{q}$ is
value-independent, since whether or not $i\in\mathrm{SNT}^t$ in some
pre-arrival period $t$ is value-independent by strong no-trade
condition (a) and price $z_i(H^t,A^t\setminus i,\omega)$ in such a
period is agent-independent by definition. Term $\check{p}$ is also
value-independent: the decision period
$t*$ to agent $i$, if any, is independent of $\hat{w}_i$ since the
other agents that remain active are independent of agent $i$ while it is
active by Lemma~\ref{lem:new}, and whether or not $i\in
\mathrm{SNT}^t$ is value-independent by strong no-trade (a);
and the price in $t*$ is value-independent when the set of other active
agents are value-independent.

Now fix $\theta_{-i}$ and show the price is monotonically-increasing
in a tighter arrival-departure interval (B2). First note that
$\check{q}$ is monotonic-increasing in
$[\hat{a}_i,\hat{d}_i]\subset[a_i,d_i]$ because an earlier $\hat{d}_i$
and later $\hat{a}_i$ increases the domain
$t\in[\hat{d}_i-K,\hat{a}_i-1]$ on which $\check{q}$ is defined. Fix
some $\hat{a}_i\geq a_i$. Argue the price
increases with earlier $d'_i\leq \hat{d}_i$, for any
$\hat{d}_i>\hat{a}_i$. To see this, note that either $\hat{d}_i<t*$
and so $p_i(\hat{a}_i,d'_i,\theta_{-i},\omega)=\infty$ for all
$d'_i\leq \hat{d}_i$, or $\hat{d}_i\geq t*$ and the price is constant
until $\hat{d}_i<t*$ at which point it becomes $\infty$. Fix some
$\hat{d}_i\geq a_i$. Argue the price increases with later $a'_i\geq
\hat{a}_i$, where $\hat{a}_i\geq \hat{d}_i-K$. First, while $a'_i\leq
t*,$ then $\check{p}$ is unchanged by Lemma~\ref{lem:new}. The
interesting case is when $a'_i>t*$, especially when
$\check{q}(\hat{a}_i,\hat{d}_i,\theta_{-i},\omega)<\check{p}(\hat{a}_i,\hat{d}_i,\theta_{-i},\omega)$.
By reporting a later arrival, the agent can delay its decision period
and perhaps hope to achieve a lower price. But, note that in this case
$t*\in[\hat{d}_i-K,a'_i-1]$ since $\hat{d}_i-K\leq \hat{a}_i$ and
$t*\in[\hat{a}_i,a'_i-1]$ and so
$\check{q}(a'_i,\hat{d}_i,\theta_{-i},\omega)\geq
\check{p}(H^{t*},A^{t*}\setminus i,\omega)$ because $\check{q}$ now
includes the price in period $t*$ since $i\notin \mathrm{SNT}^{t*}$ in
that pre-arrival period by Lemma~\ref{lem:new}. Overall, we see that
although $\check{p}$ may decrease, $\max(\check{q},\check{p})$ cannot
decrease.
\end{proof}

\noindent {\bf Lemma 7} {\em A strongly truthful, 
canonical dynamic DA must define price
$p_i(\hat{a}_i,\hat{d}_i,\theta_{-i},\omega)\geq
z_i(H^{t*},A^{t*}\setminus i,\omega)$ where $t*$ is the decision
period for bid $i$ (if it exists). Moreover, the bid must be
priced-out in period $t*$ if it is not matched.}
\vspace{0.3cm}

\begin{proof}~(a) First, suppose $z_i(H^{t*},A^{t*}\setminus i,\omega)>\hat{w}_i$ but
bid $i$ is not priced-out and instead survives as an active bid into
the next period. But with $i\notin\mathrm{SNT}^{t*}$, the set
of active bids in period $t*+1$ need not be independent of agent $i$'s
bid and the price $z_i(H^{t*+1},A^{t*+1}\setminus i,\omega)$ need
not be agent-independent. Yet, canonical rule (iii) requires that this
price be used to determine whether or not the agent matches, and so
the dynamic DA need not be truthful. (b) Now assume for contradiction that
$p_i(\hat{a}_i,\hat{d}_i,\theta_{-i},\omega)<z_i(H^{t*},A^{t*}\setminus
i,\omega)$. First, if $z_i(H^{t*},A^{t*}\setminus i,\omega)<\infty,$ then
an agent with value
$p_i(\hat{a}_i,\hat{d}_i,\theta_{-i},\omega)<w_i<z_i(H^{t*},A^{t*}\setminus
i,\omega)$ will report $\hat{w}_i=z_i(H^{t*},A^{t*}\setminus
i,\omega)+\epsilon$ and trade now for a final payment less than
its true value (whereas it would be priced-out if it reported its true
value). If $z_i(H^{t*},A^{t*}\setminus i,\omega)=\infty$, then
$p_i(\hat{a}_i,\hat{d}_i,\theta_{-i},\omega)<z_i(H^{t*},A^{t*}\setminus
i,\omega)$ implies that some bids will survive this period even though
they are priced-out by the matching rule and not in the
strong no-trade set. This compromises the truthfulness of the dynamic
DA, as discussed in part (a).
\end{proof}

\noindent {\bf Lemma 8} {\em A strongly truthful, canonical and individual-rational  dynamic 
DA must define price $p_i(\hat{a}_i,\hat{d}_i,\theta_{-i},\omega)\geq
\check{q}(\hat{a}_i,\hat{d}_i,\theta_{-i},\omega)$, and a bid with
$\hat{w}_i<\check{q}(\hat{a}_i,\hat{d}_i,\theta_{-i},\omega)$ must be
priced-out upon admission.}
\vspace{0.3cm}

\begin{proof}~Suppose $\hat{d}_i<\hat{a}_i+K$ so that $[\hat{d}_i-K,\hat{a}_i-1]$ is
non-empty. For $\hat{d}_i=\hat{a}_i+K-1$, when $t=\hat{d}_i-K$ is a
decision period (and $i\notin\mathrm{SNT}^t$), we have
\begin{align}
\label{eq:dp12}
p_i(\hat{a}_i,\hat{d}_i,\theta_{-i},\omega)\geq
p_i(\hat{d}_i-K,\hat{d}_i,\theta_{-i},\omega)&\geq 
z_i(H^t,A^t\setminus i,\omega),
\end{align}
where the first inequality is by monotonicity (B2) and the second
follows from Lemma~\ref{lem:dp5} since $\hat{d}_i-K$ is a decision
period, and would remain one with report
$\theta'_i=(\hat{d}_i-K,\hat{d}_i,w'_i)$ by Lemma~\ref{lem:new}. This
establishes $p_i(\hat{a}_i,\hat{d}_i,\theta_{-i},\omega)\geq
\check{q}(\hat{a}_i,\hat{d}_i,\theta_{-i},\omega)$ for
$\hat{d}_i=\hat{a}_i+K-1$. When $\hat{d}_i=\hat{a}_i+K-2$, then we
need Eq.~(\ref{eq:dp12}), and also when $t=\hat{d}_i-K+1$ is a
decision period (and $i\notin \mathrm{SNT}^t$) we have,
\begin{align}
\label{eq:dp13}
p_i(\hat{a}_i,\hat{d}_i,\theta_{-i},\omega)\geq
p_i(\hat{d}_i-K+1,\hat{d}_i,\theta_{-i},\omega)&\geq 
z_i(H^t,A^t\setminus i,\omega),
\end{align}
by the same reasoning as above. This generalizes to $d_i=a_i+K-r$ for
$r\in\{2,\ldots,K\}$ to establish  $p_i(\hat{a}_i,\hat{d}_i,\theta_{-i},\omega)\geq
\check{q}(\hat{a}_i,\hat{d}_i,\theta_{-i},\omega)$ for the general
case. To see the bid must be priced-out when
$\hat{w}_i<\check{q}(\hat{a}_i,\hat{d}_i,\theta_{-i},\omega),$ note
that if it were to remain active it could match in the matching rule
and by canonical (iii) need to trade, and thus fail
individual-rationality since the payment collected would be more than
the value.
\end{proof}

\bibliographystyle{theapa}
\bibliography{uai,master}

\end{document}